\documentclass[%
 reprint,
 superscriptaddress,
%groupedaddress,
%unsortedaddress,
%runinaddress,
%frontmatterverbose,
%preprint,
%preprintnumbers,
%nofootinbib,
%nobibnotes,
%bibnotes,
 amsmath,amssymb,
 aps,
 prx
%pra,
%prb,
%rmp,
%prstab,
%prstper,
%floatfix,
%prl
]{revtex4-2}
\usepackage[T1]{fontenc}
\usepackage{graphicx}% Include figure files
\usepackage{dcolumn}% Align table columns on decimal point
\usepackage{bm}% bold math
%\usepackage[mathlines]{lineno}% Enable numbering of text and display math
%\linenumbers\relax % Commence numbering lines
\usepackage[colorlinks=true, allcolors=blue]{hyperref}
\usepackage{braket}
\usepackage[table]{xcolor}
\usepackage{xfrac}
\usepackage{comment}
\usepackage[normalem]{ulem}
\usepackage{mathtools}
\usepackage{layouts}

\usepackage{tabularx}   % Tabulars with adjustable-width columns
\usepackage{booktabs}   % Publication quality tables
\usepackage{longtable}  % Allow tables to flow over page boundaries
\usepackage{multirow}  

\usepackage{quantikz}
\usepackage{tabularx}
\usepackage{multirow}
\usepackage{esint}
\usepackage{enumitem}

\usepackage{amsthm}% for theorems and stuff
\usepackage[capitalise]{cleveref}% load this as late as possible
% Macros

% \newcommand{\bmt}[1]{{\color{blue} \bf BMT: #1}}

\newcommand{\vect}[1]{\boldsymbol{\mathbf{#1}}}
\newcommand{\approptoinn}[2]{\mathrel{\vcenter{
  \offinterlineskip\halign{\hfil$##$\cr
    #1\propto\cr\noalign{\kern2pt}#1\sim\cr\noalign{\kern-2pt}}}}}

\newcommand{\dg}{\dagger}
\newcommand{\bo}[1]{\mathbf{#1}}
\providecommand{\boldsymbol}[1]{\boldsymbol{#1}}

\bibliographystyle{apsrev4-2}
\begin{document}
\title{Fault-Tolerant Non-Clifford GKP Gates using Polynomial Phase Gates and On-Demand Noise Biasing}
\author{Minh T. P. Nguyen}
\affiliation{QuTech, Delft University of Technology, Lorentzweg 1, 2628 CJ Delft, The Netherlands}
\affiliation{Kavli Institute of Nanoscience, Delft University of Technology, Lorentzweg 1, 2628 CJ Delft, The Netherlands}
\author{Mackenzie H. Shaw}
\affiliation{QuTech, Delft University of Technology, Lorentzweg 1, 2628 CJ Delft, The Netherlands}
\affiliation{Delft Institute of Applied Mathematics, Delft University of Technology, Mekelweg 4, 2628 CD Delft, The Netherlands}
\newtheorem{theorem}{Theorem}[section]
\newtheorem{corollary}[theorem]{Corollary}
\newtheorem{lemma}[theorem]{Lemma}
\newtheorem{proposition}[theorem]{Proposition}
\newtheorem{criterion}{Criterion}
\newtheorem{appendixcriterion}{Criterion}[section]
\theoremstyle{definition}
\newtheorem{assumption}{Assumption}
\theoremstyle{remark}
\newtheorem*{remark}{Remark}
\theoremstyle{definition}
\newtheorem{definition}[theorem]{Definition}
%\numberwithin{equation}{section} %numbers equations by section, i.e. (sec.eqno)
\Crefname{criterion}{Criterion}{Criteria}
\crefname{criterion}{Crit.}{Crit.}

\begin{abstract}
The Gottesman-Kitaev-Preskill (GKP) error correcting code uses a bosonic mode to encode a logical qubit, and has the attractive property that its logical Clifford gates can be implemented using Gaussian unitary gates. In contrast, a direct unitary implementation of the ${T}$ gate using the cubic phase gate has been shown to have logical error floor unless the GKP codestate has a biased noise profile~\cite{Hastrup2021}. In this work, we propose a method for on-demand noise biasing based on a standard GKP error correction circuit. This on-demand biasing circuit can be used to bias the GKP codestate before a $T$ gate and return it to a non-biased state afterwards. With the on-demand biasing circuit, we prove that the logical error rate of the $T$ gate can be made arbitrarily small as the quality of the GKP codestates increases. We complement our proof with a numerical investigation of the cubic phase gate subject to a phenomenological noise model, showing that the ${T}$ gate can achieve average gate fidelities above $99\%$ with 12 dB of GKP squeezing without the use of postselection. Moreover, we develop a formalism for finding optimal unitary representations of logical diagonal gates in higher levels of the Clifford hierarchy that is based on a framework of ``polynomial phase stabilizers'' whose exponents are polynomial functions of one of the quadrature operators. This formalism naturally extends to multi-qubit logical gates and even to number-phase bosonic codes, providing a powerful algebraic tool for analyzing non-Clifford gates in bosonic quantum codes.

\end{abstract}

\maketitle

\section{Introduction}

Quantum error correction (QEC) is widely-believed to be necessary for large-scale, fault-tolerant quantum computing. In discrete variable (DV) QEC, logical qubits are encoded into many physical qubits \cite{Terhal_2015, Breuckmann2021LPDC}. However, the required resource overhead remains challenging for near-term hardware. On the other hand, continuous-variable (CV) QEC offers a more hardware-efficient alternative by encoding logical information into a single infinite-dimensional bosonic Hilbert space rather than multiple two-level systems \cite{albert_2025, Terhal_2020}. The logical information stored in such a code may already be low enough to perform an algorithm, or, more likely, the error rate would be low enough to concatenate with a code with lower overhead than would be possible using DV qubits.

Among the CV codes, the Gottesman–Kitaev–Preskill (GKP) code \cite{GKP2001} emerges as one of the leading candidates since all Clifford operations can be implemented unitarily using Gaussian operations \cite{GKP2001}. Moreover, when concatenated with a DV code, the decoded information given by the inner GKP code can be used to improve the decoder of the outer DV code \cite{Xu2023, Zhang2023, Vuillot2019nogo, Noh2022}, reducing the required hardware overhead. Approximate GKP states have been realized experimentally in multiple platforms, including superconducting circuits \cite{Sivak2023, brock2025quantum,nguyen2025superconductinggridstatesqubit}, trapped ions \cite{deNeeve2022, Matsos2024}, and integrated photonics \cite{Larsen2025}.

Nevertheless, achieving universal quantum computation requires the inclusion of a non-Clifford gate, typically the $T =\mathrm{diag}(1,e^{i\pi/4})$ gate. In the original GKP proposal \cite{GKP2001}, it was shown that the logical $T$ gate can be implemented unitarily using a \textit{cubic phase gate}, which is the exponential of a cubic polynomial of the CV position operator. However, Hastrup \textit{et al.}~\cite{Hastrup2021} recently showed that-in the absence of noise biasing-the fidelity of the cubic phase gate is upper bounded when applied to approximate GKP states, even in the limit of infinite squeezing and even when the states are decoded using ideal QEC. To circumvent this bound, one needs to perform \textit{on-demand} noise biasing, since Clifford gates (such as the Hadamard gate) can alter the noise bias during the algorithm. 

The current leading approach to perform a ${T}$ gate is magic-state preparation and injection, where the required magic states are obtained by measuring the GKP stabilizers of a vacuum state \cite{Baragiola2019}. However, achieving high-fidelity magic states with the vacuum state method requires post-selection and non-linear forward feedback \cite{Konno2021}, significantly increasing the time overhead. Another possible approach is to use the finite-energy gate construction designed specifically for approximate GKP states~\cite{Royer2020finiteenergygate}. Nevertheless, these finite-energy versions are non-unitary, require multiple ancilla qubits, and involve longer gate times. Finally, depending on the physical implementation of the GKP qubits, self-correcting $T$ gates can also be engineered. As recently demonstrated in Ref.~\cite{Nathan2025}, a driven-dissipative circuit can switch between two operational modes that perform either a self-correcting Clifford gate or a self-correcting ${T}$ gate.

\begin{figure}
    \centering
       \resizebox{0.45\textwidth}{!}{ \begin{quantikz}
            \lstick{$\cdots$}&\gate{\substack{q\text{-Steane}\\\text{QEC}}}\gategroup[1,steps=1,style={dashed,rounded corners,fill=blue!20,inner sep =0 pt},background,label style={label position=above,anchor=south,yshift=-0.2cm}]{$\substack{\text{On-demand}\\\text{biasing}\\\text{(\cref{section: mid-circuit noise biasing protocol})}}$} & \gate{T} \gategroup[1,steps=1,style={dashed,rounded corners,fill=green!20,inner sep =0 pt},background,label style={label position=below,anchor=south,yshift=-1.1cm}]{$\substack{\text{Optimized gate}\\\text{representation}\\\text{(\cref{section: optimized logical gate representations})}}$} &\gate{\substack{\text{Passive}\\\text{Knill}\\\text{QEC}}}&\rstick{$\cdots$}
        \end{quantikz}
        }
    \caption{Fault-tolerant circuit for implementing the logical $T$ gate using the cubic phase gate. The $q$-Steane QEC circuit~\cref{eq: q-Steane circuit} is applied before the $T$ gate to performs on-demand noise biasing of the GKP qubit. After the gate, the passive Knill QEC circuit~\cref{eq: passive Knill} resets the noise profile for subsequent computation.The implemented cubic phase gate can be optimized using our polynomial phase stabilizer framework to obtain a representation that minimizes error propagation.}
    \label{fig: T-gate with ondemand biasing}
\end{figure}
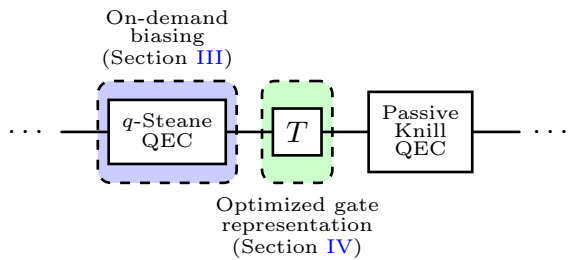

In this work, we re-examine the possibility of performing the logical $T$ gate in the GKP code unitarily and show that it is both fault-tolerant and outperforms the vacuum state method in a practical regime of interest. Key to our work is the observation that on-demand noise biasing can be performed using a standard GKP syndrome extraction circuit. This on-demand biasing circuit allows us to convert between a non-biased GKP code for Clifford gates and a biased GKP code for $T$ gates with the same overhead as is required for standard GKP QEC, as illustrated in Fig.~\ref{fig: T-gate with ondemand biasing}. With this noise biasing, we prove that the $T$ gate implemented is fault-tolerant, meaning that the logical gate fidelity goes to one as the quality of the GKP codestates state improves. We compare the unitary $T$ gate implementations with the vacuum state injection method numerically under a phenomenological noise model, and find that for the vacuum state method to achieve the same performance as the $T$ gate a post-selection probability below $20\%$ is necessary for GKP states with $\Delta < 0.25$ (which corresponds to 12 dB of squeezing).

Alongside our analysis of the cubic phase gate, we also develop an analytical framework of \textit{polynomial} phase gates that can implement gates in higher levels of the Clifford hierarchy. Central to this framework is the introduction of the \textit{polynomial phase stabilizer}; that is, any operator that acts trivially on the GKP codespace but whose exponent is a polynomial function of one of the quadrature operators. Using this, we develop a method to minimize the coefficients of the polynomial required to implement a given logical gate, enabling the systematic identification of optimal gate representations. For example, we derive a representation of the logical $T$ gate which first appeared in Ref.~\cite{Konno2021}, which we show to be optimal and to significantly outperform the original construction in Ref.~\cite{GKP2001}. We also numerically investigated optimized polynomial phase gates for the $T^{1/2}$, $T^{1/4}$ and $T^{1/8}$ gates, and found that these higher-order gates can achieve higher average gate fidelities than the $T$ gate in some regimes. Moreover, the framework applies to any gate whose exponent is a polynomial of one or more commuting position, momentum, or number operators, and therefore naturally generalizes to multi-qubit GKP gates as well as other bosonic codes, such as the multi-leg rotationally symmetric code \cite{Grimsmo2020} and the number-phase code \cite{hu2025generalized}.

The manuscript is organized as follows. In Section~\ref{section: preliminary}, we briefly review the square and rectangular GKP codes and introduce the noise models considered in this work. In Section~\ref{section: mid-circuit noise biasing protocol}, we explain why the fidelity of the cubic phase gate is fundamentally limited when applied to non-biased approximate GKP states \cite{Hastrup2021}. We then present the on-demand noise biasing circuit based on the Steane QEC circuit \cite{GKP2001}, which is applied immediately before the $T$ gate as shown in \cref{fig: T-gate with ondemand biasing}.
In Section~\ref{section: optimized logical gate representations}, we introduce the concept of polynomial phase stabilizers and use it to classify all representations of logical diagonal non-Clifford gates in the GKP code and identify the optimal representations. In Section~\ref{sec: error propagation and fault-tolerant theorem}, we analyze the propagation of errors when polynomial phase gates are applied to approximate GKP states, and prove our fault-tolerance theorem. Finally, in Section~\ref{sec: numerics}, we numerically compute the average gate fidelity of the polynomial phase gates using on-demand noise biasing, and compare their performance against the vacuum state method \cite{Baragiola2019}, before providing concluding remarks in \cref{sec: discussion}. While there remain significant experimental challenges in implementing polynomial phase gates, our work provides a proof-of-principle that using polynomial phase gates can reduce the required overhead of implementing non-Clifford logical gates in the GKP code.

\section{Preliminaries}
\label{section: preliminary}
In this section, we introduce the notation and review the states, gates and QEC circuits for the single-mode square/rectangular GKP code.
%We also present the two noise channels we use throughout the work the (biased) envelope operator and the random Gaussian displacement channel, and the twirling approximation that we will use to simplify the noise models in our analysis.

For a single CV mode we use both the ladder operators ${a},{a}^{\dag}$ and the position ${q}$ and momentum ${p}$ operators that obey the canonical commutation relations $[{a},{a}^{\dagger}]=1$ and $[{q},{p}]=i$. We denote the position $x$-eigenstates as $\ket{x}_{q}$. We define the single-mode displacement operator ${W}(\bo v)$ where $\bo v = (v_q,v_p)\in\mathbb{R}^{2}$ as 
\begin{equation}
    {W}(\bo v) ={W}(v_{q},v_{p})= \exp\Big[ i \sqrt{2\pi}( v_p {q} - v_q {p} )\Big].  
\end{equation}
Note that our definition differs from the conventional one (e.g. in Ref.~\cite{Sakurai:1167961}) by a factor of $\sqrt{\pi}$.
The displacement operators form a group under multiplication, with the commutation relation
\begin{equation}
    \label{eq: displacement commutation relation}
    {W}(\bo u) {W}(\bo v) = e^{-i 2\pi(u_q v_p - u_p v_q)} {W}(\bo v) {W}(\bo u),
\end{equation}
and the composition rule 
\begin{equation}
    \label{eq: displacement composition rule}
     {W}(\bo u) {W}(\bo v) =  e^{-i \pi(u_q v_p - u_p v_q)} {W}(\bo u + \bo v). 
\end{equation} 
Displacement operators act on the quadrature operators via 
\begin{subequations}
    \begin{align}
        {W}(\bo u){q} {W}(\bo u)^{\dagger}= {q} + \sqrt{2\pi} v_q, \\
        {W}(\bo u){p} {W}(\bo u)^{\dagger}= {p} + \sqrt{2\pi} v_p.
    \end{align}
\end{subequations}

In this work, we will make use of both square and rectangular single-mode GKP codes. Similar to DV stabilizer codes, the GKP code can be defined in terms of its stabilizers \cite{GKP2001}
\begin{equation}\label{eq: rectangular stabilizers}
    {S}_p^{(\lambda)} = {W} (  \sqrt{2 \lambda},0 ), ~  {S}_q^{(\lambda)}  = {W} \Big(0,\sqrt{\frac{2}{\lambda}} \Big),
\end{equation}
where $\lambda>0$. When $\lambda=1$, this corresponds to the standard square GKP code, in which case we drop the $\lambda$ label from the stabilizers. In contrast to the square GKP code, the rectangular GKP code protects one quadrature at the expense of the other. As a result, it can offer improved performance under biased noise models, where errors are more likely to occur in one quadrature than the other. We will make use of this property later, where we will show that the logical ${T}$ gate causes an effectively asymmetric noise model to arise on the GKP code.

The logical Pauli operators are the displacement operators that commute with the stabilizers, given by
\begin{subequations}
    \begin{align}
            X^{(\lambda)} & = \sqrt{{S}_p^{(\lambda)}} = {W}\Big( \sqrt{\lambda/2},0 \Big), \\
        Z^{(\lambda)} & = \sqrt{{S}_q^{(\lambda)}} = {W}\Big( 0, 1/{\sqrt{2\lambda}} \Big). 
    \end{align}
\end{subequations}
The logical computational states $\ket{\bar 0}$ and $ \ket{\bar 1}$ are supported on even and odd integer multiples of $\sqrt{\lambda\pi}$ in the position basis:
\begin{equation}
    \label{eq: rectangular codewords}
    \ket{\bar 0}^{\lambda}  = \sum_{n \in \mathbb{Z} } \ket{2n \sqrt{\lambda\pi}}_q, ~ \ket{\bar 1}^{\lambda}  = \sum_{n \in \mathbb{Z}  } \ket{(2n+1) \sqrt{\lambda\pi}}_q.
\end{equation}

Logical Clifford gates in the GKP code are given by Gaussian unitary operators (i.e.~operators with an exponent that is a quadratic function of ${q}$ and ${p}$), and can be implemented by \cite{GKP2001}
\begin{subequations}
    \label{eq: GKP Clifford gates}
    \begin{align}
        {H}^{(\lambda)} & = \exp\Big(i\frac{\pi}{4}({q}^2/\lambda + \lambda{p}^2) \Big),\label{eq: logical H}\\
        {S}^{(\lambda)} & = \exp\Big(\frac{i}{2\lambda}{q}^2  \Big),\label{eq: logical S}\\
        {\rm CX}^{(\lambda)} & = \exp(-i {q}_1 {p}_2),\label{eq: logical CX}
    \end{align}
\end{subequations}
where we use subscripts to label the two modes participating in the ${\rm CX}^{(\lambda)}$ gate.
Throughout this work we will also make frequent use of two other Gaussian unitary operators that do not represent logical Clifford gates: the squeezing and beam-splitter operators defined by
\begin{subequations}
    \label{eq: squeeze and bs}
    \begin{align}
    {\text{Sq}}(\alpha) &= \exp\Big(i \frac{\ln(\alpha)}{2}( q  p +  p q)  \Big),\label{eq: squeeze}\\
    {\rm BS}(\theta) &= \exp\Big(i \theta({p}_1 {q}_2 - {q}_1 {p}_2 ) \Big).\label{eq: bs}
    \end{align}
\end{subequations}
The squeezing operator is named because it ``squeezes'' the quadrature operators
\begin{align}
    {\text{Sq}}(\alpha){q}{\text{Sq}}(\alpha)^{\dag}&=\alpha{q},&{\text{Sq}}(\alpha){p}{\text{Sq}}(\alpha)^{\dag}&={p}/\alpha.
\end{align}
Moreover, squeezing operators can be used to convert between square and rectangular GKP codes:
\begin{align}\label{eq: rectangular square conversion}
    \ket{\bar \psi}^{\lambda} &= {\text{Sq}}(\sqrt{\lambda})^{\dg} \ket{\bar \psi},&{O}^{(\lambda)}={\text{Sq}}(\sqrt{\lambda})^{\dg}\,{O}\,{\text{Sq}}(\sqrt{\lambda})
\end{align}
for any codestate $\ket{\bar{\psi}}$ and logical operator or stabilizer ${O}$.

To perform error correction on the GKP code, we need to measure the two stabilizers $ {S}_{q},{S}_p $, which is equivalent to measuring the position and momentum operators modulo $\sqrt{\lambda\pi}$ and $\sqrt{\pi/\lambda}$ (respectively). Then, given a measured modular position $q_{\text{m}}$ and momentum $p_{\text{m}}$, the state can then be returned to the GKP codespace by applying the corrective displacement ${W}(q_{\text{m}},p_{\text{m}})^{\dag}$.  
A displacement error ${W}(\bo v)$ acting on an ideal GKP state can therefore be detected and corrected if the displacement satisfies \cite{GKP2001,shaw2022}
\begin{align}\label{eq: correctable displacement}
    -\frac{\sqrt{\lambda}}{2\sqrt{2}}\leq v_{q}&<\frac{\sqrt{\lambda}}{2\sqrt{2}},&-\frac{1}{2\sqrt{2\lambda}}\leq v_{p}&<\frac{1}{2\sqrt{2\lambda}}.
\end{align}

A simple circuit that can be used to measure the stabilizers is the Steane QEC circuit. This consists of two subcircuits (which can be implemented either sequentially or in parallel) that measure the ${S}_{q}$ and ${S}_{p}$ stabilizers respectively. These are given explicitly by
\begin{subequations}
\label{eq: Steane circuit}
\begin{align}
\begin{quantikz}
        \lstick{$\ket{\bar \psi}^{\lambda}$} &  \gate[2]{ e^{-i {q}_1  {p}_2} }    &   \\
        \lstick{$\ket{\bar +}^{\lambda}$} &     &           \meter{{q}}
    \end{quantikz},
\label{eq: q-Steane circuit}\\
\begin{quantikz}
        \lstick{$\ket{\bar \psi}^{\lambda}$} & \gate[2]{ e^{i {p}_1  {q}_2} }    &   \\
        \lstick{$\ket{\bar 0}^{\lambda}$} &             & \meter{{p}}
    \end{quantikz},
\label{eq: p-Steane circuit}
\end{align}
\end{subequations}
where the measurements represent homodyne measurements of the ${q}$ or ${p}$ operators. We refer to \cref{eq: q-Steane circuit} as the $q$-Steane QEC circuit since it measures ${S}_{q}$, and to \cref{eq: p-Steane circuit} as the $p$-Steane QEC circuit. We will show in \cref{section: mid-circuit noise biasing protocol} that the $q$-Steane QEC circuit allows us to perform the on-demand biasing that is required to perform.
By tracking how a displacement operator ${W}(\mathbf{v})$ propagates through the ${\mathrm{CX}}$ gates, we find that an initial position displacement error on the data qubit propagates through the $q$-Steane QEC circuit to the ancilla and is detected by the position homodyne measurement (and likewise for momentum displacement errors in the $p$-Steane QEC circuit).

In our analysis, we also use another type of syndrome extraction circuit which we call the passive Knill QEC circuit, given by~\cite{Walshe2020}
\begin{equation}\label{eq: passive Knill}
    \begin{quantikz}
         &                  &  \gate[2]{\rm BS(\frac{\pi}{4})}   & \meter{{p}}  \\
        \lstick{$\ket{\emptyset}^{\lambda}$} &   \gate[2]{\rm BS(\frac{\pi}{4})}  &                 & \meter{{q}} \\
        \lstick{$\ket{\emptyset}^{\lambda}$} &                  &                 & 
    \end{quantikz}
\end{equation}
Here, the state $\ket{\emptyset}^{\lambda}$ is defined as 
\begin{equation}
    \ket{\emptyset}^{\lambda} = \sum_{n \in \mathbb{Z}} \ket{n \sqrt{2\lambda\pi}}_q.
\end{equation}
The passive Knill QEC circuit is convenient for a number of reasons: it uses only passive (i.e.~energy preserving) Gaussian unitary operators, it is more robust against noise on the ancilla modes than the Steane QEC circuit~\cite{Walshe2020}, and because the output state is guaranteed to be in the GKP codespace with only a logical correction needing to be applied after the circuit.

So far, we have only considered ideal GKP codewords \cref{eq: rectangular codewords}. In practice however, only approximate GKP states can be generated, since the ideal GKP codewords in \cref{eq: rectangular codewords} require infinite energy to realize. We define the approximate codestates as
\begin{equation}
    \ket{\bar{\psi}}_{\Delta}^{\lambda} = e^{-\Delta^2 ({q}^{2}+{p}^{2})/2} \ket{\bar \psi}^{\lambda},
\end{equation}
where we refer to $e^{-\Delta^{2} ({q}^{2}+{p}^{2})/2}=e^{-\Delta^{2}{a}^{\dag}{a}}=\mathrm{Env}_{\Delta}$ as the (non-biased) envelope operator.

In this work we also need to define \textit{biased} approximate GKP codestates, which are obtained by applying a biased envelope operator given by
\begin{subequations}\label{eq: biased envelope operator}
\begin{align}
    \text{Env}_{\Delta_{q},\Delta_{p}}&=e^{-(\Delta_{p}^{2}{q}^{2}+\Delta_{q}^{2}{p}^{2})/2}\\
    &={\text{Sq}}(\sqrt{\lambda})e^{-\Delta^{2}({q}^{2}+{p}^{2})/2}{\text{Sq}}(\sqrt{\lambda})^{\dag},
\end{align}
\end{subequations}
where $\Delta_{q}=\Delta/\sqrt{\lambda}$ and $\Delta_{p}=\sqrt{\lambda} \Delta$. The biased approximate codestate $\ket{\bar{\psi}}_{\Delta_{q},\Delta_{p}}^{\lambda}=\text{Env}_{\Delta_{q},\Delta_{p}}\ket{\bar{\psi}}^{\lambda}$ is the same as the ideal codestate but where in the position basis, each Dirac delta function in~\cref{eq: rectangular codewords} has been replaced by a Gaussian function of width $\sim\Delta_{q}$ and multiplied by a Gaussian envelope of width $\sim 1/\Delta_{p}$ (and vice versa in the momentum basis). This means that the approximate codestates are normalizable, but also that approximate codestates have inherent errors on them due to the width of the Gaussian functions.

Importantly, we can always map a \textit{biased} square GKP qubit to a \textit{non-biased} rectangular GKP qubit using a unitary squeezing operator
\begin{equation}\label{eq: relation between biased square GKP to non-biased rectangular GKP}
    \ket{\bar{\psi}}_{\Delta_{q},\Delta_{p}}=\text{Sq}(\sqrt{\lambda})\ket{\bar{\psi}}_{{\Delta}}^{\lambda},
\end{equation}
where $\lambda = \Delta_p/\Delta_q$ and $\Delta = \sqrt{\Delta_q \Delta_p}$. 
Therefore, in the absence of other noise, the biased square GKP qubit $\ket{\bar{\psi}}_{\Delta_{q},\Delta_{p}}$ is equivalent to the non-biased rectangular GKP qubit $\ket{\bar{\psi}}_{\Delta}^{\lambda}$ in terms of the logical information and error rate. 

Since the GKP code is designed to protect against displacement errors that are small enough as in~\cref{eq: correctable displacement}, a common way to analyze an error operator or channel is to decompose it into displacement operators. Because the displacement operators form a complete orthogonal basis for the operator space, any operator ${O}$ can be decomposed in terms of displacement operators as
\begin{equation}
    {O}   = \int_{} d \bo v~ \chi_{{O}}(\bo v)  {W}(\bo v),
\end{equation}
where the characteristic function $\chi_{{O}}$ of operator ${O}$ is given by
\begin{equation}
    \chi_{{O}}(\bo v) = \text{Tr}[{O}~ {W}^{\dagger}(\bo v) ]. 
\end{equation}
Analogously, any quantum channel $\mathcal{E}$ can be decomposed in terms of displacement operators as 
\begin{equation}\label{eq: channel chi function}
    \mathcal{E}(\rho) = \int d\bo v~ d \bo u~ \chi_{\mathcal{E}}(\bo v,\bo u) ~ {W}(\bo v) \rho {W}(\bo u)^{\dagger},
\end{equation}
where the $\chi$-function $\chi_{\mathcal{E}}(\bo u, \bo v)$ is a CV generalization of the DV $\chi$-matrix, and corresponds to the characteristic function of the superoperator representation of $\mathcal{E}$ \cite{shaw2022}. In our analysis, we frequently employ the twirling approximation, which retains only the diagonal components of \cref{eq: channel chi function}
\begin{equation}\label{eq: twirling definition}
    \text{Twirl}[\mathcal{E}](\rho) = \int d \bo v~ p_{\mathcal{E}}(\bo v) {W}(\bo v)\rho {W}(\bo v)^{\dg},
\end{equation}
for a more detailed discussion see Ref.~\cite{Conrad2021}. The twirling approximation is useful for us because \cref{eq: twirling definition} can be viewed as applying a displacement $W(\vect{v})$ randomly according to the probability density $p_{\mathcal{E}}(\vect{v})=\chi_{\mathcal{E}}(\vect{v},\vect{v})$. This is convenient because we can quickly determine the probability that the random displacement is correctable by integrating $p_{\mathcal{E}}(\vect{v})$ over the set of correctable displacements \cref{eq: correctable displacement}.

When $p_{\mathcal{E}}(\vect{v})$ is a Gaussian distribution, we call the twirled channel a Gaussian random displacement channel
\begin{equation}\label{eq: GRD channel definition}
    \mathcal{G}_{\Sigma}(\rho) = \int d \bo v~\mathcal{N}(\Sigma,\bo v) ~ {W}(\bo v)\rho{W}(\bo v)^{\dagger}, 
\end{equation}
where we use a slightly non-standard definition of the 2D Gaussian distribution $\mathcal{N}(\Sigma,\bo v)$ given by
\begin{equation}
    \mathcal{N}(\Sigma,\bo v) = \frac{ \exp\big[-\pi\bo v^{T}~ \Sigma^{-1}\bo v\big]  }{\sqrt{|\det \Sigma|}}. 
\end{equation}
Here, $\Sigma$ is the covariance matrix of the distribution (rescaled by a factor of $\sqrt{2\pi}$).

It was shown in Ref.~\cite{shaw2022} that the biased envelope operator $\mathrm{Env}_{\Delta_{q},\Delta_{p}}$ has a characteristic function given by a Gaussian function
\begin{subequations}\label{eq: biased envelope characteristic function}
    \begin{align}
        \chi_{\mathrm{Env}}(\vect{v}) & \propto \mathcal{N}\bigg[2\,\mathrm{tanh}\Big(\frac{\Delta_{q}\Delta_{p}}{2}\Big)\mathrm{Diag}\Big(\frac{\Delta_{q}}{\Delta_{p}},\frac{\Delta_{p}}{\Delta_{q}}\Big),\vect{v}\bigg]\\
        % =\int d  \bo v   \frac{ e^{ -\frac{\pi}{2} \coth(\frac{\tilde \Delta^2}{2}) (\lambda v_q^2 + v_p^2/\lambda)} }{1- e^{-\tilde{\Delta}^2}}   \hat W(\bo v)  \\
        & \approx \mathcal{N}\Big(\text{Diag}(\Delta_q^2,\Delta_p^2),\bo v\Big),\label{eq: envelope characteristic function approx}
    \end{align}
\end{subequations}
where \cref{eq: envelope characteristic function approx} is obtained by taking a first-order approximation of the $\mathrm{tanh}$ function. Under the twirling approximation it therefore becomes a Gaussian random displacement channel $\mathcal{G}_{\Sigma}$ with $\Sigma\approx \mathrm{Diag}(\Delta_{q}^{2},\Delta_{p}^{2})/2$. 

We remark that in physical devices, photon loss and dephasing channels are also significant sources of error. However, we do not include these noise channels in our analysis. This choice is made to clearly illustrate our core ideas using a realistic yet mathematically tractable model. Importantly, under the twirling approximation, these noise channels can also be converted into random displacement channels. For example, applying the twirling approximation to loss gives us directly the Gaussian random displacement channel~\cite{shaw2024logical}. One could therefore imagine extending the analysis presented in this work to account for photon loss or dephasing noise with some appropriate modifications.

\section{On-demand noise biasing circuits}
\label{section: mid-circuit noise biasing protocol}
In this section, we discuss why on-demand noise biasing is essential for achieving a high-fidelity implementation of the cubic phase $T$ gate and outline how on-demand biasing can be realized. In Section~\ref{section: Error propagation of cubic phase gate}, we provide an intuitive explanation of how the cubic phase gate amplifies noise from the envelope operator. Building on this intuition, we show how on-demand noise biasing can mitigate the resulting amplified noise. Then, in Section~\ref{section: biased syndrome measurement circuit}, we demonstrate that a fixed amount of on-demand noise biasing can be implemented using the standard $q$-Steane QEC circuit, and thus requires no additional resources beyond those of a conventional QEC circuit. General amounts of noise biasing can then be achieved by a simple generalization to the $q$-Steane QEC circuit that makes use either of a biased square GKP ancilla codestate or a non-biased rectangular codestate which can be prepared in advance.

\subsection{Error propagation of the cubic phase gate}
\label{section: Error propagation of cubic phase gate}

To understand the existing problems with the cubic phase gate, we consider a simple circuit to generate the ideal square GKP state $\ket{\bar{T}}=T\ket{\bar{+}}$
\begin{equation}\label{eq: simple T_GKP circuit}
    \begin{quantikz}
       \lstick{$\ket{\bar{+}}$} & \gate{ {T}_{\rm GKP} } &   
\end{quantikz}.
\end{equation}
Here, $T_{\rm GKP}$ is the specific realization of the cubic phase gate proposed in Ref.~\cite{GKP2001} for implementing the logical $T$ gate, defined as
\begin{equation}\label{eq: T_GKP}
    T_{\rm GKP} =  \exp\left\{i2\pi \Big[ \frac{1}{4} {\!\left(\frac{{q} }{\sqrt{\pi}}\right)\!}^3 + \frac{1}{8} {\!\left(\frac{{q} }{\sqrt{\pi}}\right)\!}^2 - \frac{1}{4} \!\left(\frac{{q} }{\sqrt{\pi}}\right)\!   \Big] \!\right\}\!. 
\end{equation}
If the input state to \cref{eq: simple T_GKP circuit} is ideal, then the output state is an ideal square GKP $\ket{\bar{T}}$ state. To see this, note that the ideal $\ket{\bar{0}}$ (and, respectively, $\ket{\bar{1}}$) codestates are superpositions of position eigenstates with positions at even (odd) multiples of $\sqrt{\pi}$~\cref{eq: rectangular codewords}. Meanwhile, note that the cubic polynomial
\begin{equation}\label{eq: T_GKP polynomial}
    P(x)=\frac{1}{4}x^{3}+\frac{1}{8}x^{2}-\frac{1}{4}x
\end{equation}
has the property that if $x$ is an even integer then $P(x)$ is an integer, and if $x$ is an odd integer then $P(x)$ is $n+1/8$ (for some integer $n$). Therefore, the cubic phase gate $T_{\mathrm{GKP}}=\mathrm{exp}\big(i2\pi P({q}/\sqrt{\pi})\big)$ acts as
\begin{subequations}
\begin{align}
    T_{\mathrm{GKP}}\ket{2n\sqrt{\pi}}_{q}&=\ket{2n\sqrt{\pi}}_{q},\\
    T_{\mathrm{GKP}}\ket{(2n+1)\sqrt{\pi}}_{q}&=e^{i\pi/4}\ket{(2n+1)\sqrt{\pi}}_{q},
\end{align}
\end{subequations}
for all integers $n$, and implements a logical $T$ gate.

However, as discussed in \cref{section: preliminary}, only approximate GKP states can be prepared in practice. Approximate codestates always introduce some logical errors due to the envelope operator, and the probability of a logical error goes to $0$ as $\Delta\rightarrow0$. However, it was shown in Ref.~\cite{Hastrup2021} that the ${T}_{\text{GKP}}$ gate amplifies the noise due to the envelope operator in such a way that the probability of a logical error no longer goes to $0$ as $\Delta\rightarrow0$, we will also rederive this result using the twirling approximation in \cref{sec: error propagation and fault-tolerant theorem}. 

One can understand this intuitively by considering a small perturbation to the position eigenstate $\ket{(n+\epsilon)\sqrt{\pi}}_{q}$ that the cubic phase gate acts upon. In particular, the polynomial $P(n+\epsilon)$ is evaluated at a position that is a small distance $\epsilon$ away from being an integer $n$. Roughly speaking, the error due to the $T_{\text{GKP}}$ gate comes from the difference in phase that is applied due to the perturbation $P(n+\epsilon)-P(n)$. To estimate how this error scales with $\Delta$, note that in the position basis, the approximate codestate is a superposition of Gaussian functions with width $\sim \Delta_{q}$ multiplied by an envelope of width $\sim 1/\Delta_{p}$. For an unbiased approximate codestate with $\Delta=\Delta_{q}=\Delta_{p}$, we therefore set $\epsilon=O(\Delta)$ and $n=O(1/\Delta)$ because larger values of $\epsilon$ and $n$ are exponentially suppressed by the Gaussian distribution. In this case, the error in the phase that is applied due to the perturbation $P(n)-P(n+\epsilon)$ contains terms that do not decrease as $\Delta\rightarrow0$:
\begin{equation}
    P(n+\epsilon)-P(n)=\frac{3}{4}n^{2}\epsilon+O(1)=O(\Delta^{-1}).
\end{equation}
We call this leading term the ``shear error'' and is the leading source of errors in the ${T}_{\text{GKP}}$ gate.
In particular, due to the shear error, the error in the phase of the resulting state does not decrease as $\Delta$ tends to zero - limiting the logical error rate that can be achieved \cite{Tzitrin2020}. Note, however, that this intuitive argument is not watertight and does not produce quantitatively correct predictions of the behavior of the ${T}_{\text{GKP}}$ gate (see footnote \footnote{Indeed, this argument incorrectly predicts the level of noise bias that would be optimal compared to our more detailed analysis in \cref{app: optimal bias estimate,sec: numerics}. Moreover, repeating the argument for the ${S}$ gate suggests that it suffers from the same problems as the ${T}$ gate; however, this is not the case, and the error spread due to the ${S}$ gate can be corrected for exactly if one has access to ideal QEC~\cite{shaw2024logical}}), for this reason we will more carefully derive the error induced by the $T$ gate in \cref{sec: error propagation and fault-tolerant theorem}.

The solution to this is to instead \textit{bias} the noise profile by using a biased envelope operator \cref{eq: biased envelope operator}, which improves gate fidelity by two ways. First, decreasing $\Delta_q$ reduces the typical displacement in the position quadrature $\epsilon$; and second, increasing $\Delta_p$ tightens the envelope of the codeword position-space wavefunction, effectively reducing $n$. Of course, increasing $\Delta_{p}$ also increases the error rate because it increases the width of the envelope, introducing a trade-off between large and small noise bias.

An alternative, entirely equivalent solution is to use non-biased \textit{rectangular} codestates. This is equivalent because of the unitary relationship between biased square and non-biased rectangular codestates given by \cref{eq: relation between biased square GKP to non-biased rectangular GKP}; note that in this case one also has to use the rectangular $T_{\text{GKP}}^{(\lambda)}=\mathrm{Sq}(\sqrt{\lambda})^{\dag}\,T_{\text{GKP}}\,\mathrm{Sq}(\sqrt{\lambda})$ logical gate.

Regardless of whether one uses a biased square or non-biased rectangular codestate to perform a $T$ gate, it is crucial that the bias is prepared \textit{on-demand}. This is primarily because Clifford gates (for example the Hadamard gate~\cref{eq: logical H}) are not bias-preserving: in the square GKP code, performing a logical Hadamard gate causes the bias to be inverted $\Delta_{q}\leftrightarrow\Delta_{p}$. Moreover, a biased square GKP qubit is less protected while idling compared to a non-biased square GKP qubit, see for example Fig.~\ref{fig:all_gates}(a). Therefore, in the following section we will derive an on-demand biasing circuit that allows one to bias the codestate just before the $T$ gate and return a non-biased codestate afterwards.

\subsection{On-demand biasing circuit}
\label{section: biased syndrome measurement circuit}
The on-demand biasing circuit is based on the $q$-Steane QEC circuit~\cref{eq: q-Steane circuit}. We will first begin by explaining how the $q$-Steane QEC circuit introduces a fixed amount of noise bias, and then we will generalize the $q$-Steane QEC circuit to be able to introduce an arbitrary amount of bias $\lambda$.

We begin with an identity from Ref.~\cite{Glancy2006}, which rewrites the $q$-Steane measurement circuit in terms of beam-splitters and single-mode squeezers~\cref{eq: squeeze and bs}:
\begin{subequations}\label{eq: Glancy Knill BS}
    \begin{align}
    \label{eq: Glancy Knill step 1}
& \hphantom{{}={}} \begin{quantikz}
     \lstick{$\ket{\bar \psi}_{\Delta}$}   &  \gate[2]{ e^{-i {q}_1  {p}_2} } & \gate[2]{ e^{i \frac{{p}_1 {q}_2}{2}}}  &    \\
     \lstick{$\ket{\bar +}_{\Delta}$}  &  &    & \meter{{q}}
\end{quantikz} \\
& = \begin{quantikz}
     \lstick{$\ket{\bar \psi}_{\Delta}$} &  \gate[2]{ \rm BS(\frac{\pi}{4}) } & \gate{\text{Sq}^{\dg}(\sqrt{2}) }   &   \\
     \lstick{$\ket{\bar +}_{\Delta}$} &                                & \gate{\text{Sq}^{\dg}(\frac{1}{\sqrt{2}})} & \meter{{q}}
\end{quantikz}.\label{eq: q-Steane rewritten}
\end{align}
\end{subequations}
Note that \cref{eq: Glancy Knill step 1} is the same as the $q$-Steane circuit depicted in \cref{eq: q-Steane circuit} except for the presence of the $\mathrm{exp}(i{p}_{1}{q}_{2}/2)$ gate. However, this gate can be compensated for by applying a corrective displacement $\exp(-i{p}_{1}q_{\text{m}}/2)$ based on the outcome $q_{\text{m}}$ of the measurement; this correction can be performed at the same time as the usual displacement correction that is done to return the GKP qubit to the codespace. \cref{eq: q-Steane rewritten} is simply a linear algebra rearrangement of the unitary operators, but it allows us to reason about the spread of noise as we now explain.

The beam-splitter $\rm BS(\theta)$ preserves the non-biased noise profile of both data and ancilla qubits, as shown by the commutation identity \cite{shaw2022,Walshe2020}
\begin{equation}
     \begin{quantikz}
      &[-0.28cm]\gate{\mathrm{Env}_{\Delta}} & \gate[2]{ \rm BS(\theta) }&[-0.28cm]    \\
      &[-0.28cm]\gate{\mathrm{Env}_{\Delta}} & &[-0.28cm]                             
\end{quantikz} = \begin{quantikz}
    &[-0.28cm]\gate[2]{ \rm BS(\theta) } & \gate{\mathrm{Env}_{\Delta}} &[-0.28cm]  \\
    &[-0.28cm] & \gate{\mathrm{Env}_{\Delta}} &[-0.28cm]
\end{quantikz}.
\end{equation}  
However, the single-mode squeezing operator on the data qubit after does modify the bias according to 
\begin{equation}\label{eq: BS envelope identity}
    \Delta_q = \Delta \to \frac{\Delta}{\sqrt{2}}, \quad \Delta_p = \Delta \to \sqrt{2} \Delta. 
\end{equation}
Therefore, after the displacement correction, the output codestate of the $q$-Steane QEC circuit (implemented either with Eq.~\eqref{eq: q-Steane circuit} or \eqref{eq: q-Steane rewritten}) will be a square GKP codestate $\ket{\bar{\psi}}_{\Delta/\sqrt{2},\sqrt{2}\Delta}$ with a bias of two. This shows that by measuring just one of the two GKP stabilizers, we have effectively turned a non-biased GKP codestate into a biased GKP codestate: in other words, we have performed on-demand noise biasing.

Intuitively, the reduction of $\Delta_q$ a factor of $\sqrt{2}$ is the direct consequence of measuring and correcting the position quadrature, which reduces its error variance. Conversely, the momentum quadrature becomes noisier because the momentum displacement errors arising from the envelope operator on the ancilla propagate to the data qubit through the $\text{CX}$ gate, adding to the noise in the momentum quadrature of the data qubit.

However, as $\Delta$ becomes smaller, we need increasing levels of noise bias to be able to perform the $\mathrm{GKP}$ gate with decreasing noise levels. Indeed, we will later see in \cref{sec: numerics} that for $\Delta<0.25$ (GKP squeezing above 12 dB) the optimal level of asymmetry is already larger than what can be achieved with the $q$-Steane QEC circuit alone. There are a number of ways to generate higher levels of bias using modifications of the $q$-Steane QEC circuit, which we now outline one at a time.

The most conceptually simple way to increase the bias is to perform multiple $q$-Steane QEC circuit in a row, which increases the bias of the square GKP data qubit each time. With $r$ applications of the $q$-Steane QEC circuit, one can prepare a biased square GKP codestate $\ket{\bar{\psi}}_{\Delta/\sqrt{r+1},\sqrt{r+1}\Delta}$.

However, a more time-efficient alternative is to use just a single $q$-Steane QEC circuit with a \textit{biased} square GKP ancilla
\begin{equation}
    \label{eq: biased noise q-Steane}
    \begin{quantikz}
        \ket{\bar \psi}_{\Delta } & \gate[2]{ e^{-i {q}_1  {p}_2} } &    \\
       \ket{ \bar +}_{\Delta/\sqrt{\lambda},\Delta \sqrt{\lambda}}  & &   \meter{{q}} 
    \end{quantikz}.
\end{equation}
In this circuit, due to the reduced variance $\Delta/\sqrt{\lambda}$ of the position displacement error, the syndrome measurement becomes more sensitive to position displacement errors coming from the data qubit, reducing the noise in the position quadrature. Conversely, momentum displacement errors on the ancilla are also amplified by the factor $\sqrt{\lambda}$ and propagate back to the data qubit through the CX gate. A straightforward calculation shows that the effective squeezing parameters of the data qubit are transformed as
\begin{equation}\label{eq: biased noise q-Steane resulting bias}
    \Delta_q = \Delta \to \frac{\Delta}{\sqrt{1+\lambda}}, \quad \Delta_p = \Delta \to \sqrt{1+\lambda} \Delta. 
\end{equation}
Note that even though \cref{eq: biased noise q-Steane} itself already uses a biased noise ancilla, this state can be prepared ``offline'' instead of on-demand.

Finally, an entirely equivalent alternative to \cref{eq: biased noise q-Steane} is to use a non-biased rectangular ancilla
\begin{equation}\label{eq: biased noise q-Steane with rect ancilla}
    \begin{quantikz}
        \ket{\bar \psi}_{\Delta } &  \gate[2]{ e^{-i \sqrt{\lambda} {q}_1  {p}_2} }    &  \\
       \ket{ \bar +}_{\Delta}^{\lambda}  &   & \meter{{q}}
    \end{quantikz},
\end{equation}
which also generates the same biasing as in \cref{eq: biased noise q-Steane resulting bias}. This can be derived from \cref{eq: biased noise q-Steane} by inserting squeezing operator identities $\mathrm{Sq}(\sqrt{\lambda})\mathrm{Sq}(\sqrt{\lambda})^{\dag}$ into the circuit.

After biasing the noise with either \cref{eq: biased noise q-Steane} or \cref{eq: biased noise q-Steane with rect ancilla} to the optimal amount of biasing for the given $\Delta$ and apply the corrective displacement, we proceed with the application of the cubic phase gate $T_{\rm GKP}$. Following the gate, the noise profile becomes highly asymmetric and distorted, see Fig.~\ref{fig: twirled_chi}. If left uncorrected, subsequent gates can further amplify these distortions, leading to rapid degradation of the logical state fidelity. Therefore, using the same idea as Ref.~\cite{matsuura2024continuous}, we return to the unbiased square GKP code using the passive Knill QEC circuit~\cite{Walshe2020}, which is guaranteed to output a fresh unbiased square GKP codestate.

In summary, to apply the cubic phase gate, we first perform noise biasing using Eq.~\eqref{eq: biased noise q-Steane} before the gate and then perform full round of error correction afterward using the passive Knill QEC circuit:
\begin{equation}
    \begin{quantikz}
        \lstick{$\cdots$}&\gate{\substack{q\text{-Steane}\\\text{QEC}}}\gategroup[1,steps=1,style={dashed,rounded corners,fill=blue!20,inner sep =0 pt},background,label style={label position=above,anchor=south,yshift=-0.2cm}]{$\substack{\text{On-demand}\\\text{biasing}}$}&\gate{T}&\gate{\substack{\text{Passive}\\\text{Knill}\\\text{QEC}}}&\rstick{$\cdots$}
    \end{quantikz}.\label{eq: T gate with on-demand biasing}
\end{equation}
In contrast, to apply any Clifford gate we simply apply the gate between two rounds of QEC:
\begin{equation}
    \begin{quantikz}
        \lstick{$\cdots$}&\gate{\substack{\text{Passive}\\\text{Knill}\\\text{QEC}}}&\gate{C}&\gate{\substack{\text{Passive}\\\text{Knill}\\\text{QEC}}}&\rstick{$\cdots$}
    \end{quantikz}.\label{eq: Clifford gate summary}
\end{equation}

An important practical question is how to generate the biased-noise square GKP ancilla $\ket{\bar +}_{\Delta/\sqrt{\lambda},\Delta \sqrt{\lambda}}$ required in Eq.~\eqref{eq: biased noise q-Steane} (or the non-biased rectangular ancilla in \cref{eq: biased noise q-Steane with rect ancilla}). Existing preparation protocols \cite{Le2019,Rymarz2021,hastrup2021measurement,kolesnikow2025protected,Weigand2016} typically yield non-biased square GKP states. While these protocols likely could be adapted to generate rectangular states, we do not detail such modifications here. Instead, as a proof of principle, we show in Appendix~\ref{appendix: passive biasing circuit} how to use a slight modification of the $q$-Steane QEC circuits to convert non-biased square GKP states $\ket{\bar +}_{\Delta}$ into non-biased rectangular GKP states $\ket{\bar +}^{\lambda}_{\Delta}$ using only non-balanced beam splitters and displacement operations. The non-biased rectangular GKP state $\ket{\bar +}^{\lambda}_{\Delta}$ can be converted to the biased noise square GKP $\ket{\bar +}_{\Delta/\sqrt{\lambda},\Delta \sqrt{\lambda}}$ by a single-mode squeezer. The advantage of our procedure is that it is independent of the specific method used to generate the initial non-biased square GKP qubit.

We briefly remark that the above method applies the $T$ gate on a biased square GKP codestate. However, one can also apply the $T$ gate on an unbiased rectangular GKP codestate. To do this, one needs an on-demand \textit{morphing} circuit that ``morphs'' from the square GKP code to the rectangular GKP code while preserving a non-biased noise profile. We derive such an on-demand morphing circuit in \cref{appendix: mid-circuit morphing} and explain how it can be used to morph to the rectangular code for the $T$ gate and back to the square code for Clifford gates. This approach may be more attractive experimentally because it only uses a \textit{passive} Gaussian operation (a beam-splitter) instead of a CX gate, and because a non-biased rectangular codestate has a lower average photon number than the corresponding biased square codestate, reducing its susceptibility to loss and dephasing.

One may wonder if the on-demand biasing circuits that we propose can be improved upon to provide output states with better squeezing parameters. In particular, given an input state $\ket{\bar{\psi}}_{\Delta}$, what output state $\ket{\bar{\psi}}_{\Delta_{q},\Delta_{p}}$ is achievable when using Gaussian operations? In \cref{appendix: proof nogo theorem}, we prove that, when limited to circuits using any number of approximate GKP ancillas $\ket{\bar{\psi}}_{\Delta}$ and any deterministic Gaussian operations, the output GKP qubit will always have a bias that satisfies
\begin{equation}
    \label{eq: trade-off relation}
    \Delta_q \Delta_p \geq \Delta^2.
\end{equation}
In particular, the on-demand biasing circuits~\cref{eq: biased noise q-Steane,eq: biased noise q-Steane with rect ancilla} already saturate this bound, demonstrating that these circuits are indeed optimal among deterministic Gaussian circuits.

\section{Optimized logical gate representations}
\label{section: optimized logical gate representations}
In the previous section, we introduced the on-demand noise-biasing protocol, which allows us to generate bias that mitigates the effect of propagated errors induced by the cubic phase $T$ gate. In this section, we move beyond the cubic phase gate and develop a general formalism for characterizing polynomial phase representations of logical gates, enabling the identification of representations that minimize error propagation when acting on approximate GKP states. Central to this framework is the introduction of \textit{polynomial phase stabilizers} that preserve the GKP codestates but are not displacement operators. We show that these generalized stabilizers impose strong algebraic constraints on the allowed polynomial phase gates of logical gates. Finally, we emphasize that the formalism is broadly applicable: it extends naturally to other bosonic codes and can be generalized to multi-qubit GKP logical gates, as discussed in \cref{appendix: optimal representation of multi-qubit gates}.

\subsection{Polynomial phase stabilizers}

In this section, we introduce the framework of polynomial phase stabilizers, which will lead a systematic method to construct alternative representations of logical gates in the GKP code that minimize error propagation.

As a motivating example, consider the following operator
\begin{equation}
    O=\exp\left\{ i 2\pi \Big[-\frac{1}{6}\Big( \frac{ q}{\sqrt{\pi}} \Big)^3 + \frac{1}{6}\Big( \frac{ q}{\sqrt{\pi}} \Big)  \Big] \right\}.
\end{equation}
How does this operator act on square GKP codestates? As before, we can write $O=\mathrm{exp}\big(i2\pi P({q}/\sqrt{\pi})\big)$ where $P(x)=-x^{3}/6+x/6$ is a cubic polynomial. Moreover, note that $P(n)$ is an integer for every input integer $n$. This means that not only does $O$ perserve the GKP codespace, but it also acts as the logical identity on that codespace - it is a cubic phase \textit{stabilizer}.

Because $O$ acts trivially on the GKP codespace it can be multiplied by the $T_{\text{GKP}}$ gate to obtain a new cubic phase gate $T_{3}$ that still implements a logical $T$  gate:
\begin{subequations}
\begin{align}
    \label{eq: T3 gate}
    T_{3}&=T_{\text{GKP}}{O}\\   &=\exp\bigg\{i2\pi \Big[ \frac{1}{12} \!\left(\frac{{q} }{\sqrt{\pi}}\right)^3 \!
    + \frac{1}{8}\! \left(\frac{{q} }{\sqrt{\pi}}\right)^2 \! - \frac{1}{12} \!\left(\frac{{q} }{\sqrt{\pi}}\right)   \Big] \bigg\}. 
\end{align}
\end{subequations}
This same representation was previously found in Ref.~\cite{Konno2021} through numerical optimization and shown to achieve significantly higher gate fidelity than the original $T_{\rm GKP }$ gate. This improvement can be attributed to the reduced amplitude of the cubic (non-Gaussian) term, which reduces the magnitude of the induced shear errors discussed in Sec.~\ref{section: Error propagation of cubic phase gate}. Our approach here however suggests a more systematic general framework to obtain minimal representations of polynomial phase gates by characterizing the corresponding polynomial phase stabilizers, as we now explain.

Specifically, polynomial phase stabilizers are operators that act trivially on the GKP codespace and can be written as the exponential of a polynomial 
\begin{equation}
\label{eq: stabilizer canonical form}
    {S}_i  = \exp\Big[ i 2\pi P({x} )  \Big].
\end{equation}
Here, $P(x)$ is a real polynomial that generates the stabilizer ${S}_i$, and ${x}$ is \textit{any} operator such that the linear stabilizer $\exp(i 2 \pi {x})$ acts trivially on the ideal GKP states \cite{Royer2022}. Importantly, the linear stabilizer guarantees that the GKP codestates are superpositions of eigenstates of ${x}$ at integer eigenvalues. For the present discussion, we fix ${x}={q}/\sqrt{\pi}$, corresponding to a scaled position operator, such that $\exp(i 2\pi x)=\exp(i2\sqrt{\pi}q)=S_{q}$ is one of the usual GKP stabilizers. However, we will later show that other choices for ${x}$, such as ${a^\dag a}/2$, also yield valid stabilizer structures and reveal additional structure in the code space. 

Since the ideal GKP states have support only on eigenstates $\ket{m\sqrt{\pi}}_q$ where $m \in \mathbb{Z}$, it follows that the polynomial $P(x)$ must return integer values when evaluated at integer arguments $x$. In other words, we require 
\begin{equation}
    \label{eq: definition of integer value poly}
    P( \mathbb{Z}) \subseteq  \mathbb{Z}. 
\end{equation}
This class of polynomials is well known in mathematics as integer-valued polynomials. Importantly, P\'olya \cite{Polya1915} found that there is a canonical basis for the set of integer-valued polynomials, namely 
\begin{equation}
    \label{eq: trivial polynomial stabilizers}
    L_n(x) = \begin{cases}
        \frac{1}{n!}\prod_{i=1}^{n}(x+i-\frac{n}{2})  & \text{ for }n\text{ even} \\
        \frac{1}{n!}\prod_{i=1}^{n}(x+i-\frac{n+1}{2}) & \text{ for }n\text{ odd} 
    \end{cases} \quad .
\end{equation}
(see also Ref.~\cite{SINGMASTER1974} for a more pedagogical introduction). All integer-valued polynomials $P(x)$ can be written as an integer linear combination of the basis polynomials $\{ L_n(x) \}$; that is,
\begin{equation}
    P(x)=\sum_{n=1}^{m}c_{n}L_{n}(x),
\end{equation}
for some integer coefficients $c_{n}$.

Since the construction of polynomial phase stabilizers relies solely on the fact that the square GKP code has support on a discrete position-space grid, the same approach can be readily extended to other sets of grid states. In particular, we can generalize the operator ${x}$ by replacing it with any of the scaled operators $\{{q},{p},{q}-{p} \}/\sqrt{\pi}$ which correspond respectively to the logical ${Z}$, ${X}$, and ${Y}$ bases. Furthermore, we observe that the square GKP code has support only on even-number Fock states and is therefore stabilized by $\exp(i\pi a^\dag a)$, and so we can also set ${x}$ to ${{a^\dag a}}/2$. More generally, this observation implies that the concept of polynomial phase stabilizers can be naturally extended to the broader class of bosonic codes \cite{albert_2025} such as the multi-leg rotationally symmetric code \cite{Grimsmo2020} or the number-phase code \cite{hu2025generalized}. For instance, we can choose the operator ${x}$ to be ${a^\dag a}/\alpha$ in the multi-leg rotationally symmetric code where $\alpha$ is the number of legs.

\subsection{Minimal logical gate representations}

In this section, we consider the family of polynomial phase gates defined by
\begin{equation}\label{eq: logical canonical form}
    \Lambda_m[{x}] = \exp\Big(i 2\pi P_{\Lambda_{m}}({x}) \Big),
\end{equation}
in which its actions on the eigenstates $\ket{x}$ ($x \in \mathbb{Z}$) of the operator ${x}$ is given by 
\begin{equation}\label{eq: phase gate definition}
    \Lambda_m[{x}] \ket{x} = \begin{cases}
         \ket{x} \quad  & \text{for } x \equiv 0 \mod 2\\
         \exp\Big(i 2\pi/2^m \Big) \ket{x} & \text{for } x \equiv 1 \mod 2
    \end{cases}. 
\end{equation}
When the operator ${x}$ is chosen to be $ {q}/\sqrt{\pi}$, the gate family $\Lambda_m[{x}]$ encompasses the logical diagonal gates $\{ Z,S,T,\sqrt{T},\dots \}$ of the GKP code. More generally, by selecting alternative choices for ${x}$, the family $\Lambda_m[{x}]$ also captures other important gate families. For example, by setting ${x}= {a^\dag a}/2$, we yield the Hadamard gate family $\{ H,\sqrt{H},\sqrt[4]{H}, \dots \}$ \cite{Royer2022}.

In what follows, we will detail an algorithm that ``minimizes'' the representation of a particular polynomial phase gate. In particular, the minimization is with respect to a ``lexicographic'' ordering on the polynomials (technically this is a partial order). In the lexicographic ordering, given two real polynomials $P(x)=\sum_{n}c_{n}x^{n}$ and $Q(x)=\sum_{n}d_{n}x^{n}$, we compare the coefficients of the polynomials $P(x)$ and $Q(x)$ starting from the leading coefficient to lowest order coefficient. We define the polynomial $P(x)$ to be less than $Q(x)$ if
% if there exists an integer $i$ such that $|c_{i}|<|d_{i}|$ and $|c_{k}|=|d_{k}|$ for all $k>i$.
\begin{multline}
    P(x) \prec Q(x)
 \Leftrightarrow
\text{there exists } i\text{ such that }\\
|c_k| = |d_k|\text{ for all } k > i,\text{ and }
|c_i| < |d_i|.
\end{multline}

The advantage of the lexicographic ordering is that it minimizes the coefficients of the higher degree terms in the polynomial before moving to the lower degree terms. Intuitively, this is important because the higher degree terms in the polynomial contribute to higher degree errors that affect GKP codestates more severely. For this reason, as $\Delta\rightarrow0$, we expect a minimal polynomial phase gate to perform optimally among the polynomial phase gates that implement the same logical operator (although see footnote \footnote{It is worth noting that in principle the lexicographic ordering does not necessarily guarantee optimal performance of the minimal polynomial phase gate across \textit{all} values of $\Delta$. This could happen for example if a polynomial $P(x)$ has a smaller leading coefficient than $Q(x)$ and thus $P(x)\prec Q(x)$, but $P(x)$ has larger coefficients than $Q(x)$ in other positions. We did not however find any specific examples of this for two polynomials that implement \textit{the same} logical gate, but we do find it for some polynomials that implement different logical gates, see for example $T_{3}$ and $\sqrt{T}$ in \cref{sec: numerics}.}).

The algorithm that we describe will output a polynomial phase gate whose corresponding polynomial is not larger under the lexicographic ordering than any other polynomial that implements the same gate. The algorithm consists of two steps: first, we write down any ``starting'' representation of the ${\Lambda}_{m}[x]$ gate, and then we use the polynomial phase stabilizers to minimize the coefficients of the gate under the lexicographic order (we call this second step the ``coefficient reduction'' procedure). We show in Appendix~\ref{appendix: representation logical phase gate} that every $\Lambda_m[x]$ gate admits a “starting” representation with polynomial
\begin{equation}\label{eq: trivial polynomial representation}
  P_{\text{start}}(x) = \frac{x^{2^{m-1}}}{2^{m}}. 
\end{equation}
Alternatively, if a good representation $F(x)$ of the ${\Lambda}_{m-1}[x]$ gate is already known, one can choose a better starting point
\begin{equation}\label{eq: better polynomial starting point}
    P_{\text{start}}(x) = 2^{m-2} F(x)^2.
\end{equation}
In either case, the candidate polynomial for $P_{\text{start}}(x)$ can be expressed as
\begin{equation}\label{eq: polynomial start coefficients}
    P_{\text{start}}(x) = \sum_{i=1}^{N} a_i x^i.
\end{equation}

The goal of the coefficient reduction procedure is to systematically adjust the coefficients of $P_{\rm start}(x)$ by adding or subtracting suitable integer multiple of the basis polynomials $\{ L_n(x) \}$, starting from the highest-degree term and proceeding downward. Importantly, the leading coefficient of $L_{n}(x)$ is $1/n!$, which means that we can reduce each coefficient to be between $-1/(2n!)$ and $1/(2n!)$. Specifically, let $j$ denote a pointer, initialized at $j=N$ and decreased sequentially until $j=1$, and initialize $P(x)=P_{\text{start}}(x)$. At each step, we decompose the coefficient $a_j$ of the polynomial~\cref{eq: polynomial start coefficients} as 
\begin{equation}
a_j = \frac{n_j}{j!} + r_j,
\end{equation}
where $n_{j} \in \mathbb{Z}$ and the remainder satisfies $|r_j| \leq \frac{1}{2 j!}$.
%and, after running the rest of the algorithm for both choices, we take whichever choice of $n_{j}$ that results in a lexicographically smaller polynomial.
We then update the polynomial $P(x)$ via  
\begin{equation}
P(x) \leftarrow P(x)-n_j L_j(x),
\end{equation}
where $L_j(x)$ is the degree-$j$ integer-valued polynomial defined in Eq.~\eqref{eq: trivial polynomial stabilizers}. This step removes the part of each coefficient that is an integer multiple of $1/j!$. Note that $n_{j}$ and $r_{j}$ are unique here \textit{unless} $r_{j}=\pm\frac{1}{2j!}$, in which case there are two choices of $n_{j}$. In this case, we proceed with both choices of $n_{j}$ until one of the choices is lexicographically smaller than the other, at which point we can discard the other choice of $n_{j}$. Iterating this procedure from  $j=N$ down to $j=1$ produces a reduced polynomial representation of the form 
\begin{equation}
P(x) = \sum_{k=0}^m a_k x^k,
\end{equation}
with coefficients satisfying  bound
\begin{equation}
|a_k| \leq \frac{1}{2 k!}, \qquad \forall k.
\end{equation}

In \cref{appendix: representation logical phase gate} we present multiple results about the coefficient reduction procedure. In Lemma \ref{lemma: guarantee of coefficient reduction procedure}, we formally prove that the above algorithm indeed returns a minimal representation under the lexicographic ordering. Moreover, we prove in \cref{theorem: minimal degree polynomial} that the degree of the polynomial that is returned is exactly $m$, and therefore that the lowest degree polynomial that can implement the $\Lambda_{m}[x]$ gate is degree $m$. While this result is well-known for diagonal GKP gates specifically, our theorem applies to all polynomial phase gates of the form described here (including the Hadamard gate hierarchy).

As an example, let us construct the minimal representation of the $\sqrt{T}$ gate. Our starting point is obtained by substituting the minimal $T$ gate into \cref{eq: better polynomial starting point}, giving
\begin{subequations}
\begin{align}
    P_{\Lambda_4}(x)&\leftarrow P_{\text{start}}(x)=4 P_{\Lambda_3}(x)^{2}\\
    &=\frac{x^6}{36}+\frac{x^5}{12}+\frac{x^4}{144}-\frac{x^3}{12}+\frac{x^2}{36}.
\end{align}
\end{subequations}
Now, the leading coefficient $1/36$ is equal to $20$ times $1/6!$, and therefore we update
\begin{subequations}
\begin{align}
    P_{\Lambda_{4}}(x)&\leftarrow P_{\Lambda_{4}}(x)-20 L_{6}(x)\\
    &=\frac{7x^4}{48}+\frac{x^3}{3}-\frac{x^2}{12}-\frac{x}{3}.
\end{align}
\end{subequations}
Notice that this update has coincidentally reduced the 5th order coefficient to zero. Moving on to the 4th order coefficient, there are actually two choices for the coefficient reduction of $7x^{4}/48$ since it can be written in two ways: $7/48=3\times 1/4! +1/48=4\times 1/4!-1/48$. After running the full algorithm it turns out that the second choice returns a smaller polynomial under the lexicographic ordering, so we update
\begin{subequations}\label{eq: lambda_4 polynomial}
\begin{align}
    P_{\Lambda_{4}}(x)&\leftarrow P_{\Lambda_{4}}(x)-4 L_{4}(x)\\
    &=-\frac{x^4}{48}+\frac{x^2}{12}.
\end{align}
\end{subequations}
This polynomial can no longer be reduced by any of $L_{3}(x)$, $L_{2}(x)$ or $L_{1}(x)$, and therefore we are done.

\cref{eq: lambda_4 polynomial} gives rise not only to a minimal representation of the $\sqrt{T}$ gate
\begin{equation}\label{eq: sqrt_T gate}
    \sqrt{T}= \exp\left\{i2\pi \Big[- \frac{1}{48} \left(\frac{{q} }{\sqrt{\pi}}\right)^4 + \frac{1}{12} \left(\frac{{q} }{\sqrt{\pi}}\right)^2   \Big] \right\},
\end{equation}
but also to a minimal representation of the $\sqrt[8]{H}$ gate
\begin{equation}\label{eq: 8_rt_H gate}
    \sqrt[8]{H} = \exp\left\{i2\pi \Big[- \frac{1}{48} \left(\frac{a^{\dag}a }{2}\right)^4 + \frac{1}{12} \left(\frac{a^{\dag}a }{2}\right)^2   \Big] \right\}. 
\end{equation}
It is important to note, however, that fault-tolerance is only guaranteed for polynomial phase gates of a quadrature operator such as \cref{eq: sqrt_T gate} and not for \cref{eq: 8_rt_H gate} since it is unclear how one would bias the noise in the latter case.

Moreover, taking the square of \cref{eq: sqrt_T gate} gives a quartic representation of the $T$ gate that we label $T_{4}$:
\begin{equation}\label{eq: T_4 gate}
    T_{4} = \exp\left\{i2\pi \Big[- \frac{1}{24} \left(\frac{{q} }{\sqrt{\pi}}\right)^4 + \frac{1}{6} \left(\frac{{q} }{\sqrt{\pi}}\right)^2   \Big] \right\}. 
\end{equation}
Although \cref{eq: T_4 gate} is not \textit{minimal}, it has the advantage over the minimal $T_{3}$ gate \cref{eq: T3 gate} that the polynomial that implements it is even, making it more convenient in some implementations. Indeed, in Ref.~\cite{kolesnikow2025protected} it was shown that logical phase gates of the GKP code can be implemented protectively in 0-$\pi$ qubits provided the underlying Hamiltonians involve only quadratic and quartic potentials.
This example demonstrates that we can use the polynomial phase stabilizers to find logical gates tailored to hardware constraints, while preserving logical equivalence. 

Although we have so far focused on single-qubit gates in this section, the same polynomial-stabilizer framework naturally extends to multi-qubit gates between square GKP qubits. In Appendix~\ref{appendix: optimal representation of multi-qubit gates}, we demonstrate how the coefficient reduction procedure can be generalized to construct minimal representations of such multi-qubit gates. For example, we find the following minimal representations of the $CS$ and $CCZ$ gates for the square GKP qubits 
\begin{subequations}
\begin{align}
CS
 &= \mathrm{exp}\bigg\{-\frac{i\pi}{2}\bigg[\Big(\frac{{q}_1 }{\sqrt{\pi}}\Big)^2 \Big(\frac{{q}_2 }{\sqrt{\pi}}\Big) + \Big(\frac{{q}_1 }{\sqrt{\pi}}\Big) \Big(\frac{{q}_2 }{\sqrt{\pi}}\Big)^2\nonumber\\
 &\qquad\qquad\qquad\qquad\qquad+ \Big(\frac{{q}_1 }{\sqrt{\pi}}\Big) \Big(\frac{{q}_2 }{\sqrt{\pi}}\Big)\bigg]\bigg\}\\
 CCZ&=\mathrm{exp}\Big(i{q}_{1}{q}_{2}{q}_{3}/\sqrt{\pi}\Big),\label{eq: CCZ gate}
\end{align}
\end{subequations}
acting on two and three modes respectively.

\section{Asymptotic Analysis}
\label{sec: error propagation and fault-tolerant theorem}

In the previous sections, we showed how to construct minimal representations of logical gates, and we showed how to perform on-demand noise biasing. We now turn to analyzing the error rates of these gates when acting on approximate biased square GKP states $\ket{\bar{\psi}}_{\Delta_{q},\Delta_{p}}$. Throughout this analysis, we will assume that ${x}={q}/\sqrt{\pi}$. In this section, we present our analytic results concerning the behaviour as $\Delta_{q},\Delta_{p}\rightarrow0$, with our numerical results explained later in \cref{sec: numerics}. In particular, we take two separate analytical approaches to analyze the fault-tolerance of polynomial phase gates. The first is to use the twirling approximation to obtain an asymptotic estimate of the error scaling in the limit of small $\Delta$. This estimate however uses a number of approximations as opposed to strict lower-bounds, and as such we also rigorously prove a fault-tolerance theorem for the minimal ${T}_{3}$ gate Eq.~\eqref{eq: T3 gate}. 

We begin, however, by explaining the error that we wish to characterize analytically. Given a polynomial phase gate $\Lambda$, we define the error operator as
\begin{equation}\label{eq: error operator}
    {E}=\Lambda \,\text{Env}_{\Delta_{q},\Delta_{p}}\,\Lambda^{\dag}.
\end{equation}
To understand ${E}$, consider applying each of its terms to an \textit{ideal} GKP codestate $\ket{\bar{\psi}}$. From right to left, the first term $\Lambda^{\dag}$ acts ideally because it is acting on an ideal GKP codestate. Then, the envelope operator prepares an approximate GKP codestate with logical information given by $\Lambda^{\dag}\ket{\psi}$. Finally, the last term $\Lambda$ acts non-ideally because it is applied to an approximate GKP codestate. Importantly, because we acted initially with $\Lambda^{\dag}$, the ideal \textit{logical} action of ${E}$ is the identity, which can be recovered by setting $\Delta_{q}=\Delta_{p}=0$. Therefore, ${E}$ captures the effect of spreading the envelope operator through the gate $\Lambda$ while avoiding the logical action of $\Lambda$.

\begin{figure*}
\centering
    \includegraphics{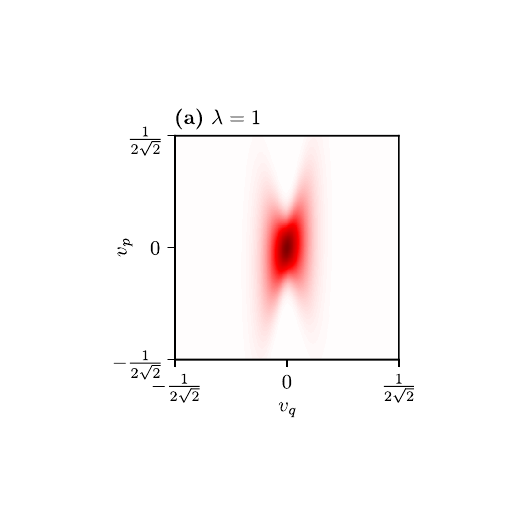}\includegraphics{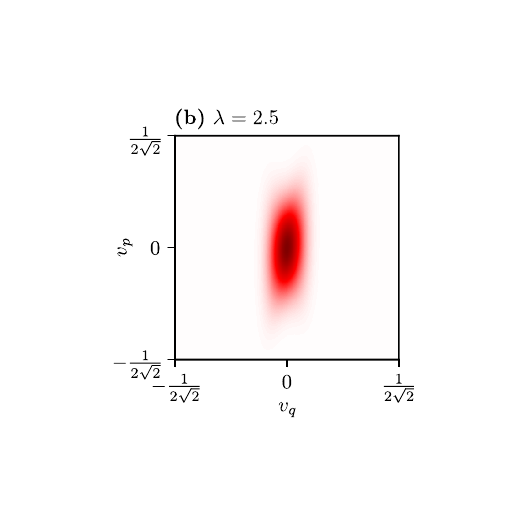}\includegraphics{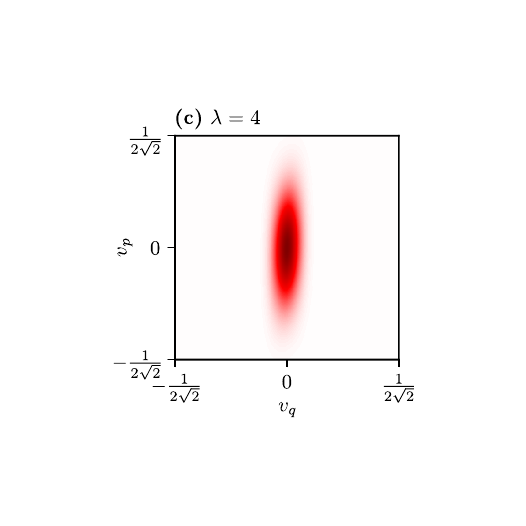}
\caption{A visual representation of twirled channel $\mathrm{Twirl}(\mathcal{E})$ for the optimized $T_{3}$ gate~\cref{eq: T3 gate} and for $\Delta=\sqrt{\Delta_{q}\Delta_{p}}=0.2$. Specifically, we plot the probability distribution $p_{\mathcal{E}}(\vect{v})$ of the random displacement function that is obtained by twirling the error operator $E$ as in~\cref{eq: twirling definition}. While $\vect{v}=(v_{q},v_{p})$ can take any value in $\mathbb{R}^{2}$, we show here only the correctable displacements. Moving from (a) to (c) we increase the noise bias $\lambda=\Delta_{p}/\Delta_{q}$, where (b) is the optimal noise biasing for this value of $\Delta$ as found by numerics in \cref{fig:T_gates}(a). When not enough noise biasing is applied as in (a), the noise from the $T$ gate dominates and increases the noise, while when too much noise biasing is applied as in (c) the noise from the biasing itself dominates. At intermediate values these two effects balance out so that the overall noise rate is minimal.}\label{fig: twirled_chi}
\end{figure*}

In our asymptotic analysis, we will consider the channel $\mathcal{E}(\cdot)={E}\cdot {E}^{\dag}$ (see footnote~\footnote{We use the term ``channel'' loosely here because $\mathcal{E}$ is not trace-preserving.}) and calculate the characteristic function $p_{\mathcal{E}}(\vect{v})$ of the twirled channel $\mathrm{Twirl}(\mathcal{E})$. Recall that this twirling approximation approximates $E$ as a random displacement channel, where the displacements are sampled from the (non-Gaussian) probability distribution $p_{\mathcal{E}}(\vect{v})$. This makes it easy to analyze in the context of GKP codes: if the displacement sampled is correctable as defined by \cref{eq: correctable displacement}, then no logical error occurs. Therefore to obtain an asymptotic estimate of the error rate, we calculate the mean and covariance matrix of the probability distribution $p_{\mathcal{E}}(\vect{v})$, and how these parameters scale with $\Delta_{q}$ and $\Delta_{p}$. In particular, if the mean and covariance matrix are small, there is a large chance that the sampled displacement will satisfy \cref{eq: correctable displacement} and be correctable.

Specifically, let the logical operator have the form
\begin{align}\label{eq: logical operator definition}
    \Lambda&=\mathrm{exp}\big[i2\pi P(q/\sqrt{\pi})\big],&P(x)&=\sum_{j=1}^{n}a_{j}x^{j},
\end{align}
where $n>1$ is the degree of the polynomial $P(x)$. Then, using the characteristic function expansion of the envelope operator $\chi_{\text{Env}}(\vect{v})$ from~\cref{eq: biased envelope characteristic function}, one can explicitly write the error operator as
\begin{subequations}\label{eq: main-text propagated error operator}
\begin{align}
        {E} &= \Lambda\,\mathrm{Env}_{\Delta_{q},\Delta_{p}}\,\Lambda^{\dag}\label{eq: E definition}\\
        &=\int d\vect{v}\,\chi_{\mathrm{Env}}(\vect{v})\,e^{i2\pi P(q/\sqrt{\pi})}W(\vect{v})\, e^{-i2\pi P(q/\sqrt{\pi})}\\
        &=\int d \bo v \,\chi_{\mathrm{Env}}(\vect{v})\,W(\bo v)\,e^{i 2 \pi \big[ P({q}/\sqrt{\pi}+\sqrt{2} v_q )-P({q}/\sqrt{\pi} )\big]}\label{eq: main-text propagated error operator exact}\\
        &\approx\int d \bo v\, \chi_{\mathrm{Env}}(\vect{v})\,W(\bo v)\,e^{i 2 \sqrt{2}\pi v_{q} P'({q}/\sqrt{\pi})},\label{eq: derivative approximation}
\end{align}
\end{subequations}
where the first-order derivative approximation in \cref{eq: derivative approximation} is valid for small $v_{q}$ because $\chi_{\text{Env}}(\vect{v})$ decays exponentially in $v_{q}$, and is made to simplify the calculations.

Next we apply the twirling approximation and calculate the characteristic function of the twirled channel $\mathrm{Twirl}(\mathcal{E})$, see \cref{fig: twirled_chi} for a visual representation of the twirled channel. In \cref{appendix: statistical moment of errors}, we exactly calculate the mean and covariance matrix of the resulting distribution $p_{\mathcal{E}}(\vect{v})$ as a function of $\Delta_{q}$ and $\Delta_{p}$. In particular, the mean $\vect{\mu}$ of the distribution is zero:
\begin{equation}\label{eq: mean scaling}
    \vect{\mu}=\begin{bmatrix}\mathbb{E}(v_{q})\\\mathbb{E}(v_{p})\end{bmatrix}=\vect{0},
\end{equation}
where $\mathbb{E}()$ denotes the expectation value of a variable. Moreover, the terms in the covariance matrix are given by
\begin{subequations}\label{eq: covariance matrix scaling}
\begin{align}
    \Sigma&=\begin{bmatrix}\mathbb{E}(v_{q}^{2})&\mathbb{E}(v_{q}v_{p})\\\mathbb{E}(v_{q}v_{p})&\mathbb{E}(v_{p}^{2})\end{bmatrix},\\
    \mathbb{E}(v_q^2) &= O(\Delta_q^2),\\
         \label{eq: variance vp leading contribution}
     \mathbb{E}(v_p^2) &= O(\Delta_p^2)+O(\Delta_q^2/\Delta_{p}^{2n-4}),\\
     \mathbb{E}(v_{q}v_{p}) &= O(\Delta_q^{2}/\Delta_{p}^{n-2}),
\end{align}
\end{subequations}
where the $O()$ notation is valid in the limit of $\Delta_{q},\Delta_{p}\rightarrow0$. For comparison, the bare envelope operator has $\mathbb{E}(v_q^2) = O(\Delta_q^2)$, $\mathbb{E}(v_p^2) = O(\Delta_p^2)$ and $\mathbb{E}(v_{q}v_{p})=0$, so effect of the logical gate $\Lambda$ is to add the additional $n$-dependent terms in \cref{eq: covariance matrix scaling} are those that arise due to the logical gate $\Lambda$. In the case of no biasing $\Delta_{q}=\Delta_{p}=\Delta$, the variance $\mathbb{E}(v_{p}^{2})$ does not go to 0 in the limit of small $\Delta$ for $n\geq 3$, necessitating the use of noise biasing to achieve arbitrarily low error rates.

At this point, we now wish to chose a noise biasing $\lambda=\Delta_{p}/\Delta_{q}$ that minimizes the components of the covariance matrix $\Sigma$. For our purposes, we will simply choose $\lambda$ to minimize the most problematic component of the covariance matrix, $\mathbb{E}(v_{p}^{2})$. Indeed, choosing the correct $\lambda$ is important because \cref{eq: variance vp leading contribution} contains both a $O(\Delta_{p}^{2})$ term from the envelope that will increase with increased biasing, and a $O(\Delta_{q}^{2}/\Delta_{p}^{2n-4})$ term from the $\Lambda$ gate that will decrease with increased biasing. We illustrate this trade-off visually in \cref{fig: twirled_chi}, which shows how the probability distribution $p_{\mathcal{E}}(\text{v})$ varies with $\lambda$. Letting $\Delta=\sqrt{\Delta_{q}\Delta_{p}}$, it is straightforward to show that the optimal $\lambda$ scales as
\begin{equation}
    \lambda_{\text{opt}}=O(\Delta^{4/n-2}),
\end{equation}
which scales \textit{inversely} with $\Delta$ for $n\geq 3$. With this choice, the remaining terms in the covariance matrix become
\begin{subequations}\label{eq: optimal lambda covariance matrix scaling}
\begin{align}
    \mathbb{E}(v_{q}^{2})&=O(\Delta^{4-4/n}),\\
    \mathbb{E}(v_{p}^{2})&=O(\Delta^{4/n}),\\
    \mathbb{E}(v_{q}v_{p})&=O(\Delta^{2}).
\end{align}
\end{subequations}
Now, with this optimized choice $\lambda_{\rm opt}$, we see that all the components of the covariance matrix go to zero in the limit of small $\Delta$, as desired (see also footnote \footnote{Technically, what we need to ensure is that the \textit{eigenvalues} of the covariance matrix go to zero as $\Delta$ goes to zero, because it is the maximum eigenvalue that determines the maximum ``spread'' of the probability distribution. This is guaranteed, however, because both the trace and determinant of $\Sigma$ are positive and go to zero as $\Delta\rightarrow0$.}). Note that the \textit{quantitative} estimate of the optimal biasing $\lambda_{\text{opt}}$ is not the same as the numerical estimates we obtain later in \cref{sec: numerics}, but should reflect the correct scaling with $\Delta$. 

\cref{eq: optimal lambda covariance matrix scaling} also justifies our choice of the lexicographic ordering in \cref{section: optimized logical gate representations}. In particular, the performance of the polynomial phase gate $\Lambda$ in the limit of small $\Delta$ is governed by the degree $n$ of the polynomial. Moreover, in \cref{appendix: statistical moment of errors} we show that the elements of the covariance matrix also depend on the leading coefficient $|a_{n}|$ of the polynomial. To give a concrete example, the variance $\mathbb{E}(v_{p}^{2})$ is nine times larger for the $T_{\text{GKP}}$ logical gate~\cref{eq: T_GKP polynomial} than for the optimized $T_{3}$ gate~\cref{eq: T3 gate}. Therefore, to improve the asymptotic performance, the most important thing to reduce is the degree of the polynomial $n$, and the magnitude of the leading coefficient $|a_{n}|$. We therefore expect that for two polynomials $P(x)$ and $Q(x)$ that implement the same logical gate, if one polynomial $P(x)$ is less than $Q(x)$ under the lexicographic ordering, then for sufficiently small $\Delta$, the polynomial phase gate of $P(x)$ will outperform that of $Q(x)$. Reducing the magnitude of the non-leading coefficients $\{a_{j}\}_{j=1}^{n-1}$ of $P(x)$ can have an effect for fixed larger values of $\Delta$, but this effect will become small for sufficiently small $\Delta$.

While \cref{eq: optimal lambda covariance matrix scaling} provides strong evidence of the fault-tolerance of polynomial phase gates with an appropriate bias, it falls short of a rigorous proof. Not only did we use the twirling approximation to perform the calculations, but we also only calculated the covariance matrix of the corresponding random displacement channel rather than explicitly calculating the error rate of the gate which also depends on higher statistical moments.

Therefore, in \cref{appendix: proof of fault-tolerance theorem} we also rigorously prove that for the optimized $T_{3}$ gate~\cref{eq: T3 gate}, the ``logical'' average gate fidelity of the error operator (precisely defined in \cref{appendix: proof of fault-tolerance theorem}) goes to one as $\Delta\rightarrow0$. This is mathematically equivalent to saying that, for any target fidelity $F_{\text{targ}}$, there exists a $\Delta_{\text{targ}}$ such that for any $\Delta<\Delta_{\text{targ}}$, there exists a choice of noise biasing $\lambda_{\text{opt}}(\Delta)$ such that the fidelity $F$ of the error operator $E$ is greater than $F_{\text{targ}}$. We expect that similar results could also be proven for higher order polynomial phase gates, but due to the somewhat tedious nature of the proof we leave this to future work.

\section{Numerical Analysis}\label{sec: numerics}

To complement our asymptotic analysis of the performance of the logical gates in \cref{sec: error propagation and fault-tolerant theorem}, here we numerically assess the performance of the logical gates in a more realistic parameter regime. The aim of this section is to provide a proof-of-principle that implementing non-Clifford gates unitarily in the GKP code can have good performance even for realistic values of $\Delta$. To provide such a proof-of-principle, we will use the assumption that the only source of noise is the approximate GKP codestates themselves; we do not attempt to consider all experimentally-relevant sources of noise in our simulations and leave such investigations to future work.

Throughout this section we express the quality of the approximate codestates $\Delta=\sqrt{\Delta_{q}\Delta_{p}}$ in two different ways: first in terms of the average photon number $\bar{n}\approx 1/(2\Delta^{2})-1/2$~\cite{Albert18}, and second in terms of the GKP squeezing $\Delta_{\mathrm{dB}}=-10\log_{10}(\Delta^{2})$; note that both $\bar{n}$ and $\Delta_{\mathrm{dB}}$ increase as $\Delta$ decreases. Also note that the average photon number $\bar{n}$ that we use corresponds to a \textit{non-biased} square (or rectangular) GKP state, a biased square codestate would have a higher average photon number than listed here. For comparison, recent experiments in superconducting qubits used non-biased square codestates with $\Delta\approx0.32$ ($\bar{n}\approx4.4$, $\Delta_{\mathrm{dB}}\approx9.9$)~\cite{brock2025quantum}, while experiments in optics~\cite{Larsen2025} only achieved $\Delta \approx 0.93$ ($\bar{n}\approx0.074$, $\Delta_{\mathrm{dB}}\approx0.6$). The achieved squeezing in superconducting devices of $\Delta=0.32$ is exactly the same as the theoretically predicted threshold of the surface-GKP code when only considering noise from approximate GKP codestates~\cite{Noh2022}; however, the threshold is expected to be lower in the presence of additional experimentally-relevant sources of noise.

In \cref{sec: mid-circuit numerics} we benchmark a variety of polynomial phase gates including those that implement logical gates up to the sixth level of the Clifford hierarchy. Taking $\Delta=0.25$ ($\bar{n}=7.5$, $\Delta_{\mathrm{dB}}=12$) as an example data point, all of the optimized polynomial phase gates achieve infidelities of less than 1\% with moderate asymmetries of $\lambda\leq2$. Moreover, in \cref{sec: state numerics} we find that polynomial phase gates can outperform the vacuum-state method~\cite{Baragiola2019} for preparing $\ket{T}$ magic states, requiring a low postselection rate of approximately 20 \% for the vacuum-state method to match the performance of the minimal cubic phase gate at the same $\Delta=0.25$ data point. This comparison also underestimates the noise in the vacuum-state method: it ignores the noise that would arise from Clifford gates in the $\ket{T}$ state injection circuit itself, which could be avoided in the polynomial phase gate method by directly applying a ${T}$ gate on the desired GKP qubit.

\subsection{On-Demand Polynomial Phase Gates}\label{sec: mid-circuit numerics}

We are interested in benchmarking the performance of polynomial phase gates in two contexts, the first of which is ``on-demand'' at an arbitrary location in an algorithm. Specifically, we want to model polynomial phase gates (for simplicity here represented by a $T$ gate) in the square GKP code that are preceded by a round of on-demand biasing using the $q$-Steane QEC circuit
\begin{equation}\tag{\ref{eq: T gate with on-demand biasing}}
    \begin{quantikz}
        \lstick{$\cdots$}&\gate{\substack{q\text{-Steane}\\\text{QEC}}}\gategroup[1,steps=1,style={dashed,rounded corners,fill=blue!20,inner sep =0 pt},background,label style={label position=above,anchor=south,yshift=-0.2cm}]{$\substack{\text{On-demand}\\\text{biasing}}$}&\gate{T}&\gate{\substack{\text{Passive}\\\text{Knill}\\\text{QEC}}}&\rstick{$\cdots$}
    \end{quantikz},
\end{equation}
as explained in \cref{section: mid-circuit noise biasing protocol}. We assume that noise arises in this circuit only due to the envelope operator, which appears in \cref{eq: T gate with on-demand biasing} in the $q$-Steane and passive Knill QEC circuits~\cref{eq: q-Steane circuit,eq: passive Knill} but not in the polynomial phase gate itself. This noise model therefore captures how the noise from the envelope operator is propagated through logical gates, but does not explicitly include some other experimentally-relevant noise sources such as loss, dephasing, and imperfect gate implementation.

Intuitively, when performing a round of QEC with approximate GKP ancilla codestates, the approximate GKP codestates introduce noise in two ways. First, it causes noise on the stabilizer measurement outcomes (``measurement'' noise), and second, it adds envelope noise to the output state (``incoming'' noise). In fact, this intuition can be justified theoretically: if one applies the twirling approximation to the envelope operator to obtain a Gaussian random displacement channel, then the noise from approximate codestates in the passive-Knill circuit \cref{eq: passive Knill} exactly corresponds to an ideal QEC cycle preceded \textit{and} followed by two independent Gaussian random displacement channels representing the measurement error and envelope noise discussed above~\cite{Rozpedek23,shaw2024logical}. A similar result holds for the $q$-Steane QEC circuit but with biased noise following the gate. Our noise model is therefore analogous to the ``phenomenological'' noise model commonly used in the DV QEC literature. With this, \cref{eq: T gate with on-demand biasing} becomes
\begin{equation}\label{eq: circuit C in context}
    \begin{quantikz}
        \lstick{${\cdot}{\cdot}{\cdot}\!$}&[-0.45cm]\gate{\substack{\text{non-}\\\text{biased}\\\text{noise}}}&[-0.35cm]\gate{\substack{\text{Ideal}\\\text{QEC}}}\gategroup[1,steps=5,style={dashed,rounded corners,fill=red!20,inner sep=-1.5 pt},background,label style={label position=above,anchor=south,yshift=-0.2cm}]{$\mathcal{C}$}&[-0.35cm]\gate{\substack{\text{biased}\\\text{noise}}}&[-0.2 cm]\gate{T}&[-0.2 cm]\gate{\substack{\text{non-}\\\text{biased}\\\text{noise}}}&[-0.35cm]\gate{\substack{\text{Ideal}\\\text{QEC}}}&[-0.35cm]\gate{\substack{\text{non-}\\\text{biased}\\\text{noise}}}&[-0.45cm]\rstick{$\!{\cdot}{\cdot}{\cdot}$}
    \end{quantikz}
\end{equation}
see \cref{section: mid-circuit noise biasing protocol} for a discussion of how this could be modified if one wished to simulate the $T$ gate using on-demand morphing instead of on-demand biasing.

Now, to capture the performance of the polynomial phase gate we will isolate only the part of the circuit that is ``in between'' the two ideal QEC circuits, labeled $\mathcal{C}$ in \cref{eq: circuit C in context}. This is because the noise outside of this part of the circuit can be considered as the noise due to the other logical gates that come before and after the $T$ gate. Next, we define the \textit{effective logical channel} $\mathcal{E}$ that maps a \textit{qubit} input state to a \textit{qubit} output state and captures the performance of the circuit $\mathcal{C}$:
\begin{equation}
\label{eq: logical channel equation}
\raisebox{-.43\height}{\includegraphics{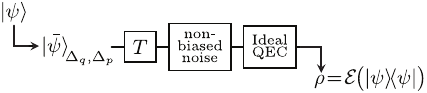}}
\end{equation}
In words, the effective logical channel $\mathcal{E}$ has the following steps.
\begin{enumerate}
    \item First we take as input an arbitrary qubit state $\ket{\psi}$ and encode it into the ideal square GKP code $\ket{\bar{\psi}}$ (this can be viewed as the output of the initial ideal QEC round in \cref{eq: circuit C in context});
    \item we apply the biased envelope operator that prepares the state $\ket{\bar{\psi}}_{\Delta_{q},\Delta_{p}}$ (the ``incoming'' noise);
    \item we apply the logical gate ${T}$;
    \item we apply a non-biased Gaussian random displacement channel to model the measurement noise in the passive Knill QEC circuit; and finally,
    \item we apply a round of ideal QEC, which returns us to the ideal GKP codespace, and read-out the resulting logical qubit density matrix $\rho$.
\end{enumerate}

We can then use the average gate fidelity to quantify the quality of the logical operation, given by~\cite{NIELSEN2002}
\begin{equation}\label{eq: average gate fidelity}
F(\mathcal{E},U)=\int d\psi\bra{\psi}U^{\dag}\mathcal{E}(\ket{\psi}\!\bra{\psi})U\ket{\psi};
\end{equation}
we will also refer to $1-F(\mathcal{E},U)$ as the average gate \textit{in}fidelity.
Some subtleties (such as the orthonormalization of the GKP codewords) have been omitted in the above description for the sake of brevity, see \cref{sec: numerical methods} for more details as well as a description of our numerical methods, which are a truncated Fock space simulation of the channel $\mathcal{E}$.

\begin{table}
    \centering
    \begin{tabular}{c|c}
         CV Gate & Polynomial $P(x)$\\
         \hline
         ${I}$& 0\\
         ${T}_{3}$& $x^{3}/12+x^{2}/8-x/12$\\
         ${T}_\text{GKP}$& $x^{3}/4+x^{2}/8-x/4$\\
         ${T}_4$& $-x^{4}/24+x^{2}/6$\\
         ${\sqrt{T}}$& $-x^{4}/48+x^{2}/12$\\
         $T^{1/4}$& $x^{5}/240-x^{4}/96-x^{3}/48+x^{2}/24+x/60$\\
         $T^{1/8}$& $x^{6}/1440-5x^{4}/576+17x^{2}/720$\\
    \end{tabular}
    \caption{Polynomials used to construct the polynomial phase gates. In particular, each CV gate is implemented using the unitary $\mathrm{exp}\big(2\pi iP({q}/\sqrt{\pi})\big)$. Note that all of the gates labelled ${T}$ implement a logical $T$ gate using a different polynomial, each of these are equivalent up to polynomial stabilisers.}
    \label{tab: simulation polynomials}
\end{table}

The gates and polynomials that we simulated are listed in \cref{tab: simulation polynomials}. For each CV gate, we calculated the average gate fidelity of the corresponding effective logical channel $\mathcal{E}$ as a function of the asymmetry $\lambda$ and the strength $\Delta$ of the envelope operator. We then took the optimal asymmetry $\lambda_{\mathrm{opt}}$ for each value of $\Delta$, and plot the average gate fidelity with the optimized asymmetry.

\begin{figure}
    \includegraphics[]{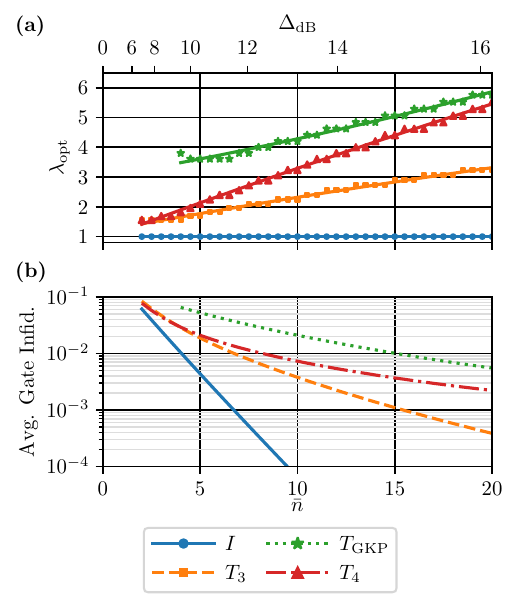}
    \caption{Performance of ${T}$ gates generated by different polynomials. (a) The optimal asymmetry for each gate. We sampled asymmetry values in finite increments, resulting in the optimal asymmetry data points which are fitted with a polynomial fit. Note that for the ${T}_{\text{GKP}}$ gate, for $\bar{n}<4$ no optimal value was found within the sampled range of $\lambda$ (up to 6.5). (b) The average gate infidelity of each gate when using states prepared with the optimal asymmetry.}\label{fig:T_gates}
\end{figure}

\begin{figure}
    \includegraphics[]{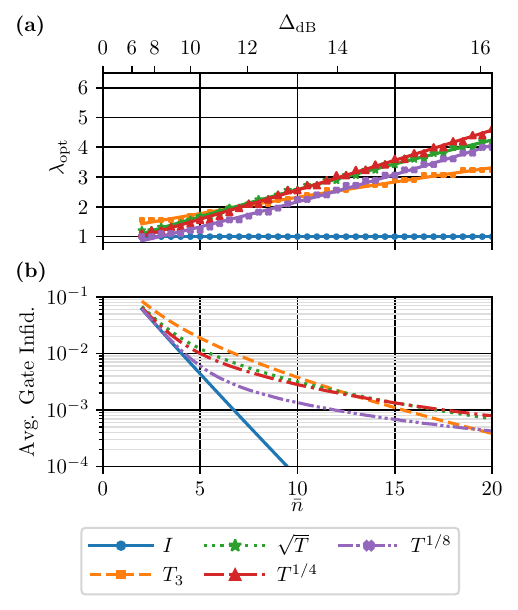}
    \caption{Performance of ${\Lambda}_{m}$ gates. (a) and (b) represent the same quantities as in \cref{fig:T_gates}.}\label{fig:Lambda_gates}
\end{figure}

The results are shown in \cref{fig:T_gates,fig:Lambda_gates}. The plots labeled with ${I}$ represent idling, where the only noise comes from the incoming envelope operator and the noisy syndrome extraction, and serves as a comparison for the other non-trivial logical gates. In \cref{fig:T_gates} we plot three different implementations of the $T$ gate: $T_{\text{GKP}}$ from the original GKP proposal~\cite{GKP2001}, $T_3$ which optimizes the cubic polynomial required to implement a $T$ gate, and $T_{4}$ whose polynomial is even but quartic. Meanwhile, in \cref{fig:Lambda_gates} we plot optimized implementations of the ${\Lambda}_m$ gates for $m=3,4,5,6$, corresponding to logical Bloch sphere rotations around the $Z$-axis by angles of $\pi/4,\pi/8,\pi/16$ and $\pi/32$ respectively. For all of the curves, achieving the precise optimal asymmetry is not necessary to achieve nearly optimal gate fidelities, as we show in \cref{fig:all_gates,fig:all_states} in \cref{sec: numerical methods}. 

We can see that $T_{3}$ outperforms both the original $T_{\text{GKP}}$ and the $T_{4}$ gate across all values of $\Delta$, as expected. Surprisingly however, when compared with phase gates in higher levels of the Clifford hierarchy, the $T_{3}$ does not have the best performance for lower values of $\Delta$, with the ${\sqrt{T}}$ and $T^{1/4}$ gates outperforming $T_3$ for $\Delta>0.2$ ($\bar{n}<12$, $\Delta_{\mathrm{dB}}<14$) and $T^{1/8}$ outperforming $T_3$ for $\Delta>0.16$ ($\bar{n}<19$, $\Delta_{\mathrm{dB}}<16$).

This performance can be explained by two factors: first, because the coefficients in the polynomials are smaller for these higher degree polynomials, at lower $\bar{n}$ (higher $\Delta$) these higher degree polynomial phase gates may spread errors less than the $T_{3}$ gate. Despite this, it is worth emphasizing that the $T_{3}$ gate is guaranteed to perform best asymptotically because it only contains up to cubic terms while the remaining gates have higher degree polynomials, as we showed in \cref{sec: error propagation and fault-tolerant theorem}.

Second, however, it is important to note that the gates representing smaller angle rotations in the Bloch sphere also tend to inherently perform better under the average gate fidelity metric. As a back-of-the-envelope calculation, consider the average gate infidelity between the identity gate and a $\theta$ $Z$-axis rotation in the Bloch sphere, which is given explicitly by
\begin{equation}
    1-F(I,e^{i\theta Z/2})=(1-\cos\theta)/3.
\end{equation}
For the $T$-gate, $\theta=\pi/4$ and $1-F(I,e^{i\theta Z/2})\approx 10\%$, indicating that if the infidelity of the \textit{logical} ${T}$ gate is greater than 10\%, it would be better to simply not perform the $T$ gate at all. In \cref{fig:T_gates,fig:Lambda_gates} we see that the ${T}$ gates have lower than 10 \% infidelity for all the sampled values of $\Delta$ and therefore are achieving non-trivial performance. However, for the $T^{1/8}$ gate we have $\theta=\pi/32$ and $1-F(I,e^{i\theta Z/2})\approx 1.6\times 10^{-3}$, which means that only at $\Delta < 0.24$ ($\bar{n}>8$, $\Delta_{\mathrm{dB}}>12.4$) is the $T^{1/8}$ gate achieving non-trivial performance (see \cref{sec: numerical methods} for more analysis of this). Nevertheless, we expect that directly implementing an $T^{1/8}$ would outperform a compilation of the $T^{1/8}$ gate into logical $T$ and $H$ gates in this regime.

\subsection{Polynomial Phase Gates For Magic State Preparation}\label{sec: state numerics}

\begin{figure}
    \includegraphics[]{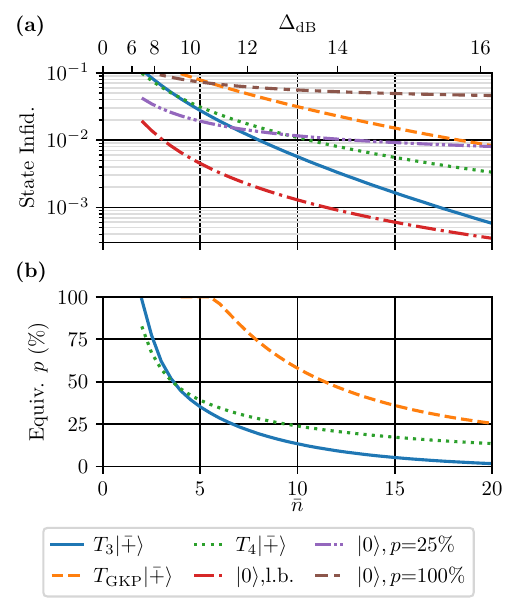}
    \caption{Performance of $T$ gates in preparing magic states. (a) The state infidelity of $\ket{{T}}$ states prepared by applying a logical phase gate to an encoded $\ket{\bar{+}}$ state, compared to those obtained via vacuum state injection using various postselection probabilities $p$~\cite{Baragiola2019}. ``l.b.'' here stands for ``lower bound'' and corresponds to projecting the vacuum state $\ket{\mathrm{vac}}$ onto the ideal GKP codespace, which occurs with probability zero, and therefore lower-bounds the infidelity of any postselection scheme obtained by vacuum state injection. (b) The postselection probability $p$ that is required for the vacuum state injection scheme to match each unitary $T$ gate scheme.}\label{fig:T_states}
\end{figure}

The second context that we are interested in is using the polynomial phase gates as a method for preparing magic states. This is useful in part because it allows us to directly compare to existing work to prepare encoded $\ket{{T}}=T\ket{\bar{+}}$ states~\cite{Baragiola2019}. This latter method---which we refer to as the vacuum state method---consists of preparing a vacuum state $\ket{\mathrm{vac}}$ and performing a single round of GKP QEC. Remarkably, this procedure has a high chance of projecting into a logical GKP codestate that is close to a Hadamard eigenstate, or a magic state that is Clifford equivalent to a Hadamard eigentstate. Therefore, after a Clifford correction dependent on the stabilizer outcome, one can prepare a noisy encoded logical $\ket{{T}}$ state (which is itself Clifford equivalent to a Hadamard eigenstate).

We use the state fidelity of the produced state to quantify the quality of the states prepared by each method. For the polynomial phase gate method, we consider the three polynomial phase gates from \cref{tab: simulation polynomials} that implement a $T$ gate, using the effective logical channel $\mathcal{E}$ from \cref{sec: mid-circuit numerics} applied to the $\ket{+}$ state and taking the state fidelity between the output state and $\ket{T}$. Meanwhile for the vacuum state method we follow the numerical analysis of Ref.~\cite{Baragiola2019} modified to add measurement noise to the round of QEC. We assume that the Clifford correction is noiseless since it could be considered a part of the $T$-injection gadget. We also allow for postselection based on the stabilizer measurement outcomes in the vacuum state method.

In \cref{fig:T_states}(a) we show the state infidelity achieved using each of the three polynomial phase gates and using the vacuum state method. For the vacuum state method we plot three different postselection fractions: no postselection ($p=100\%$), postselecting on the best 25\% of stabilizer outcomes, and postselecting on obtaining $+1$ stabilizer measurement outcomes for both measurements. The latter curve serves as a tight lower bound on the state infidelity that can be achieved by the vacuum state method, since it corresponds to the limit as the postselection probability goes to 0. Meanwhile, in \cref{fig:T_states}(b), we take the state infidelity achieved by each of the polynomial phase gates and calculate the postselection fraction required in the vacuum state method to achieve the same infidelity. From both these plots we see that the ${T}_{3}$ method performs well compared to the vacuum state method, requiring a post-selection fraction of less than 20\% for all values of $\Delta<0.25$ ($\bar{n}>7.5$, $\Delta_{\text{dB}}>12$). It is also possible that at higher $\bar{n}$ the $T_{3}$ method achieves a lower infidelity than even the vacuum state method lower bound, although this is not clear from the collected data. 

It is worth noting that the comparison shown in \cref{fig:T_states} if anything underestimates the performance of the polynomial phase gate method when compared to the vacuum state method. This is because the polynomial phase gate can be implemented directly whenever a $T$ gate appears in a circuit, rather than needing to be used in a magic state injection step. With this in mind, the vacuum state method would also suffer from noise occurring in the injection step, for example from the CX gate and the conditional Clifford correction. On the other hand, if one really wants to use a magic state injection step even with access to a polynomial phase gate, we would gain the ability to postselect on stabiliser measurement outcomes, which would improve the fidelity further. It is therefore reasonable to expect that with these factors taken into account, the polynomial phase gate method outperforms the vacuum state method by a larger degree than shown in \cref{fig:T_states}.

\section{Discussion and Outlook}\label{sec: discussion}

In this manuscript, we have presented a scheme for implementing fault-tolerant non-Clifford gates in the GKP code using polynomial phase gates. The key insight is that on-demand noise biasing can be achieved within the standard Steane QEC circuit by measuring only one of the two stabilizers (\cref{section: biased syndrome measurement circuit}). The cost of performing on-demand biasing is, therefore, the same as the cost of performing a round of QEC. The scheme relies on noise-biased ancilla states, which we show can be prepared directly from non-biased approximate square GKP states. Furthermore, we prove that the cubic phase gate $T_3$, when implemented with the on-demand biasing circuit, is fault-tolerant (\cref{sec: error propagation and fault-tolerant theorem}).

Beyond the cubic phase gate, we develop an analytical framework for studying and optimizing general polynomial phase gates, including those at higher levels of the Clifford hierarchy (\cref{section: optimized logical gate representations}). Central to this framework is the concept of polynomial phase stabilizers, which enable the identification of minimal polynomial phase gate representations through a coefficient reduction procedure. Using this approach, we identify ${T}_3$ as the minimal representation of the logical $ T$ gate, and we also find minimal representations for more exotic gates such as the ${T}^{1/2}$ or ${H}^{1/2}$ gate.

We numerically test our results using Fock-state simulations. We find the the ${T}_3$ gate significantly outperforms the originally proposed ${T}_{\rm GKP}$ gate \cite{GKP2001} and compares favorably to the vacuum-state method \cite{Baragiola2019} even with low post-selection rates. We also investigate minimal polynomial phase representations for logical gates such as the ${T}^{1/2}$, ${T}^{1/4}$, and ${T}^{1/8}$ gate. Surprisingly, these optimized polynomial phase gates can achieve logical gate fidelity comparable to or even exceeding that of the cubic-phase ${T}_3$ gate in some regimes. 

There are several interesting questions that we have not addressed in this manuscript. A natural next step is to incorporate more realistic noise models into our numerical analysis, including experimentally relevant sources such as photon loss, dephasing, and gate imperfections. It would also be interesting to compare polynomial phase gates and vacuum-state methods for implementing ${T}$ gates in the context of a full ${T}$ gate injection. Another interesting direction is to numerically investigate the performance of multi-qubit gates such as the ${CS}$ and ${CCZ}$ gates. While we present their minimal polynomial representations in Appendix~\ref{appendix: optimal representation of multi-qubit gates}, and we expect on-demand biasing would improve multi-qubit gate fidelities like it does in the single-qubit ${T}$ gate, we leave it to future work to confirm this. Part of the challenge here is that our numerical simulations used Fock space simulations at high truncation dimensions (roughly 1000), which is infeasible for multiple modes.

Finally, an open question is whether a multi-mode GKP code exists in which implementing the logical ${T}$ gate is easier and induces less errors. A trivial solution to this is to concatenate the GKP code with a DV code that has a transversal ${T}$ gate (which is equivalent to defining a multi-mode GKP code with the same stabilizers), however moving beyond such a concatenated framework would be interesting. If such a code exists, one could envision performing code switching to this code for the ${T}$ gate and switching back to the square GKP code for standard Clifford operations.

\section{Acknowledgments}
This work is supported by QuTech NWO funding 2020-2024 – Part I “Fundamental Research”, project number 601.QT.001-1, financed by the Dutch Research Council (NWO). 

We acknowledge the use of the DelftBlue supercomputer for running the numerics.
We thank Barbara M.~Terhal for her support throughout the project and for providing feedback on a draft manuscript. We thank Stefano Bosco for discussions on efficient numerical generation of approximate GKP codewords and for providing feedback on the draft manuscript.

\newpage
\appendix

\section{On-demand morphing circuits}\label{appendix: mid-circuit morphing}

In the main text, we showed how to perform on-demand biasing by using a $q$-Steane QEC circuit in two ways in \cref{eq: biased noise q-Steane,eq: biased noise q-Steane with rect ancilla}. In this appendix, we show how to perform on-demand \textit{code morphing} from a non-biased square GKP codestate to a non-biased rectangular GKP code. This code morphing method has the advantage of using only passive Gaussian unitaries, while still enabling improved performance under polynomial phase gates due to the equivalence between non-biased rectangular and biased square GKP codestates in~\cref{eq: relation between biased square GKP to non-biased rectangular GKP}.
We follow a similar structure here as in the main text, beginning by how to construct an on-demand morphing circuit that introduces a \textit{fixed} amount of asymmetry, before generalizing to a circuit that can generate arbitrary amounts of asymmetry.

We start with the $q$-Steane circuit rewritten in terms of a beam-splitter and squeezing operators:
\begin{equation}\tag{\ref{eq: q-Steane rewritten}}
\begin{quantikz}
     \lstick{$\ket{\bar \psi}_{\Delta}$} &  \gate[2]{ \rm BS(\frac{\pi}{4}) } & \gate{\text{Sq}^{\dg}(\sqrt{2}) }   &   \\
     \lstick{$\ket{\bar +}_{\Delta}$} &                                & \gate{\text{Sq}^{\dg}(\frac{1}{\sqrt{2}})} & \meter{{q}}
\end{quantikz},
\end{equation}
where the output (after the correction) is a biased square codestate with a fixed amount of bias $\ket{\bar{\psi}}_{\Delta/\sqrt{2},\sqrt{2}\Delta}$.
Now, if we remove the squeezing operators, the output state will be
\begin{equation}
    {\text{Sq}}(\sqrt{2})\ket{\bar{\psi}}_{\Delta/\sqrt{2},\sqrt{2}\Delta}=\ket{\bar{\psi}}_{\Delta}^{\lambda=2}
\end{equation}
by \cref{eq: relation between biased square GKP to non-biased rectangular GKP}. Therefore, the circuit
\begin{equation}\label{eq: mid-circuit morphing circuit}
\begin{quantikz}
     \lstick{$\ket{\bar \psi}_{\Delta}$} &  \gate[2]{ \rm BS(\frac{\pi}{4}) }&\\
     \lstick{$\ket{\bar +}_{\Delta}$}&& \meter{{q}}
\end{quantikz}
\end{equation}
is a on-demand morphing circuit from the square to the rectangular GKP code with $\lambda=2$. Note that, compared to \cref{eq: q-Steane rewritten}, we will need to rescale the measurement outcome to account for the squeezing operator that we removed. Moreover, from \cref{eq: BS envelope identity}, the envelope operator is preserved through \cref{eq: mid-circuit morphing circuit}, and therefore both the input and output states will be non-biased.

So far, we have only been able to morph from the square to the $\lambda=2$ rectangular GKP code; however, we wish to be able to morph to rectangular GKP codes with arbitrary $\lambda$. To derive such a circuit, we begin with the (arbitrary-$\lambda$) on-demand biasing circuit from the main text
\begin{equation}\tag{\ref{eq: biased noise q-Steane with rect ancilla}}
    \begin{quantikz}
        \ket{\bar \psi}_{\Delta} & \gate[2]{ e^{-i\sqrt{\lambda}{q}_{1}{p}_{2}}}  &  \\
       \ket{ \bar +}_{\Delta}^{\lambda} &   & \meter{{q}}  
    \end{quantikz},
\end{equation}
which is written in the form such that it uses a non-biased rectangular ancilla codestate.

Now, we wish to perform the same rearrangement as before by rewriting this circuit in terms of a beam-splitter and squeezing operations. In particular, we wish to find $\xi,\theta,\alpha_{1}$ and $\alpha_{2}$ such that:
\begin{multline}\label{eq: desired rearrangement}
    \begin{quantikz}
        \lstick{$\ket{\bar \psi}_{\Delta }$} & \gate[2]{ e^{-i\sqrt{\lambda}{q}_{1}{p}_{2}}} &  \gate[2]{ e^{i\xi{p}_{1}{q}_{2}}} &  \\
       \lstick{$\ket{ \bar +}_{\Delta}^{\lambda}$}  & & & \meter{{q}}  
    \end{quantikz}\\
    =\begin{quantikz}
     \lstick{$\ket{\bar \psi}_{\Delta}$} &  \gate[2]{ \rm BS (\theta)}  & \gate{\text{Sq}(\alpha_{1}) }   & \\
     \lstick{$\ket{\bar +}_{\Delta}^{\lambda}$} &                                & \gate{\text{Sq}(\alpha_{2})} & \meter{{q}}.
\end{quantikz}
\end{multline}
As before, we are allowed to freely add the $e^{i\xi p_{1}q_{2}}$ term to \cref{eq: biased noise q-Steane with rect ancilla} because it can be compensated for by a displacement correction $e^{-i\xi p_{1}q_{\text{m}}}$ based on the measurement outcome $q_{\text{m}}$.

One can solve this equation using the formalism of Gaussian operators. Given a two-mode Gaussian unitary operator ${U}$, we can always define a $4\times 4$ matrix $S$ such that
\begin{equation}
    S\begin{bmatrix}{q}_{1}\\{q}_{2}\\{p}_{1}\\{p}_{2}\end{bmatrix}=\begin{bmatrix}{U}^{\dag}{q}_{1}{U}\\{U}^{\dag}{q}_{2}{U}\\{U}^{\dag}{p}_{1}{U}\\{U}^{\dag}{p}_{2}{U}\end{bmatrix}.
\end{equation}
Moreover, given an arbitrary $4\times4$ matrix $S$ that satisfies the relation
\begin{align}
    S^{T}\Omega S&=\Omega,&\Omega=\begin{bmatrix}0&\mathrm{Id}\\-\mathrm{Id}&0\end{bmatrix},
\end{align}
(that is, $S$ is a \textit{symplectic} matrix) then one can uniquely define a corresponding Gaussian unitary operator~\cite{Weedbrook12}. With this, the corresponding $S$ matrices for the left- and right-hand sides of \cref{eq: desired rearrangement} are equal if and only if the unitaries themselves are equal. The corresponding matrices for the left- and right-hand sides respectively are given by
\begin{subequations}
\begin{align}
    S_{\text{LHS}}&=\begin{bmatrix}1-\xi\sqrt{\lambda}&-\xi&0&0\\\sqrt{\lambda}&1&0&0\\0&0&1&-\sqrt{\lambda}\\0&0&\xi&1-\xi\sqrt{\lambda}\end{bmatrix}\\
    S_{\text{RHS}}&=\begin{bmatrix}\cos(\theta)/\alpha_{1}&-\sin(\theta)/\alpha_{1}&0&0\\\sin(\theta)/\alpha_{2}&\cos(\theta)/\alpha_{2}&0&0\\0&0&\alpha_{1}\cos(\theta)&-\alpha_{1}\sin(\theta)\\0&0&\alpha_{2}\sin(\theta)&\alpha_{2}\cos(\theta)\end{bmatrix}
\end{align}
\end{subequations}
From here, it is straightforward to see that the solution is given by
\begin{subequations}
\begin{align}
    \xi&=\frac{\sqrt{\lambda}}{\lambda+1},&\theta&=\mathrm{arctan}(\sqrt{\lambda}),\\
    \alpha_{1}&=\sqrt{\lambda+1},&\alpha_{2}&=\frac{1}{\sqrt{\lambda+1}}.
\end{align}
\end{subequations}
Therefore, the circuit
\begin{equation}\label{eq: mid-circuit arbitrary morphing circuit}
\begin{quantikz}
     \lstick{$\ket{\bar \psi}_{\Delta}$} &  \gate[2]{ \rm BS\big(\mathrm{arctan}(\sqrt{\lambda})\big) } & \\
     \lstick{$\ket{\bar +}_{\Delta}^{\lambda}$}&& \meter{{q}}
\end{quantikz}
\end{equation}
is a on-demand morphing circuit from the square to the $\lambda+1$ rectangular GKP code, with output state (after the correction) given explicitly by $\ket{\bar{\psi}}_{\Delta}^{\lambda+1}$.

Having explained how the on-demand morphing circuit works, we now briefly summarize the difference between the on-demand biasing and on-demand morphing methods for performing a polynomial phase gate associated with the polynomial $P(x)$. In the on-demand biasing method we follow the steps outlined in~\cref{eq: T gate with on-demand biasing}, first performing the on-demand biasing circuit~\cref{eq: biased noise q-Steane} or \eqref{eq: biased noise q-Steane with rect ancilla}, followed by the square GKP polynomial phase gate $\mathrm{exp}\big(i2\pi P({q}/\sqrt{\pi})\big)$, followed by the passive Knill error correction circuit~\cref{eq: passive Knill} to return to the non-biased codespace. In the on-demand morphing method, we first perform the on-demand morphing circuit~\cref{eq: mid-circuit arbitrary morphing circuit} and then perform the \textit{rectangular} GKP polynomial phase gate $\mathrm{exp}\big(i2\pi P({q}/\sqrt{\lambda\pi})\big)$. At this point, however, we are in a rectangular GKP code and therefore cannot simply perform the passive Knill error correction circuit because this would keep us in the rectangular codespace instead of morphing us back to the square one.

Therefore, we need to perform a morphing circuit based on the $p$-Steane QEC circuit that returns us to the square GKP code. The derivation of this circuit is much the same as for \cref{eq: mid-circuit arbitrary morphing circuit}, but one does need to take care that the $p$-Steane QEC circuit that one begins with is for the rectangular GKP code instead of the square GKP code. With this, the morphing circuit is given explicitly by
\begin{equation}\label{eq: mid-circuit reverse morphing circuit}
\begin{quantikz}
     \lstick{$\ket{\bar \psi}_{\Delta}$} &  \gate[2]{ {\rm BS}\big(\mathrm{arctan}(\sqrt{\lambda-1})\big) } & \\
     \lstick{$\ket{\bar 0}_{\Delta}^{\lambda'}$}&& \meter{{q}}
\end{quantikz},
\end{equation}
where $\lambda'=(1-1/\lambda)^{-1}$. In the full circuit, this looks like
\begin{equation}
    \begin{quantikz}
        \lstick{$\cdots$}&\gate{\substack{\text{Passive}\\q\text{-Steane}\\\text{QEC}}}\gategroup[1,steps=1,style={dashed,rounded corners,fill=green!20,inner sep =0 pt},background,label style={label position=above,anchor=south,yshift=-0.2cm}]{$\substack{\text{On-demand}\\\text{morphing}}$}&\gate{T^{(\lambda)}}&\gate{\substack{\text{Passive}\\p\text{-Steane}\\\text{QEC}}}\gategroup[1,steps=1,style={dashed,rounded corners,fill=green!20,inner sep =0 pt},background,label style={label position=above,anchor=south,yshift=-0.2cm}]{$\substack{\text{On-demand}\\\text{morphing}}$}&\rstick{$\cdots$}\label{eq: T gate with on-demand morphing}
    \end{quantikz}.
\end{equation}

On-demand morphing has two potential advantages over on-demand biasing. First, \cref{eq: mid-circuit arbitrary morphing circuit,eq: mid-circuit reverse morphing circuit} use only passive Gaussian unitary operators (beam-splitters) instead of logical $\text{CX}$ gates that are not energy-preserving, which may simplify experimental implementations. And second, because the resulting rectangular states are non-biased, their average photon number is lower than the corresponding biased square codestates. This means that loss and dephasing may affect the non-biased rectangular states less severely than the biased square codestates.

In \cref{sec: numerics}, our simulations were based on the on-demand biasing method~\cref{eq: T gate with on-demand biasing} instead of the on-demand morphing method~\cref{eq: T gate with on-demand morphing}. In particular, we simulated the circuit $\mathcal{C}$ given by
\begin{equation}\tag{\ref{eq: circuit C in context}}
    \begin{quantikz}
        \lstick{${\cdot}{\cdot}{\cdot}\!$}&[-0.45cm]\gate{\substack{\text{non-}\\\text{biased}\\\text{noise}}}&[-0.35cm]\gate{\substack{\text{Ideal}\\\text{QEC}}}\gategroup[1,steps=5,style={dashed,rounded corners,fill=red!20,inner sep=-1.5 pt},background,label style={label position=above,anchor=south,yshift=-0.2cm}]{$\mathcal{C}$}&[-0.35cm]\gate{\substack{\text{biased}\\\text{noise}}}&[-0.2 cm]\gate{T}&[-0.2 cm]\gate{\substack{\text{non-}\\\text{biased}\\\text{noise}}}&[-0.35cm]\gate{\substack{\text{Ideal}\\\text{QEC}}}&[-0.35cm]\gate{\substack{\text{non-}\\\text{biased}\\\text{noise}}}&[-0.45cm]\rstick{$\!{\cdot}{\cdot}{\cdot}$}
    \end{quantikz},
\end{equation}
where the specific choices of biased and non-biased noise are motivated by the QEC circuits used in the on-demand biasing method~\cref{eq: T gate with on-demand biasing}. To adapt this circuit for on-demand morphing~\cref{eq: T gate with on-demand morphing}, we would do two things. First, replace the biased and non-biased boxes to match the noise that arises from the QEC circuits used in \cref{eq: T gate with on-demand morphing}, and second, use \cref{eq: relation between biased square GKP to non-biased rectangular GKP} (change the noise bias as appropriate) to find an equivalent circuit that acts in the square GKP code, giving
\begin{equation}
    \begin{quantikz}
        \lstick{${\cdot}{\cdot}{\cdot}\!$}&[-0.45cm]\gate{\substack{\text{non-}\\\text{biased}\\\text{noise}}}&[-0.35cm]\gate{\substack{\text{Ideal}\\\text{QEC}}}\gategroup[1,steps=5,style={dashed,rounded corners,fill=red!20,inner sep=-1.5 pt},background,label style={label position=above,anchor=south,yshift=-0.2cm}]{$\mathcal{C}'$}&[-0.35cm]\gate{\substack{\text{biased}\\\text{noise}}}&[-0.2 cm]\gate{T}&[-0.2 cm]\gate{\substack{\text{biased}\\\text{noise}}}&[-0.35cm]\gate{\substack{\text{Ideal}\\\text{QEC}}}&[-0.35cm]\gate{\substack{\text{non-}\\\text{biased}\\\text{noise}}}&[-0.45cm]\rstick{$\!{\cdot}{\cdot}{\cdot}$}
    \end{quantikz}.
\end{equation}
Simulating $\mathcal{C}'$ would therefore give quantitatively different results to $\mathcal{C}$, but should give \textit{qualitatively} similar results because the noise coming from the final QEC round does not get spread by the $T$ gate.

\section{Passive breeding circuit for non-biased noise rectangular GKP qubit states}
\label{appendix: passive biasing circuit}

In all of the on-demand biasing and morphing circuits with $\lambda>2$ in \cref{eq: biased noise q-Steane,eq: biased noise q-Steane with rect ancilla,eq: mid-circuit arbitrary morphing circuit}, we have made use of either a biased square or non-biased rectangular GKP ancilla state. However we have not explained how one can produce such a state in the first place. While it may be possible to directly produce such an ancilla state by modifying existing state generation schemes \cite{Le2019,Rymarz2021,hastrup2021measurement,kolesnikow2025protected,Weigand2016}, here we instead show how to obtain a non-biased rectangular codestate iteratively from only non-biased square GKP codestates. If desired, the non-biased rectangular codestate could then be converted to a biased square codestate using a squeezing unitary~\cref{eq: relation between biased square GKP to non-biased rectangular GKP}.

The circuit to breed such a state is conceptually quite similar to the on-demand morphing circuit~\cref{eq: mid-circuit arbitrary morphing circuit}, but here we have slightly different requirements that need to be satisfied. In particular, we require that each input ancilla state is a non-biased square GKP codestate $\ket{\bar{\psi}}_{\Delta}$, but each input \textit{data} state is a non-biased rectangular GKP codestate $\ket{\bar{\psi}}_{\Delta}^{\lambda}$. We then want to output a non-biased rectangular GKP codestate with an increased $\lambda$, in particular this will end up being $\ket{\bar{\psi}}_{\Delta}^{\lambda+1}$.

To construct this circuit we begin with the $q$-Steane QEC circuit for the rectangular GKP code
\begin{equation}
    \label{eq: rectangular q-Steane circuit with rectangular ancilla}
    \begin{quantikz}
      \lstick{$\ket{\bar{\psi}}_{\Delta}^{\lambda}$}&  \gate[2]{ e^{-i {q}_1  {p}_2}} &   \\
      \lstick{$\ket{\bar+}_{\Delta}^{\lambda}$}  &    & \meter{q}
\end{quantikz}.
\end{equation}
We can turn the input rectangular ancilla codestate into a square codestate using \cref{eq: relation between biased square GKP to non-biased rectangular GKP}
\begin{equation}
    \label{eq: rectangular q-Steane circuit with square ancilla}
    \begin{quantikz}
      \lstick{$\ket{\bar{\psi}}_{\Delta}^{\lambda}$}&  \gate[2]{ e^{-i \frac{{q}_1  {p}_2}{\sqrt{\lambda}} } } &   \\
      \lstick{$\ket{\bar+}_{\Delta}$}  &    & \meter{q}
\end{quantikz}.
\end{equation}

It is straightforward following the same method as in \cref{appendix: mid-circuit morphing} that the $q$-Steane QEC circuit in Eq.~\eqref{eq: rectangular q-Steane circuit with square ancilla} can be rewritten in terms of beam-splitters and squeezers
\begin{subequations}
\begin{align}
     \label{eq: BS identity}
    & \begin{quantikz}
     &  \gate[2]{ e^{-i \frac{{q}_1 {p}_2 }{\sqrt{\lambda}}} } & \gate[2]{ e^{i \frac{ \sqrt{\lambda}}{\lambda+1} {p}_1 {q}_2}} &  \\
     &  &   &
\end{quantikz} = \\
   & \begin{quantikz}
    & \gate[2]{ {\rm BS}\Big(\mathrm{arctan}\big(\frac{1}{\sqrt{\lambda}}\big)\Big) } &  \gate{ {\text{Sq}}\Big(\sqrt{\frac{\lambda+1}{\lambda}}\Big) } &    \\
    & & \gate{ {\text{Sq}}\Big(\sqrt{\frac{\lambda}{\lambda+1}}\Big)} & 
\end{quantikz}.
\end{align}
\end{subequations}
By absorbing the single-mode squeezer on the ancilla into the homodyne measurement and omitting the squeezer on the data qubit, we can map a rectangular GKP state with non-biased noise and asymmetry $\lambda$ to one with asymmetry $\lambda+1$ 
\begin{equation}\label{eq: BS round}
\begin{quantikz}
    \lstick{$\ket{ \bar \psi }_{\Delta}^{\lambda}$} &  \gate[2]{ {\rm BS}(\theta_{\lambda}) } & \\
    \lstick{$\ket{\bar +}_{\Delta} $}    &  &  \meter{q}
\end{quantikz}
\end{equation}
Therefore, by repeating Eq.~\eqref{eq: BS round} a total of $(\lambda-1)$ times, we can generate a non-biased rectangular GKP state $\ket{\bar{+}}_{\Delta}^{\lambda}$ from $\lambda$ non-biased square GKP states $\ket{\bar +}_{\Delta}$. We refer to this procedure as the passive rectangular GKP state breeding protocol. Importantly, the output of this protocol will always have $\lambda\in\mathbb{Z}$; however this is not a problem because $\lambda$ does not need to be fine-tuned in order achieve close-to-optimal error rates, as we show in \cref{fig:all_gates,fig:all_states}.

\section{Proof of no-go theorem}
\label{appendix: proof nogo theorem}

In the main text, we showed how to perform on-demand biasing by using a $q$-Steane QEC circuit in two ways in \cref{eq: biased noise q-Steane,eq: biased noise q-Steane with rect ancilla}. We found that the noise profile after the on-demand biasing circuit satisfies the trade-off relation
\begin{equation}
    \tag{\ref{eq: trade-off relation}}
    \Delta_q \Delta_p \geq \Delta^2.
\end{equation}
This naturally raises the question of whether a syndrome measurement circuit could surpass the trade-off relation in Eq.~\eqref{eq: trade-off relation}. In other words, is it possible to suppress noise in one quadrature without inducing a corresponding increase in the other?

In this appendix, we show in Theorem~\ref{theorem: formal no-go theorem} that no such circuit can be constructed using deterministic Gaussian operations and approximate GKP ancillas. As a consequence, the $q$-Steane QEC circuits that we present achieve the optimal output $\Delta_{q}$ and $\Delta_{p}$ parameters for deterministic Gaussian circuits.

Before presenting the formal statement of Theorem~\ref{theorem: formal no-go theorem}, we briefly review the generalization of the Gaussian random displacement channel to multi-mode system and the definition of symplectic matrices. In an $N$-mode system, the $N$-mode displacement operator ${W}(\bo v)$ is defined as 
\begin{equation}
    {W}(\bo v) = \exp(i\sqrt{2\pi}~{\boldsymbol \zeta} \Omega \bo v )
\end{equation}
where $\boldsymbol{\zeta}$ is the quadrature column vector
    \begin{equation}
        \boldsymbol{ \zeta} = \begin{bmatrix} {q}_1 & \dots & {q}_N & {p}_1 & \dots & {p}_N \end{bmatrix}^{T},
    \end{equation}
and the matrix $\Omega$ is the $2N\times 2N$ symplectic form 
\begin{equation}\label{eq: Omega matrix}
    \Omega = \begin{bmatrix}
        \bo 0_{N\times N} & \mathrm{Id}_{N\times N}\\
        \mathrm{Id}_{N\times N} & \bo{0}_{N\times N}
    \end{bmatrix}.
\end{equation}
The group of $2N\times 2N$ symplectic matrices $S$ is the group of real matrices that preserve the symplectic form 
\begin{equation}
    S^{T} \Omega S = \Omega. 
\end{equation}
Gaussian unitary operators are, by definition, unitary operators that map quadrature operators to other quadrature operators; or, equivalently, they are operators that can be written as a product of exponents that are quadratic in the quadrature operators. Gaussian unitary operators that contain \textit{only} quadratic terms are in one-to-one correspondence with symplectic matrices, given by the relation
\begin{equation}
    U_{S}^{\dag} \vect{\zeta} U_{S}=S\vect{\zeta},
\end{equation}
where $U_{S}$ acts component-wise on the elements of $\vect{\zeta}$.

The multi-mode Gaussian random displacement channel $\mathcal{G}_{\boldsymbol{\mu},\Sigma}$ is defined similar to the $2$-mode channel in Eq.~\eqref{eq: GRD channel definition}
\begin{align}
        \mathcal{G}_{\boldsymbol{\mu},\Sigma}(\rho) & = \int d\bo v \frac{ \exp\Big[-\pi(\bo v-\boldsymbol{\mu})^{T}~ \Sigma^{-1}(\bo v-\boldsymbol{\mu})  \Big]  }{\sqrt{|\det \Sigma|}} \\
        & \qquad \qquad  \times {W}(\bo v ) {\rho} {W}(\bo v)^{\dg}
\end{align}
but here we have a $2N$-dimension normal distribution; note here that unlike \cref{eq: GRD channel definition} we also consider channels with non-zero mean $\vect{\mu}$.

We consider the following model. We assume that the data qubit and the ancillary qubits can be encoded using different GKP codes. All physical modes are subjected to independent Gaussian displacement noise channels $\mathcal{G}_{\bo 0, \Delta^2 \times \rm Id}$, which is the noise model one obtains when applying the twirling approximation to $\text{Env}_{\Delta}$. The syndrome extraction circuit consists of Gaussian unitary operators, ideal homodyne measurements on the ancillas, and arbitrary classical decoders. Under these assumptions, the circuit model under consideration is given by
    \begin{equation}
        \label{eq: no-go circuit model}
        \begin{quantikz}
        \ket{\bar{\psi}}_{D} &  \gate{\mathcal{G}_{\bo 0,\Delta^2 \times \rm Id}}    & \gate[2]{{U}_{S}}   &  \\
       \ket{\bar \varphi}_{A} & \gate{\mathcal{G}_{\bo 0, \Delta^2 \times \rm Id}^{\otimes N}} &  & \meter{{q}}
        \end{quantikz} 
    \end{equation}
where $\ket{\bar{\psi}}_{D}$ is a single-mode ideal GKP ``data'' codestate, $\ket{\bar{\phi}}_{A}$ is an $N$-mode ideal GKP ``ancilla'' codestate, ${U}_{S}$ is a Gaussian unitary with the associated symplectic matrix $S$ and $\Delta^2$ is the variance of the noise applied to each GKP codestate. Note that homodyne measurements in other bases can be incorporated into this model by writing them as a Gaussian unitary preceding a position measurement, and then incorporating the Gaussian unitary into $U_{S}$. 

\begin{theorem}\label{theorem: formal no-go theorem}
     The resulting state of the data qubit after the circuit \cref{eq: no-go circuit model} and the displacement correction step is
           \begin{align}
                  \int d\bo v~ \mathcal{X}(\bo v)    W(\bo v) \ket{\bar\psi}_D \bra{\bar\psi}_D  W(\bo v)^{\dg}, 
           \end{align}
     where $\mathcal{X}(\bo v)$ is a mixture of Gaussian distributions
     \begin{equation}
         \mathcal{X}(\bo v) = \sum_{i} p_i ~\mathcal{N}(\boldsymbol{\mu}_i, \Delta^2 \Sigma^{'}, \bo v )
     \end{equation}
     for some positive weights $p_i$ satisfying $\sum_{i} p_i  =1$, for some mean vectors $\boldsymbol{\mu}_{i} $ and for some covariance matrix $\Sigma'$ satisfying $\det(\Sigma^{'}) = 1$. The random displacement vector $\bf{v}$ drawn from the distribution $\mathcal{X}(\bo v)$ satisfies the inequality 
     \begin{equation}\label{eq: formal no-go theorem}
         \det(E[ (\bo{v}-E[\bo v]) (\bo{v}-E[\bo v])^{T}  ]) \geq \Delta^4.
     \end{equation}
\end{theorem}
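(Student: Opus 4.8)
The plan is to treat the twirled noise as a classical Gaussian random displacement, push it through the Gaussian unitary $U_S$, reduce the homodyne‑and‑decode step to Gaussian conditioning, and then exploit the symplectic structure of $S$ to control the conditional covariance; the determinant bound then drops out of the law of total covariance. First I would set up the classical picture: by the twirling assumption the noise is the random displacement $W(\vect v^{(0)})$ on all $N+1$ modes with $\vect v^{(0)}\sim\mathcal N(\Delta^{2}\mathrm{Id}_{2N+2},\cdot)$, and commuting it through $U_S$ via $U_S W(\vect v)U_S^{\dg}=W(S\vect v)$ the state just before the homodyne measurements is $W(\vect w)\,U_S(\ket{\bar\psi}_D\otimes\ket{\bar\varphi}_A)$ with $\vect w=S\vect v^{(0)}\sim\mathcal N(\Delta^{2}G,\cdot)$, $G:=SS^{T}$. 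Using both forms $S^{T}\Omega S=\Omega$ and $S\Omega S^{T}=\Omega$ of the symplectic condition, $G$ is symmetric, positive definite and symplectic, so $\det G=1$ and $G^{-1}=\Omega G\Omega$ (the latter simply exchanges $q$‑ and $p$‑labels of the blocks of $G$); these are the only properties of the circuit I will use beyond linearity. Throughout I partition the quadratures as $\vect w=(\vect w_X,\vect w_Y,\vect w_R)$, where $X=\{q_1,p_1\}$ is the data mode, $Y=\{q_2,\dots,q_{N+1}\}$ the ancilla positions, and $R=\{p_2,\dots,p_{N+1}\}$ the ancilla momenta.

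Next I would reduce the measurement step to Gaussian conditioning. Because $U_S(\ket{\bar\psi}_D\otimes\ket{\bar\varphi}_A)$ is a (transformed) grid state, its joint position wavefunction is supported on a lattice, so each ancilla homodyne outcome equals a lattice point plus $\vect w_Y$. Conditioning on the outcome therefore decomposes into branches $i$ (labelled by the lattice point together with the continuous outcome); in branch $i$ the value of $\vect w_Y$ is pinned, and the ancilla partial overlap with $\ket{\bar\psi}_D$ is reproduced up to a branch‑dependent displacement — here I use that the circuit is a genuine stabilizer‑measurement circuit, i.e.\ it restores the data codestate in the noiseless limit. The residual data‑mode displacement $\vect w_X$ is then Gaussian, and, being a conditional covariance of a jointly Gaussian vector, its covariance is the \emph{same} in every branch, namely $\Delta^{2}\Sigma'$ with $\Sigma'=G_{XX}-G_{XY}G_{YY}^{-1}G_{YX}$; the decoder's correction and the branch‑dependent displacement only shift the mean to some $\vect\mu_i$. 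Collecting branches gives the claimed form $\mathcal X=\sum_i p_i\,\mathcal N(\vect\mu_i,\Delta^{2}\Sigma',\cdot)$ (with the usual care about the improper normalisation of ideal GKP codestates, which just makes the ``$\sum_i$'' in general an integral–sum).

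To evaluate $\det\Sigma'$ I would use the Schur‑complement determinant identity $\det\Sigma'=\det(G_{X\cup Y})/\det(G_{YY})$, then the Jacobi identity relating complementary principal minors of $G$ and $G^{-1}$ together with $\det G=1$ to get $\det(G_{X\cup Y})=\det\big((G^{-1})_{RR}\big)$ (since $R$ is the complement of $X\cup Y$), and finally $G^{-1}=\Omega G\Omega$ to identify $(G^{-1})_{RR}=G_{YY}$; hence $\det(G_{X\cup Y})=\det(G_{YY})$ and $\det\Sigma'=1$. Conceptually this is the Gaussian analogue of ``measuring an ancilla subsystem of a pure Gaussian state leaves the rest pure'': $\mathcal N(\Delta^{2}\mathrm{Id})$ is $\Delta^{2}$ times a minimum‑uncertainty Gaussian, applying the symplectic $S$ preserves this, and homodyne of mutually commuting quadratures on modes symplectically orthogonal to the data mode leaves the conditional data‑mode covariance minimum‑uncertainty, i.e.\ with determinant $\Delta^{4}$. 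The stated bound then follows from the law of total covariance: $\mathbb E[(\vect v-\mathbb E\vect v)(\vect v-\mathbb E\vect v)^{T}]=\Delta^{2}\Sigma'+\sum_i p_i(\vect\mu_i-\bar{\vect\mu})(\vect\mu_i-\bar{\vect\mu})^{T}\succeq\Delta^{2}\Sigma'$ with $\bar{\vect\mu}=\sum_i p_i\vect\mu_i$, and monotonicity of the determinant on the positive‑semidefinite cone gives $\det\mathbb E[(\vect v-\mathbb E\vect v)(\vect v-\mathbb E\vect v)^{T}]\ge\det(\Delta^{2}\Sigma')=\Delta^{4}\det\Sigma'=\Delta^{4}$.

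The main obstacle is the reduction in the second paragraph — carefully handling the improper/lattice structure of the ideal GKP codestates and confirming that the ancilla partial overlap genuinely reproduces $\ket{\bar\psi}_D$ (rather than a logically rotated version) with only a correctable displacement, so that the output really has the advertised random‑displacement form and all mixture components share the covariance $\Delta^{2}\Sigma'$. The symplectic computation $\det\Sigma'=1$ is short once the conditioning picture is established, and the final inequality is routine.
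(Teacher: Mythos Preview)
Your proposal is correct and follows essentially the same route as the paper: commute the Gaussian noise through $U_S$ to get the symplectic covariance $G=SS^T$, use Gaussian conditioning on the ancilla positions to obtain a branch-independent Schur complement $\Sigma'$, exploit the symplectic identity $(G^{-1})_{pp}=G_{qq}$ together with a block-determinant identity to get $\det\Sigma'=1$, and finish with the covariance of the mixture. Two minor remarks: the correct relation is $G^{-1}=\Omega^{T}G\Omega$ (your $\Omega G\Omega$ is off by a sign, though this does not affect the diagonal blocks you use), and your law-of-total-covariance plus determinant-monotonicity step is a slight streamlining of the paper's explicit Jensen argument followed by Minkowski's determinant inequality.
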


The circuit in Eq.~\eqref{eq: no-go circuit model} can be interpreted as a noise-biasing circuit with noisy ancillas, while Eq.~\eqref{eq: formal no-go theorem} provides a lower bound on the product of the variances of position and momentum errors. The determinant appearing in Eq.~\eqref{eq: formal no-go theorem} is the generalization of the informal trade-off relation in Eq.~\eqref{eq: trade-off relation}, extending it to include cross-correlations between position and momentum displacement errors.

As a final comment, we emphasize that although Theorem~\ref{theorem: formal no-go theorem} holds regardless of the choice of decoder, it does not imply that all decoders perform equally well. The role of a good decoder is to approach the lower bound as close as possible. Theorem~\ref{theorem: formal no-go theorem} also shows that the biased $q$-Steane QEC circuit presented in Eq.~\eqref{eq: biased noise q-Steane} achieves this optimal bound. 

\begin{proof}
Since ${U}_{S}$ is a Gaussian unitary operator, we can commute the random Gaussian displacement channels $\mathcal{G}_{\bo 0, \Delta^2 \times \rm Id}$ through ${U}_{S}$, resulting in a new channel $\mathcal{G}_{\bo 0, \Delta^2 \Sigma}$.
\begin{equation} 
    \label{eq: commute G through Us}
     \begin{quantikz}
         &  \gate{\mathcal{G}_{\bo 0,\Delta^2\times\mathrm{Id}}}    & \gate[2]{{U}_{S}}  &  \\
         & \gate{\mathcal{G}_{\bo 0,\Delta^2\times\mathrm{Id}}^{\otimes N}} &  & 
        \end{quantikz} =    \begin{quantikz}
            & \gate[2]{{U}_{S}}  & \gate[2]{ \mathcal{G}_{\bo 0, \Delta^2 \Sigma} } &  \\
            &                  &                                  &
        \end{quantikz}.
\end{equation}
The normalized covariance matrix $\Sigma = S S^{T}$ is both symplectic and symmetric, and because it is symplectic it has unit determinant $\det(\Sigma) = 1$.

We decompose $\Sigma$ into a $4 \times 4$ block matrix 
\begin{equation}
     \Sigma = \begin{pmatrix}
    [\Sigma_{qq}]_{DD} & [\Sigma_{qq}]_{DA} & [\Sigma_{qp}]_{DD} & [\Sigma_{qp}]_{DA} \\
    [\Sigma_{qq}]_{AD} & [\Sigma_{qq}]_{AA} & [\Sigma_{qp}]_{AD} & [\Sigma_{qp}]_{AA} \\
    [\Sigma_{pq}]_{DD} & [\Sigma_{pq}]_{DA} & [\Sigma_{pp}]_{DD} & [\Sigma_{pp}]_{DA} \\
    [\Sigma_{pq}]_{AD} & [\Sigma_{pq}]_{AA} & [\Sigma_{pp}]_{AD} & [\Sigma_{pp}]_{AA}
    \end{pmatrix}
\end{equation}
Here, the subscripts $q$ and $p$ refer to the position and momentum quadratures, respectively. The labels $D$ and $A$ indicate whether the corresponding block involves the data qubit ($D$) or ancillary modes ($A$). For example, $[\Sigma_{qp}]_{DA}$ denotes the cross-covariance between the position quadrature of the data qubit and the momentum quadratures of the ancilla qubits. The size of each block is
\begin{equation}
    \begin{aligned}
        [\Sigma_{qq}]_{DD}, [\Sigma_{pp}]_{DD}, [\Sigma_{pq}]_{DD}, [\Sigma_{qp}]_{DD} \in \mathbb{R}^{1\times 1} \\
        [\Sigma_{qq}]_{AD}, [\Sigma_{qp}]_{AD}, [\Sigma_{pq}]_{AD}, [\Sigma_{pp}]_{DD} \in \mathbb{R}^{N \times 1} \\
        [\Sigma_{qq}]_{DA}, [\Sigma_{qp}]_{DA}, [\Sigma_{pq}]_{DA}, [\Sigma_{pp}]_{DA} \in \mathbb{R}^{1 \times N} \\
        [\Sigma_{qq}]_{AA}, [\Sigma_{pp}]_{AA}, [\Sigma_{pq}]_{AA}, [\Sigma_{qp}]_{AA} \in \mathbb{R}^{N \times N}
    \end{aligned}
\end{equation}

Since $\Sigma$ is symplectic, its inverse $\Sigma^{-1}$ is also symplectic and can be expressed in terms of the entries of $\Sigma$, i.e $\Sigma^{-1} = \Omega^{T} \Sigma^{T} \Omega$. The explicit $4\times4$ block matrix  of $\Sigma^{-1}$ is given by 
\begin{equation}
\begin{aligned}
    \Sigma^{-1} & = \begin{pmatrix}
    [\Sigma_{pp}]_{DD} & [\Sigma_{pp}]_{DA} & -[\Sigma_{qp}]_{DD} & -[\Sigma_{qp}]_{DA} \\
    [\Sigma_{pp}]_{AD} & [\Sigma_{pp}]_{AA} & -[\Sigma_{qp}]_{AD} & -[\Sigma_{qp}]_{AA} \\
    -[\Sigma_{pq}]_{DD} & -[\Sigma_{pq}]_{DA} & [\Sigma_{qq}]_{DD} & [\Sigma_{qq}]_{DA} \\
    -[\Sigma_{pq}]_{AD} & -[\Sigma_{pq}]_{AA} & [\Sigma_{qq}]_{AD} & [\Sigma_{qq}]_{AA}
    \end{pmatrix} .
\end{aligned}
\end{equation}

Because we measure only the position quadrature of the ancillary modes, we gain no information about displacement errors occurring in the momentum quadrature. Therefore, the joint distribution of displacement errors on the position and momentum quadratures of the data qubit, as well as the position quadratures of the ancillas, is described by the marginal of the full error distribution. Specifically, this corresponds to marginalizing the original Gaussian distribution with covariance matrix $\Delta^2 \Sigma$. The marginal of a multivariate Gaussian distribution is itself a Gaussian, with covariance matrix $\Delta^2 \tilde{\Sigma}$ where $\tilde{\Sigma}$ is obtained by removing the rows and columns of $\Sigma$ corresponding to the unmeasured variables—in this case, the momentum quadratures of the ancilla qubits. For our block ordering, this corresponds to removing the fourth row and fourth column of $\Sigma$. For convenience, we permute the second and third block-row and block-column of $\Delta^2 \tilde{\Sigma}$ to obtain
\begin{subequations}
    \begin{align}
    \label{eq: marginal cov block form}
    % \sigma^2 \Sigma \xrightarrow[\text{permuted}]  {\text{marginalized}}
    \Delta^2 \tilde{\Sigma} & = \Delta^2 \begin{pmatrix}
    [\Sigma_{qq}]_{DD} & [\Sigma_{qp}]_{DD} & [\Sigma_{qq}]_{DA}  \\
    [\Sigma_{pq}]_{DD} & [\Sigma_{pp}]_{DD} & [\Sigma_{pq}]_{DA}  \\
    [\Sigma_{qq}]_{AD} & [\Sigma_{qp}]_{AD} & [\Sigma_{qq}]_{AA}  
    \end{pmatrix} \\
    & = \Delta^2 \begin{pmatrix}
        [\tilde{\Sigma}]_{DD} & [\tilde{\Sigma}]_{DA} \\
        [\tilde{\Sigma}]_{AD} & [\tilde{\Sigma}]_{AA}
    \end{pmatrix}.
\end{align}
\end{subequations}

After performing position homodyne measurements on the ancillary qubits, we obtain partial information about the displacement error introduced by the Gaussian random displacement channel: since the ancillary qubits are encoded in GKP code states, the measurement only reveals the displacement modulo the GKP grid. This means we can only determine the error up to a logical operation. For example, in the standard $q$-Steane circuit where both the data and ancillary qubits are square-lattice GKP qubits, the measurement outcome allows us to infer the displacement error on the ancillary qubit only up to a logical $X$ operator. 

Consequently, the measurement outcome is consistent with a discrete set of possible position displacements $\{ \bo{u}_{q,i} \} $, each occurring with probability $p_i$ such that $\sum_{i} p_i = 1$ and $p_i \geq 0$. These tuples  $\{ p_i, \bo{u}_{q,i}\}$ depend on the GKP encodings of the data and ancillary modes and are typically intractable to compute analytically. Fortunately, explicit knowledge of these values is not required for the analysis that follows.

Now, suppose that a specific displacement error $\bo{u}_{q,i}$ occurred on the ancillary modes. We are then interested in the resulting displacement error on the data qubit, conditioned on this event. In other words, we wish to determine the conditional distribution of the random displacement on the data qubit given that $\bo{u}_{q,i}$ occurred on the ancillary qubit. 

As discussed earlier, the joint distribution of displacement errors is a multivariate Gaussian with zero mean and covariance matrix $\Delta^2 \tilde{\Sigma}$. Given this, the conditional distribution of the data qubit’s displacement—conditioned on a displacement error $\{ \bo{u}_{q,i} \}$ occurring on the ancillary qubit—is also Gaussian. Its mean and covariance are given by
\begin{equation}
    \label{eq: bayesian update rule}
    \begin{aligned}
        \boldsymbol \mu_i & = [\tilde{\Sigma}]_{DA} \frac{1}{ [\tilde{\Sigma}]_{AA}} \bo{u}_{q,i}, \\
        \Sigma^{'} & = [\tilde{\Sigma}]_{DD}  - [\tilde{\Sigma}]_{DA}  \frac{1}{[\tilde{\Sigma}]_{AA}} [\tilde{\Sigma}]_{AD} = \frac{1}{ [\tilde{\Sigma}^{-1}]_{DD} }.
    \end{aligned}
\end{equation}
Therefore, the state of the data qubit after the homodyne measurement step is given by
\begin{equation}
       \sum_{i} p_i  \int d\bo v~\mathcal{N}(\boldsymbol{\mu}_i, \Delta^2 \Sigma^{'}, \bo v ) W(\bo v) \ket{\bar\psi}_D \bra{\bar\psi}_D W(\bo v)^{\dg}. 
\end{equation}

We note that the conditional covariance matrices are identical and independent of the measured syndrome. This fact is crucial since it shows that the role of a decoder is simply determine the correction displacement that we will apply. Therefore, the decoder only shifts the error profile, and cannot change the covariance matrix $\Sigma^{'}$. 

We now prove that $\det(\Delta^2 \Sigma^{'}) = \Delta^4$ (or equivalently $\det(\Sigma')=1$) by utilizing the following fact, which holds for any general invertible $2\times 2$ block matrix
    \begin{align}
    % A = \begin{pmatrix}
    %     [A]_{11} & [A]_{12} \\ [A]_{21} & [A]_{22}
    % \end{pmatrix} \Rightarrow 
    % \det(A) & = \det([A]_{11})\det([A]_{22} - [A]_{21} [A]_{11}^{-1} [A]_{12} ) \\
    % &
   \det(A) =  \frac{\det( [A]_{11})}{\det( [A^{-1}]_{22} )}.
\end{align}
If we identify $A = \Delta^2 \Sigma$, $[A]_{11}= \Delta^2 \tilde{\Sigma} $, and $[A^{-1}]_{22} = \frac{[\Sigma^{-1}_{pp}]_{AA}}{\Delta^2} = \frac{[\Sigma_{qq}]_{AA}}{\Delta^2}$ , the determinant of the marginalized covariance matrix $\tilde{\Sigma}$ is given by 
\begin{subequations}
    \label{eq: det marg cov 1}
\begin{align}
    \det(\Delta^2 \tilde{\Sigma}) &= \det( \frac{[\Sigma^{-1}_{pp}]_{AA}}{\Delta^2} ) \det(\Delta^2 \Sigma)\\
    &= \frac{\det([\Sigma_{pp}]_{AA})}{(\Delta^{2})^{N}}(\sigma^{2})^{2N+2}\\
    &= (\Delta^2)^{N+2} \det( [\Sigma_{qq}]_{AA} )
\end{align}
\end{subequations}
Applying the same formula for $A = \Delta^2 \tilde{\Sigma}$ in Eq. \eqref{eq: marginal cov block form} and using the Bayesian rules from Eq. \eqref{eq: bayesian update rule}, we also have
\begin{subequations}
    \label{eq: det marg cov 2}
    \begin{align}
    \det(\Delta^2 \tilde{\Sigma}) &= \frac{ \det(\Delta^2 [\tilde{\Sigma}]_{AA}) }{\det(\frac{[\tilde{\Sigma}^{-1}]_{DD}}{\Delta^2})}\\
    &= (\Delta^2)^N  \det( \Delta^2 \Sigma^{'}) \det([\Sigma_{qq}]_{AA} ).
    \end{align}
\end{subequations}
Comparing Eq. \eqref{eq: det marg cov 1} and \eqref{eq: det marg cov 2}, we conclude that 
\begin{equation}
    \det(\Delta^2 \Sigma^{'}) = \Delta^4.
\end{equation}

The decoder then decides which correction $\boldsymbol{\bar \mu}$ to apply based on some strategy. Therefore, after the correction steps, the error profiles on the data qubit state is 
\begin{equation}
    \label{eq: mixture of Gaussian channel}
    \sum_{i} p_i \mathcal{G}_{\boldsymbol{\tilde \mu}_i, \Delta^2 \Sigma^{'}}(\ket{\psi_D}\bra{\psi_D}), \quad \boldsymbol{\tilde \mu}_i = \boldsymbol \mu_i - \boldsymbol{\bar \mu}.
\end{equation} 
Since Eq.~\eqref{eq: mixture of Gaussian channel} is a mixture of Gaussian displacement channels with identical covariance matrices $\Delta^2 \Sigma'$, we intuitively expect that the covariance matrix of the random variable $\bo v$ with distribution function $\sum_{i} p_{i}~\mathcal{N}(\boldsymbol{\tilde \mu}_i,\Delta^2 \Sigma^{'},\bo v)$ to have larger generalized variance than $\Delta^2 \Sigma^{'}$, namely 
\begin{equation}
    \det(\mathbb{E} [ (\bo{v}- \mathbb{E}[\bo v]) (\bo{v}-\mathbb{E} [\bo v])^{T}  ]) \geq \det( \Delta^2 \Sigma^{'} ) = \Delta^4.
\end{equation}

This fact can be proven rigorously using the generalized Jensen inequality and the Minkowski determinant inequality. By direct computation, we can show that  
\begin{subequations}
    \begin{align}
        & \mathbb{E} [ (\bo{v}-\mathbb{E}[\bo v]) (\bo{v}-\mathbb{E}[\bo v])^{T}  ] \\
        & = \Delta^2 \Sigma'  + \sum_{i} p_i ( \boldsymbol{\tilde \mu}_i \boldsymbol{\tilde \mu}_i^{T} ) -   (\sum_{i}  p_i\boldsymbol{\tilde \mu}_i  )  (\sum_{j}  p_j\boldsymbol{\tilde \mu}_i^{T}) \\
        & = \Delta^2 \Sigma^{'} + \Sigma_{r}.
    \end{align}
\end{subequations}
The matrix $\Sigma_{r}$ is symmetric and also positive semi-definite. To establish the positive semi-definiteness of $\Sigma_{r}$, we will demonstrate that
\begin{equation}
    \bo{x}^{T} \Sigma_{r} \bo{x} \geq 0, \quad \forall \bo{x} \in \mathbb{R}^{2}.
\end{equation}
To this end, we define the function $\varphi_{\bo x}: \mathbb{R}^{2} \to \mathbb{R}$ for all $\bo x \in \mathbb{R}^{2}$:
\begin{equation}
    \varphi_{\bo x}(\boldsymbol{\tilde \mu}_i) = (\bo x^{T} \boldsymbol{\tilde \mu}_i)^2,
\end{equation}
which is a convex function. Therefore, using the generalized Jensen inequality, we can prove that
\begin{equation}
    \sum_{i} p_i  \varphi_{\bo x}(\boldsymbol{\tilde \mu}_i) -  \varphi_{\bo x}( \sum_i p_i \boldsymbol{\tilde \mu}_i) \geq 0, \quad \forall p_i \geq 0 \text{ and } \sum_i p_i = 1. 
\end{equation}
Since the left-hand side of the above inequality is simply $ \bo{x}^{T} \Sigma_{r} \bo{x}$ and the inequality holds for any $\bo x$, the covariance matrix $\Sigma_{r}$ is positive semi-definite.

As the matrices $\Sigma^{'}$ and $\Sigma_{r}$ are positive semi-definite matrices, the Minkowski determinant inequality implies
\begin{subequations}
    \begin{align}
            & \det(E[ (\bo{v}-E[\bo v]) (\bo{v}-E[\bo v])^{T}]) \\
            & \geq \det(\Delta^2 \Sigma^{'}) + \det( \Sigma_{r} ) \\
            & \geq \det( \Delta^2 \Sigma^{'} ) = \Delta^4 
    \end{align}
\end{subequations}
ending the proof. 
\end{proof}
We remark that the main result presented in Eq.~\eqref{eq: formal no-go theorem} continues to hold even when displacement operators are allowed in combination with Gaussian unitaries. Two modifications arise in the proof:
\begin{itemize}
    \item In Eq.~\eqref{eq: commute G through Us}, after commuting through the gates, the mean displacement of the resulting Gaussian displacement channel may no longer remain zero
    \item In Eq.~\eqref{eq: bayesian update rule}, the mean displacement of the Gaussian displacement channel acting on the data qubit after measuring the ancilla will be modified.
\end{itemize}
Aside from these shifts in the mean, the structure of the proof is unchanged. The key reasoning relies solely on the covariance matrix of the noise profile after measurement, as given in Eq.~\eqref{eq: bayesian update rule}, and this covariance matrix cannot be altered by displacement operators.

\section{Minimal Representations of Polynomial Phase Gates}
\label{appendix: representation logical phase gate}

In \cref{section: optimized logical gate representations}, we introduced the formalism of polynomial phase stabilizers and the coefficient reduction procedure to generate minimal polynomial phase gate representations of logical gates. In this appendix, we prove a number of results relating to the coefficient reduction procedure. These results can be summarized in the following theorem 
\begin{theorem}
    \label{theorem: minimal degree polynomial}
   The minimal polynomial $P_{\mathrm{min},m}({x})$, under the lexicographical order, that generates the logical gate $\Lambda_m$ has degree
    \begin{equation}
        \label{eq: first part of theorem minimal degree polynomial}
        \deg  P_{\mathrm{min},m}({x})  = m, 
    \end{equation}
    and $k$th-order coefficients $a_k$ that satisfy the bound
    \begin{equation}
        \label{eq: second part of theorem minimal degree polynomial}
        |a_k| \leq \frac{1}{2(k!)}.
    \end{equation}
\end{theorem}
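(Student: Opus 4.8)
The plan is to reduce both assertions to standard facts about integer-valued polynomials. First I would translate the defining property \cref{eq: phase gate definition} of $\Lambda_m$ into a divisibility statement: a real polynomial $P$ generates $\Lambda_m$ precisely when $P(x)\in\mathbb{Z}$ for all even integers $x$ and $P(x)\in 2^{-m}+\mathbb{Z}$ for all odd integers $x$. Writing $Q(x):=2^m P(x)$, this is equivalent to demanding that $Q(x)\equiv\epsilon(x)\pmod{2^m}$ for every $x\in\mathbb{Z}$, where $\epsilon(x):=\tfrac12\bigl(1-(-1)^x\bigr)$ is the parity function; in particular such a $Q$ is automatically integer-valued. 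The one computation that drives the whole argument is the value of the $k$-th finite difference of $\epsilon$ at the origin: from $\Delta\epsilon(x)=(-1)^x$ (where $\Delta f(x)=f(x+1)-f(x)$) one gets by induction $\Delta^k\epsilon(0)=(-2)^{k-1}$ for all $k\ge1$.

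Next I would dispatch the coefficient bound \cref{eq: second part of theorem minimal degree polynomial}, since it is needed for the degree bound. This follows from \cref{lemma: guarantee of coefficient reduction procedure}, which identifies the lexicographically minimal representative with the output of the coefficient reduction procedure; that procedure produces, by construction, coefficients obeying $|a_k|\le 1/(2k!)$. Alternatively one can argue directly: if the minimal representative had some coefficient with $|a_j|>1/(2j!)$, then subtracting a suitable integer multiple of the basis polynomial $L_j$ of \cref{eq: trivial polynomial stabilizers} (whose leading coefficient is $1/j!$) strictly decreases $|a_j|$ while leaving every higher-order coefficient untouched, yielding a strictly smaller element of the same equivalence class and contradicting minimality.

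For the degree statement \cref{eq: first part of theorem minimal degree polynomial} I would prove the two inequalities separately. For $\deg P_{\mathrm{min},m}\ge m$: this holds for \emph{any} $P$ generating $\Lambda_m$, because if $\deg Q<m$ then $\Delta^m Q\equiv0$ identically, whereas $\Delta^m Q(0)$ is an integer combination of the values $Q(0),\dots,Q(m)$ and hence $\Delta^m Q(0)\equiv\Delta^m\epsilon(0)=(-2)^{m-1}\pmod{2^m}$, which is nonzero modulo $2^m$ — a contradiction. For $\deg P_{\mathrm{min},m}\le m$: suppose instead the minimal polynomial had degree $d>m$. Then $Q=2^m P_{\mathrm{min},m}$ is integer-valued of degree $d$, so by the integer-valued polynomial basis theorem of P\'olya its leading coefficient has the form $c/d!$ with $c\in\mathbb{Z}\setminus\{0\}$; matching this against the leading coefficient $2^m a_d$ of $Q$ and invoking the bound just proved gives $1\le|c|=2^m d!\,|a_d|\le 2^{m-1}$. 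On the other hand $c\equiv\Delta^d\epsilon(0)=(-2)^{d-1}\pmod{2^m}$, and since $d-1\ge m$ this forces $c\equiv0\pmod{2^m}$, hence $|c|\ge 2^m$, contradicting $|c|\le 2^{m-1}$. Therefore $\deg P_{\mathrm{min},m}=m$.

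I expect the main obstacle to be the upper bound on the degree: one must combine the coefficient bound with the congruence $Q\equiv\epsilon\pmod{2^m}$ sharply enough to kill \emph{all} coefficients of degree exceeding $m$ simultaneously, since the congruence by itself constrains the higher coefficients only weakly. The remaining ingredients — the reformulation, the finite-difference identity, and the lower bound — are routine and, crucially, use only the discrete grid structure underlying $\Lambda_m[{x}]$ and not the specific choice ${x}={q}/\sqrt{\pi}$, which is why the same argument also establishes the result for the Hadamard hierarchy obtained from ${x}={a^{\dagger}a}/2$.
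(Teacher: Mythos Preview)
Your argument is correct and takes a genuinely different route from the paper's. The paper establishes \cref{eq: first part of theorem minimal degree polynomial} in two inductive stages: first a hierarchy lemma showing $\deg P_{\mathrm{min},m+1}>\deg P_{\mathrm{min},m}$, obtained by pinning down the leading coefficient via the recurrence $2P_{\mathrm{min},m}\equiv P_{\mathrm{min},m-1}$ and ruling out equal degrees through the identity $P_{\mathrm{min},m+1}(x+2)-P_{\mathrm{min},m+1}(x)\equiv 0$; then a second induction that the degree increments by exactly one, which splits into cases on the parity of $m$ and on the parity of the putative degree jump, using the further identity $P_{\mathrm{min},m+1}(x)+P_{\mathrm{min},m+1}(-x)\equiv P_{\mathrm{min},m}(x)$ together with explicit formulas for the subleading coefficients of the $L_n$. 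Your approach bypasses all of this structure by passing to $Q=2^m P$, observing that $Q$ is integer-valued with $Q\equiv\epsilon\pmod{2^m}$, and reading off both the lower and upper degree bounds from the single identity $\Delta^k\epsilon(0)=(-2)^{k-1}$ combined with the Newton--P\'olya fact that the top binomial coefficient of an integer-valued polynomial of degree $d$ is the integer $\Delta^d Q(0)$. This is considerably shorter and more conceptual: the $2$-adic valuation of $\Delta^k\epsilon(0)$ does all the work, and no case analysis or intermediate hierarchy is needed. The paper's route, on the other hand, stays closer to the coefficient-reduction machinery used elsewhere in the manuscript and makes the recurrence $\Lambda_{m+1}^2=\Lambda_m$ explicit, which ties the proof to the operational picture; but for the bare statement of the theorem your finite-difference argument is the cleaner one.
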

To prove this, we begin by proving that the coefficient reduction procedure always returns a minimal polynomial under the lexicographic partial ordering introduced in \cref{section: optimized logical gate representations}, thereby establishing the second part of Theorem~\ref{theorem: minimal degree polynomial}. Appendix~\ref{section: hierachy of minimal polynomial for Lambdam} demonstrates that the minimal polynomials $P_{\mathrm{min},m}(x)
$ form a hierarchy satisfying $\deg(P_{\mathrm{min},m+1}) > \deg(P_{\mathrm{min},m})$, which serves as a stepping stone for the following section. Finally, in Appendix~\ref{section: induction proof for minimal polynomial of LambdaM}, we use mathematical induction to show that if $\deg(P_{\mathrm{min},m}) = m$, then $\deg(P_{\mathrm{min},m+1}) = m+1$. Together with the base case $\deg(P_{\mathrm{min},1}) = 1$, this completes the proof of the first part of Theorem~\ref{theorem: minimal degree polynomial}.

A recent result \cite{Hahn2025} establishes that non-Clifford gates in the GKP code cannot be implemented \textit{deterministically} using Gaussian operations alone. Our theorem complements this by identifying the minimal non-Gaussian resources required to realize a given logical non-Clifford gate with a single polynomial phase gate.

We remark that if ${x}$ denotes either ${q}$ or ${p}$, corresponding to ${Z}$-type or ${X}$-type phase gates in the GKP code, \cref{theorem: minimal degree polynomial} can be readily proven using the Clifford hierarchy and induction. However, the proof we present below is more general, as it also applies to other families of gates, such as the Hadamard-type gates that lie outside the Clifford hierarchy, and is applicable to other bosonic codes.

Throughout this appendix, we write
\begin{equation}
W(x) \equiv Q(x)
\end{equation}
to denote that the polynomials $W(x)$ and $Q(x)$ are equivalent up to the addition of a integer-valued polynomial $\sum_{i}c_{i}L_{i}(x)$ for $c_{i}\in\mathbb{Z}$. One simple fact that we will repeatedly use is as follows.
\begin{lemma}\label{lem: minimum leading coefficient}
    For any integer-valued polynomial $P(x)\equiv 0$ of degree $m$, its leading coefficient is an integer multiple of $1/m!$.
\end{lemma}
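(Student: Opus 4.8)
The plan is to derive the lemma directly from P\'olya's basis for integer-valued polynomials recalled in \cref{eq: trivial polynomial stabilizers}. Recall that the relation $P(x)\equiv 0$ means precisely that $P$ can be written as an integer linear combination $P(x)=\sum_{n=0}^{m}c_n L_n(x)$ with all $c_n\in\mathbb{Z}$. The only auxiliary fact needed is the normalization of the basis polynomials: each $L_n(x)$ is, by its definition in \cref{eq: trivial polynomial stabilizers}, a product of $n$ monic linear factors divided by $n!$, and therefore has degree exactly $n$ with leading coefficient exactly $1/n!$.

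Granting this, the proof is a one-line bookkeeping argument. Since $\deg P = m$, the coefficient $c_m$ must be nonzero, and no term $c_n L_n(x)$ with $n<m$ contributes to the degree-$m$ coefficient of $P$; hence the leading coefficient of $P$ equals $c_m/m!$, which is an integer multiple of $1/m!$. The only point requiring any care is reading off the leading coefficient of $L_n$ from \cref{eq: trivial polynomial stabilizers} and keeping the ``$\equiv$'' convention (equivalence up to an integer-valued polynomial) straight; there is no genuine obstacle here, as the lemma is essentially a statement about the P\'olya basis itself.

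For completeness I would note that one can also prove the lemma without invoking P\'olya's theorem, using finite differences: setting $(\delta f)(x):=f(x+1)-f(x)$, the polynomial $\delta^{m}P$ is the constant $m!\,a_m$, where $a_m$ is the leading coefficient of $P$. On the other hand, $\delta^{m}P$ evaluated at any point is an integer linear combination of the values $P(0),\dots,P(m)$, all of which are integers because $P$ is integer-valued; hence $m!\,a_m\in\mathbb{Z}$, i.e.\ $a_m\in\tfrac{1}{m!}\mathbb{Z}$. I would present the P\'olya version in the appendix body for consistency with the rest of the discussion, which is organized around the $\{L_n\}$ basis.
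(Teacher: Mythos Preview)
Your proof is correct and matches the paper's own argument essentially line for line: expand $P$ in the P\'olya basis $\{L_n\}$, observe that only $c_m L_m$ contributes to the degree-$m$ term, and read off the leading coefficient as $c_m/m!$ with $c_m\in\mathbb{Z}$. The finite-difference alternative you sketch is a nice bonus that the paper does not include.
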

The proof is fairly simple but is important to show that linear combinations of higher-degree generating polynomials cannot be combined to create a lower-degree polynomial with a smaller leading coefficient.
\begin{proof}
    This is a direct consequence of the generating set of integer-valued polynomials $\{L_{i}(x) \}$ defined in ~\cref{eq: trivial polynomial stabilizers}. Because $\{L_{i}(x) \}$ is a generating set, we can always write
    \begin{equation}
        P(x)=\sum_{j=1}^{J}c_{j}L_{j}(x),
    \end{equation}
    for some $J\in\mathbb{Z}$ and $c_{j}\in\mathbb{Z}$ where we assume without loss of generality that $c_{J}\neq 0$. Now, note that the degree of $P(x)$ is $J$, and therefore we have $J=m$. Because the leading coefficient of $L_{m}(x)$ is $1/m!$, the lemma follows.
\end{proof}

\subsection{Optimality of the coefficient reduction procedure}
\label{sec: proof of coefficient-reduction procedure}

We first prove that the coefficient reduction procedure is guaranteed to return a minimal polynomial under the lexicographical order. Consequently, starting from any polynomial $P_{\rm start}(x)$ of the $\Lambda_{m}$ gate, we can always obtain a minimal polynomial $P_{\text{min},m}(x)$ by applying the coefficient reduction procedure. Since the resulting polynomial from this procedure always has coefficients $a_k$ that satisfy Eq.~\eqref{eq: second part of theorem minimal degree polynomial}, this establishes the second part of Theorem~\ref{theorem: minimal degree polynomial}.

We begin by present how to generate a starting polynomial representation $P_{\text{start}}(x)$ for any $\Lambda_m$ gate. We note that for the ${\Lambda}_1$ gate, there is a minimal representation 
\begin{equation}
    \label{eq: definition F1}
    P_{\text{min},1}(x) = \frac{x}{2},
\end{equation}
which, in the usual case where $x={q}/\sqrt{\pi}$, corresponds to the logical ${Z}=e^{i\sqrt{\pi}{q}}$ gate via \cref{eq: logical canonical form}. In the rest of this appendix, we assume that the polynomial only takes integer inputs; of course, when it is turned into a polynomial phase gate the polynomial will take real values again. Let us assume that we have found a representation $P_{m}$ for the ${\Lambda}_m$ gate. Then, a representative $P_{\text{start}}(x)$ for the $\Lambda_{m+1}$ gate can be constructed as
\begin{equation}
    \label{eq: recursive formula Fm}
    P_{\text{start}}(x) = 2^{m-1} P_{m}^2(x),
\end{equation}
which is a generalization of a result from Ref.~\cite{Royer2022}. 

The recursive formula presented in Eq.~\eqref{eq: recursive formula Fm} can be explained as follow. Since $P_{m}$ is a polynomial whose polynomial phase gate implements ${\Lambda}_m$, we can write $P_{m}(x)$ as 
\begin{equation}
    \label{eq: decomposing Fm}
    P_{m}(x) = n(x) + \frac{a(x)}{2^{m}},
\end{equation}
where $n(x)$ is some integer depending on $x$ and $a(x)=0$ for even $x$ and $a(x) = 1 $ for odd $x$. Substituting Eq.~\eqref{eq: decomposing Fm} into Eq.~\eqref{eq: recursive formula Fm}, we obtain
\begin{subequations}
    \begin{gather}
        P_{\text{start}}(x) = 2^{m-1} n^2(x) +n(x) a(x) + \frac{a(x)^2}{2^{m+1}}   \\
        \Rightarrow P_{\text{start}}(x)\;\;\mathrm{mod}\, 1  = \frac{a(x)^2}{2^{m+1}}
    \end{gather}
\end{subequations}
proving that $P_{\text{start}}(x)$ constructed from Eq.~\eqref{eq: recursive formula Fm} is indeed a polynomial representation of $\Lambda_{m+1}$. 

From the definition of $P_{\text{min},1}(x)$ in Eq.~\eqref{eq: definition F1}, the explicit formula for $P_{m}(x)$ is
\begin{equation}
        \label{eq: trivial representation of Lambda_m}
         P_{m}(x) = \frac{ x^{2^{m-1}}}{2^m},
\end{equation}
Consequently, there exists a starting polynomial representation of the logical gate ${\Lambda}_m$ for all $m\geq 1$. These trivial polynomial representations can be used as the starting polynomial $P_{\rm start}(x)$ for the coefficient reduction procedure. 

Now, we prove the following lemma relating to the optimality of the coefficient reduction procedure itself.
\begin{lemma}
    \label{lemma: guarantee of coefficient reduction procedure}
    Let $P_{\rm start}(x)$ be any polynomial phase gate representation of the logical gate ${O}$ 
    \begin{equation}
    P_{\rm start}(x) = \sum_{i=0}^{N} a_i x^{i}.
    \end{equation}
    Applying the coefficient reduction procedure to $P_{\rm start}(x)$ always yields a minimal polynomial phase gate representation of the logical gate ${O}$ under the lexicographic order, independent of the starting polynomial $P_{\rm start}(x)$.
\end{lemma}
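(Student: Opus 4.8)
The plan is to show, unpacking the definition of minimality, that the output polynomial $P^{*}$ of the coefficient reduction procedure admits no $Q \equiv P^{*}$ (equivalently, no $Q$ representing the same gate ${O}$) with $Q \prec P^{*}$. I would carry this out by strong induction on $d := \deg P^{*}$, using two ingredients throughout: Lemma~\ref{lem: minimum leading coefficient} (a nonzero integer-valued polynomial of degree $m$ has leading coefficient a nonzero integer multiple of $1/m!$), and the procedure's guarantee $|p^{*}_{k}| \le 1/(2k!)$ for every coefficient.

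Suppose $Q \equiv P^{*}$ with $Q \prec P^{*}$ at crossing index $i$. First I would pin down degrees: since $p^{*}_{d}\neq 0$ and $|p^{*}_{d}| \le 1/(2d!) < 1/d!$, Lemma~\ref{lem: minimum leading coefficient} applied to $Q - P^{*}$ forbids $\deg Q < d$; and $\deg Q > d$ would give $|q_{\deg Q}| > 0 = |p^{*}_{\deg Q}|$, i.e. $P^{*} \prec Q$. Hence $\deg Q = d$ and $i \le d$. If $i = d$, then $q_{d}-p^{*}_{d}$ would be a nonzero multiple of $1/d!$ by Lemma~\ref{lem: minimum leading coefficient}, yet $|q_{d}-p^{*}_{d}| \le |q_{d}|+|p^{*}_{d}| < 1/d!$ because $|q_{d}| < |p^{*}_{d}| \le 1/(2d!)$ — a contradiction. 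So $i < d$, which forces $q_{d} = \pm p^{*}_{d}$.

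If $q_{d} = p^{*}_{d}$, I would truncate the top term: set $\hat Q := Q - p^{*}_{d}x^{d}$ and $\hat P := P^{*} - p^{*}_{d}x^{d}$. Because the reduction steps below degree $d$ never touch the degree-$d$ coefficient, $\hat P$ is exactly the procedure's output on a polynomial of degree $\le d-1$, so the induction hypothesis gives that $\hat P$ is minimal in its class; since $\hat Q \equiv \hat P$ and the relations $|q_{l}| = |p^{*}_{l}|$ for $i < l < d$ together with $|q_{i}| < |p^{*}_{i}|$ pass unchanged to $\hat Q$, we get $\hat Q \prec \hat P$, a contradiction. The remaining case $q_{d} = -p^{*}_{d}$ is the main obstacle: here Lemma~\ref{lem: minimum leading coefficient} applied to $Q-P^{*}$ forces $|p^{*}_{d}| = 1/(2d!)$ exactly, so the procedure met a \emph{tie} at degree $d$ and also ran the companion branch, yielding a second reduced output $P^{**}$ with $p^{**}_{d} = -p^{*}_{d} = q_{d}$; since the procedure retains the lexicographically smaller one, $P^{**} \not\prec P^{*}$, i.e. $P^{*} \prec P^{**}$ or $P^{*}$ and $P^{**}$ have identical absolute-value profiles (recall $\prec$ compares absolute-value profiles from the top down, so these are the only options). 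Now $q_{d} = p^{**}_{d}$ matches exactly, so the $q_{d}=p_{d}$ argument above applies verbatim with $P^{**}$ in place of $P^{*}$ — invoking the induction hypothesis on the companion branch's own reduced output — to conclude $Q \not\prec P^{**}$; a short case split on the two crossing indices then shows $Q \prec P^{*}$ together with $P^{*} \preceq P^{**}$ implies $Q \prec P^{**}$, the contradiction. The base case $d \le 1$ is immediate. The two points requiring care are (i) the induction hypothesis is invoked for truncated polynomials that in general represent a \emph{different} logical gate, which is legitimate precisely because the lemma is stated for an arbitrary starting polynomial, and (ii) in the tie case one must argue against the companion branch's full reduced output, which may itself involve further ties, rather than a single fixed polynomial.
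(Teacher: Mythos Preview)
Your proof is correct and rests on the same two ingredients as the paper's argument: the procedure's guarantee $|r_j|\le 1/(2j!)$ and Lemma~\ref{lem: minimum leading coefficient}. The paper's proof, however, is a short informal sketch: it simply asserts that because $|r_j|\le 1/(2j!)$, the degree-$j$ coefficient cannot be further reduced (invoking Lemma~\ref{lem: minimum leading coefficient}) except in the tie case $r_j=\pm 1/(2j!)$, where both branches are kept and the smaller final output selected; since the lexicographic order compares top-down, this degree-by-degree minimality yields global minimality.

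Your strong induction on $d=\deg P^{*}$ is a genuinely more rigorous organization of the same mechanism. In particular, you make explicit something the paper glosses over: a competitor $Q$ with $|q_k|=|p^{*}_k|$ for $k>i$ need not have $q_k=p^{*}_k$ but may have opposite signs, and this is precisely what forces the tie $|p^{*}_d|=1/(2d!)$ and routes the argument through the companion branch $P^{**}$. The paper's sketch handles this only implicitly by saying ``keep all branches.'' What your approach buys is a clean reduction to a strictly smaller instance via the truncation $\hat P=P^{*}-p^{*}_d x^d$; what the paper's approach buys is brevity at the cost of leaving the sign-flip case to the reader. Your two flagged subtleties --- that the truncated polynomial represents a different gate, and that the tie case must be argued against the companion's fully reduced output --- are exactly the points where the paper's one-paragraph proof is silent.
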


\begin{proof}
    From the definition of the lexicographic order, if we can prove that at a given order $j$ the coefficient reduction procedure produces the minimal coefficient in $j$, then we know that the overall procedure will output a minimal polynomial.
    
    Due to the way we define the coefficient reduction procedure, at each order $j$ the coefficient reduction procedure outputs a coefficient $r_{j}$ satisfying $|r_{j}|\leq 1/(2j!)$. From \cref{lem: minimum leading coefficient}, the output coefficient is unique and cannot be reduced by any other integer-valued polynomial unless the remainder $r_{j} = \pm \frac{1}{2j!}.$ In the case that the remainder $r_{j} = \pm \frac{1}{2j!}$, there are two possible branches in the procedure. There is no guarantee which branch will return the minimal polynomial under the lexicographic order. Therefore, we need to keep track of both branches and continue the procedure. Subsequent iterations might generate additional branches that we also need to keep track of. At the end of the procedure, choose the branch with the minimal polynomial. This guarantees that the coefficient reduction procedure always return the minimal polynomial. 
\end{proof}

\subsection{Hierarchy of minimal polynomials $P_{\mathrm{min},m}$} 
\label{section: hierachy of minimal polynomial for Lambdam}
From \cref{lemma: guarantee of coefficient reduction procedure}, we now have a constructive method to generate a minimal polynomial representation $P_{\mathrm{min},m}(x)$ corresponding to the polynomial phase gate $\Lambda_{m}$. We now prove some properties of the resulting minimal polynomial representations, beginning in this section by establishing that the minimal polynomials $P_{\mathrm{min},m}(x)$ form a hierarchy where the degree of $P_{\mathrm{min},m+1}(x)$ is strictly larger than the degree of $P_{\mathrm{min},m}(x)$. It will serve as a first step to the proof of Eq.~\eqref{eq: first part of theorem minimal degree polynomial} that we will finish in the following section. In the rest of this appendix, we denote the coefficients of the polynomial $P_{\mathrm{min},m}(x)$ as 
\begin{equation}
    P_{\mathrm{min},m}(x)=\sum_{j=1} p_{m,j}~ x^{j}.
\end{equation}

We begin with a simply corollary to \cref{lem: minimum leading coefficient}.
\begin{corollary}
\label{corollary: leading coefficient of stabilizer polynomial}
Let $P(x) $ and $Q(x)$ be polynomials with degree $n$ and $m$ respectively, where $n > m$. If $P(x) \equiv Q(x)$, then leading coefficient of $P(x)$ must be an integer multiple of $1/n!$.    
\end{corollary}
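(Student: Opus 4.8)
The plan is to deduce the corollary immediately from Lemma~\ref{lem: minimum leading coefficient} by passing to the difference polynomial. Set $R(x) := P(x) - Q(x)$. By the definition of $\equiv$ introduced above, the hypothesis $P(x)\equiv Q(x)$ says precisely that $R(x) = \sum_{i} c_i L_i(x)$ for some integers $c_i$; equivalently, by P\'olya's theorem this is the statement that $R(x)$ is an integer-valued polynomial, i.e.\ $R(\mathbb{Z})\subseteq\mathbb{Z}$, which is what ``$R(x)\equiv 0$'' means in the notation of this appendix.

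The second step is to track degrees. Since $\deg P = n$ and $\deg Q = m$ with $n > m$, the degree-$n$ term $a_n x^n$ of $P(x)$ receives no contribution from $Q(x)$. Hence $\deg R = n$ and the leading coefficient of $R$ is exactly $a_n$, the leading coefficient of $P$. Applying Lemma~\ref{lem: minimum leading coefficient} to the integer-valued polynomial $R(x)$ of degree $n$ then gives that its leading coefficient is an integer multiple of $1/n!$; that is, $a_n = k/n!$ for some $k\in\mathbb{Z}$, which is the claim of Corollary~\ref{corollary: leading coefficient of stabilizer polynomial}.

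There is essentially no obstacle in this argument: the only point requiring care is the degree bookkeeping, namely verifying that subtracting the strictly-lower-degree polynomial $Q$ cannot cancel or lower the leading term of $P$, so that the leading coefficient is genuinely preserved in $R$. Everything else is a direct appeal to the already-established Lemma~\ref{lem: minimum leading coefficient}.
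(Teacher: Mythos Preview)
Your proposal is correct and follows essentially the same approach as the paper: form the difference $R(x)=P(x)-Q(x)$, note $R\equiv 0$, observe that $\deg R=n$ since $n>m$ so the leading coefficient of $R$ equals that of $P$, and apply Lemma~\ref{lem: minimum leading coefficient}.
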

\begin{proof}
This result follows directly from Lemma~\ref{lem: minimum leading coefficient}. Since $P(x)- Q(x)\equiv 0$, Lemma~\ref{lem: minimum leading coefficient} implies that the leading coefficient of $P(x)- Q(x)$ must be an integer multiple of $1/n!$. Moreover, the leading coefficient of $P(x)-Q(x)$ coincides with the leading coefficient of $P(x)$ since $\deg P(x) > \deg Q(x)$, proving the result.
\end{proof}
Given Corollary~\ref{corollary: leading coefficient of stabilizer polynomial}, we are now ready to present and prove the main result of this section.
\begin{lemma}
    \label{lemma: hierachy of phase gate representation}
    The minimal polynomial $P_{\mathrm{min},m}(x)$ satisfies a hierarchy
    \begin{equation}
    \label{eq: hierachy of phase gate representation}
    \deg\big(P_{\mathrm{min},m+1}\big) > \deg\big(P_{\mathrm{min},m}\big).
    \end{equation}
\end{lemma}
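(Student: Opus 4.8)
The plan is to establish the two inequalities $\deg(P_{\mathrm{min},m+1})\ge\deg(P_{\mathrm{min},m})$ and $\deg(P_{\mathrm{min},m+1})\neq\deg(P_{\mathrm{min},m})$ separately. Throughout I will use the elementary fact that a polynomial of strictly smaller degree is strictly smaller in the lexicographic order, so a lexicographically minimal polynomial automatically has the smallest possible degree among all representations of its logical gate; hence $\deg(P_{\mathrm{min},k})$ is exactly the minimal degree of any polynomial phase gate representation of $\Lambda_k$. I also use the coefficient bound $|a_j|\le\frac{1}{2(j!)}$ for minimal representations, which was established in the previous subsection as a consequence of \cref{lemma: guarantee of coefficient reduction procedure}.

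For ``$\ge$'': if $\exp(i2\pi P(x))$ implements $\Lambda_{m+1}$ then $P(x)\in\mathbb{Z}$ for even $x$ and $P(x)\in\tfrac{1}{2^{m+1}}+\mathbb{Z}$ for odd $x$, so $2P(x)\in\mathbb{Z}$ for even $x$ and $2P(x)\in\tfrac{1}{2^{m}}+\mathbb{Z}$ for odd $x$, i.e.\ $\exp(i2\pi\cdot 2P(x))$ implements $\Lambda_m$. Taking $P=P_{\mathrm{min},m+1}$ gives a representation of $\Lambda_m$ of degree $\deg(P_{\mathrm{min},m+1})$, so $\deg(P_{\mathrm{min},m})\le\deg(P_{\mathrm{min},m+1})$.

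For the strict part I would argue by contradiction: suppose $\deg(P_{\mathrm{min},m+1})=\deg(P_{\mathrm{min},m})=:d$, and let $b$ and $a$ be the (nonzero) leading coefficients of $P_{\mathrm{min},m+1}$ and $P_{\mathrm{min},m}$ respectively. The key device is the symmetrized polynomial $R(x):=P_{\mathrm{min},m+1}(x+1)+P_{\mathrm{min},m+1}(x)$. Since exactly one of any two consecutive integers is odd, $R(x)\in\tfrac{1}{2^{m+1}}+\mathbb{Z}$ for every integer $x$, so $R(x)-R(0)$ is an \emph{integer-valued} polynomial; moreover it has degree $d$ with leading coefficient $2b$, because $(x+1)^{d}+x^{d}=2x^{d}+\dots$. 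By \cref{lem: minimum leading coefficient}, $2b$ is therefore an integer multiple of $1/d!$, and since $b\neq0$ this gives $|2b|\ge 1/d!$; combined with $|b|\le\frac{1}{2(d!)}$ we conclude $2b=\pm 1/d!$.

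Finally, $2P_{\mathrm{min},m+1}$ and $P_{\mathrm{min},m}$ both implement $\Lambda_m$, so their difference is an integer-valued polynomial of degree at most $d$. If that degree equals $d$, its leading coefficient $2b-a$ is an integer multiple of $1/d!$ by \cref{lem: minimum leading coefficient}; since $2b=\pm 1/d!$ is one too, $a$ is a nonzero integer multiple of $1/d!$, so $|a|\ge 1/d!>\frac{1}{2(d!)}$, contradicting the minimality bound on $a$. If instead the difference has degree $<d$, then $a=2b=\pm 1/d!$, again contradicting $|a|\le\frac{1}{2(d!)}$. Either way we reach a contradiction, so $\deg(P_{\mathrm{min},m+1})\neq\deg(P_{\mathrm{min},m})$, and with the first inequality this proves the hierarchy. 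I expect the only delicate points to be the non-obvious choice of the symmetrization $R$ (which is what upgrades the leading-coefficient constraint from being merely a multiple of $1/(2^{m+1}d!)$ to the much stronger statement that $2b$ is a multiple of $1/d!$), together with the routine verification that $R(x)-R(0)$ really is integer-valued at \emph{every} integer and has the claimed degree.
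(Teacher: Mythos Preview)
Your proof is correct and takes a genuinely different route from the paper's. The paper proceeds by induction on $m$: assuming the hierarchy holds up to $m$, it first pins down the leading coefficient of $P_{\mathrm{min},m}$ as $\pm 1/(2n!)$ via the recurrence $2P_{\mathrm{min},m}\equiv P_{\mathrm{min},m-1}$ and \cref{corollary: leading coefficient of stabilizer polynomial}, then rules out $\deg P_{\mathrm{min},m+1}<n$ and $\deg P_{\mathrm{min},m+1}=n$ separately, the latter via the shift identity $P_{\mathrm{min},m+1}(x+2)-P_{\mathrm{min},m+1}(x)\equiv 0$. Your argument is direct and avoids induction entirely: the symmetrization $R(x)=P_{\mathrm{min},m+1}(x+1)+P_{\mathrm{min},m+1}(x)$ has constant fractional part on $\mathbb{Z}$, so $R(x)-R(0)$ is integer-valued of degree $d$ with leading coefficient $2b$, which forces $2b=\pm 1/d!$ immediately from \cref{lem: minimum leading coefficient} together with the minimality bound $|b|\le 1/(2d!)$. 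You then feed this into the relation $2P_{\mathrm{min},m+1}\equiv P_{\mathrm{min},m}$ to contradict the bound on $a$.

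What your approach buys is brevity and independence from the inductive scaffolding; in particular you never need to know the leading coefficient of $P_{\mathrm{min},m}$ in advance. What the paper's approach buys is that its inductive framework and the identity $P(x+2)-P(x)\equiv 0$ are reused in the subsequent \cref{lem: minimal polynomial degree}, where the increment is shown to be exactly one; your symmetrization trick does not obviously extend to that finer statement, so the paper's machinery is not wasted even though it is heavier here.
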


\begin{proof}
Note that \cref{eq: hierachy of phase gate representation} is already true for $m=2$ since we know the minimal polynomials $P_{\mathrm{min},1}(x)=x/2$ and $P_{\mathrm{min},2}(x)=x^{2}/4$ by using the coefficient reduction procedure.

Therefore, working towards a proof by induction, we assume that Eq.~\eqref{eq: hierachy of phase gate representation} holds for $P_{\mathrm{min},m}(x)$, i.e.
\begin{equation}\label{eq: inductive assumption}
n = \deg\big(P_{\mathrm{min},m}\big) > \deg\big(P_{\mathrm{min},k}\big), \quad \forall k < m.
\end{equation}

In the following paragraph, we determine the leading coefficient of $P_{\mathrm{min},m}(x)$. Since the logical gates satisfy the recurrence relation $\Lambda_{m-1}=\Lambda_{m}^2 $, their polynomial representations obey the corresponding recurrence equation
\begin{equation}
\label{eq: recurrent relation for Pm}
2 P_{\mathrm{min},m}(x) \equiv P_{\mathrm{min},m-1}(x).
\end{equation}
Given our assumption that $P_{\mathrm{min},m}(x)$ is a minimal polynomial representation, its leading coefficient $p_{m,n}$ must satisfy the upper bound 
\begin{equation}
    \label{eq: upper bound to coefficient of minimal polynomial}
    |p_{m,n}| \leq \frac{1}{2 (n!)}.
\end{equation}
However, from the recurrence relation in Eq.~\eqref{eq: recurrent relation for Pm} we know that $2P_{\mathrm{min},m}(x)\equiv P_{\mathrm{min},m-1}(x)$. The inductive assumption \cref{eq: inductive assumption} therefore allows us to use Corollary~\ref{corollary: leading coefficient of stabilizer polynomial}, showing that the leading coefficient of $2P_{\mathrm{min},m}(x)$ must be
\begin{equation}
     \label{eq: leading coefficient of Pm}
    2|p_{m,n}| = \frac{1}{ n!}.
\end{equation}

Given the leading coefficient of $P_{\mathrm{min},m}(x)$ satisfies $|p_{m,n}|=1/(2n!)$, we now show that it leads to  
\begin{equation}
    \deg P_{\mathrm{min},m+1} > \deg P_{\mathrm{min},m}.
\end{equation}
Assume, for the sake of contradiction, that
\begin{equation}\label{eq: contradictive assumption 1}
    \text{deg}P_{\mathrm{min},m+1}  < \text{deg}P_{\mathrm{min},m}.
\end{equation}
From the recurrence relation in Eq.~\eqref{eq: recurrent relation for Pm}, we have
    \begin{equation}
         2 P_{\mathrm{min},m+1}(x) \equiv P_{\mathrm{min},m}(x). 
    \end{equation}
By the assumption~\cref{eq: contradictive assumption 1}, the left-hand side is a polynomial of degree strictly less than $n$, while the right-hand side has degree $n$. Corollary~\ref{corollary: leading coefficient of stabilizer polynomial} implies that the leading coefficient $|p_{m,n}|$ of $P_{\mathrm{min},m}(x)$ is $1/n!$, which is a contradiction to Eq.~\eqref{eq: leading coefficient of Pm}. Therefore, we conclude that
\begin{equation}
    \deg P_{\mathrm{min},m+1} \geq \deg P_{\mathrm{min},m}.
\end{equation}

We now eliminate the scenario in which the degree of $P_{\mathrm{min},m+1}(x)$ equals the degree of $P_{\mathrm{min},m}(x) $. Again, for the sake of contradiction, we assume that
\begin{equation}\label{eq: contradictive assumption 2}
    \deg P_{\mathrm{min},m+1} = \deg P_{\mathrm{min},m}.
\end{equation}
Under this assumption, the recurrence relation Eq.~\eqref{eq: recurrent relation for Pm} implies that
\begin{equation}
    2 p_{m+1,n} + j \frac{1}{n!} = p_{m,n} = \pm \frac{1}{2(n!)},
\end{equation}
where $j$ is some integer and $1/n!$ is simply the leading coefficient of the stabilizer $L_{n}(x)$. Together with the requirement that $|p_{m+1,n}| \leq 1/(2n!)$ from Eq.~\eqref{eq: upper bound to coefficient of minimal polynomial}, we can deduce that the leading coefficient of $P_{\mathrm{min},m+1}(x) $ must necessarily be 
 \begin{equation}\label{eq: contradictory leading coefficient}
        |p_{m+1,n}| = \frac{1}{4(n!)}.
 \end{equation}

In the following argument, we will show that Eq.~\eqref{eq: contradictory leading coefficient} cannot be true. We will make use of the identity
    \begin{equation}
        \label{eq: parity identity for Pm}
        \begin{aligned}
            P_{\mathrm{min},m+1}(x+2) - P_{\mathrm{min},m+1}(x) \equiv 0,
        \end{aligned}
    \end{equation}
which holds for all polynomial representations of the phase gate $\Lambda_{m+1}$.  This identity follows from the observation that $x$ and $x+2$ share the same parity, both are either even or odd. Consequently, the polynomials $P_{\mathrm{min},m+1}(x+2)$ and $P_{\mathrm{min},m+1}(x) $ represent the same logical gate.
%Since the left-hand side is equivalent to zero, it must lie in the integer linear span of $L_{j}(x)$ for $j<n$. 

However, the left-hand side (LHS) of Eq.~\eqref{eq: parity identity for Pm} cannot be an integer-valued polynomial.
%lie in the integer linear span of the integer-valued polynomials $\{ L_{j}(x) \}_{j < n}$.
To see this, we explicitly expand the LHS:
\begin{subequations}
\begin{align}
& P_{\mathrm{min},m+1}(x+2) - P_{\mathrm{min},m+1}(x) \\
&= \sum_{j=1}^{n} p_{m+1,j} \big[(x + 2)^j - x^j\big] \\
&= 2 n p_{m+1,n}  x^{n-1} + O(x^{n-2}) \\
&= \pm \frac{1}{2 (n - 1)!} x^{n-1} + O(x^{n-2}),
\end{align}
\end{subequations}
where we have used that $(x+2)^j - x^j$ is a polynomial of degree $j-1$, and in the last line we used~\cref{eq: contradictory leading coefficient}. Since the coefficient of the leading term $x^{n-1}$ is $\pm \frac{1}{2(n-1)!}$, Corollary~\ref{corollary: leading coefficient of stabilizer polynomial} tells us that the LHS of Eq.~\eqref{eq: parity identity for Pm} cannot be an integer-valued polynomial. This contradicts the assumption Eq.~\eqref{eq: contradictive assumption 2} . Therefore, we conclude that
\begin{equation}
\deg P_{\mathrm{min},m+1} > \deg P_{\mathrm{min},m}(x).
\end{equation}
\end{proof}

\subsection{Degree of the minimal polynomial $P_{\mathrm{min},m}(x)$}
\label{section: induction proof for minimal polynomial of LambdaM}

In the previous section, we have proven that the degree of the minimal polynomials $P_{\mathrm{min},m}(x)$ follows a hierarchy Lemma~\ref{lemma: hierachy of phase gate representation}. In this section, we will prove that the difference in the degrees of the polynomials $P_{\mathrm{min},m}(x)$ and $P_{\mathrm{min},m+1}(x)$ is one, which leads to the theorem.

\begin{lemma}\label{lem: minimal polynomial degree}
    The degree of any minimal polynomial phase gate representation of the $\Lambda_{m}$ gate satisfies
    \begin{equation}
        \mathrm{deg} P_{\mathrm{min},m}=m.
    \end{equation}
\end{lemma}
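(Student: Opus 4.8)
The plan is to prove Lemma~\ref{lem: minimal polynomial degree} by induction on $m$, using Lemma~\ref{lemma: hierachy of phase gate representation} for the lower bound and an explicit construction for the upper bound. The base cases $m=1,2$ are already in hand: $P_{\mathrm{min},1}(x)=x/2$ by Eq.~\eqref{eq: definition F1} and $P_{\mathrm{min},2}(x)=x^{2}/4$ from the coefficient reduction procedure, both of the claimed degree. For the inductive step, assume $\deg P_{\mathrm{min},m}=m$ with $m\ge 2$. Lemma~\ref{lemma: hierachy of phase gate representation} gives $\deg P_{\mathrm{min},m+1}\ge m+1$ at once, so it remains to exhibit \emph{any} polynomial phase gate representation of $\Lambda_{m+1}$ of degree $m+1$: since a polynomial of smaller degree is automatically smaller under the lexicographic order, the minimal polynomial $P_{\mathrm{min},m+1}$ has minimal degree among all representations, and hence $\deg P_{\mathrm{min},m+1}\le m+1$, closing the induction.

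To construct such a representation I would start from the degree-$2m$ polynomial $F_0(x):=2^{m-1}P_{\mathrm{min},m}(x)^2$, which represents $\Lambda_{m+1}$ by Eq.~\eqref{eq: recursive formula Fm}, and reduce its degree down to $m+1$ by subtracting integer-valued polynomials (i.e.\ $\mathbb{Z}$-combinations of the basis $\{L_j\}$ of Eq.~\eqref{eq: trivial polynomial stabilizers}), which preserves the fact that it represents $\Lambda_{m+1}$. Writing $P_{\mathrm{min},m}(x)=n(x)+a(x)2^{-m}$ with $a(x)=(1-(-1)^{x})/2$ and $n(x)\in\mathbb{Z}$, set $J_0(x):=2F_0(x)-P_{\mathrm{min},m}(x)=2^{m}P_{\mathrm{min},m}(x)^{2}-P_{\mathrm{min},m}(x)$; a one-line computation (using $a^2=a$) gives $J_0(x)=n(x)\,(2^{m}n(x)+2a(x)-1)$, so $J_0$ is an integer-valued polynomial of degree $2m$ with $J_0(x)\equiv n(x)\pmod 2$ at every integer. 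Expanding $J_0=\sum_j b_j L_j$ with $b_j\in\mathbb{Z}$, the crux is to show that $b_j$ is \emph{even} for all $j\ge m+2$. Given this, $H:=\sum_{j\le m+1}b_j L_j$ differs from $J_0$ by twice an integer-valued polynomial, hence $H$ is itself integer-valued with $H(x)\equiv n(x)\pmod 2$; then $G(x):=\tfrac12(P_{\mathrm{min},m}(x)+H(x))$ equals $\tfrac12(n(x)+H(x))+a(x)2^{-(m+1)}$ with integer first term, which is precisely the form of a representation of $\Lambda_{m+1}$. Since $P_{\mathrm{min},m}$ has degree $m$ and $G$ cannot have degree $\le m$ by the lower bound, $\deg G=m+1$, completing the step.

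The main obstacle is the parity claim $2\mid b_j$ for $j\ge m+2$. The leading coefficient $b_{2m}=2^{m-2}\binom{2m}{m}$ is even for every $m\ge 2$, handling the top term. For the intermediate coefficients I would combine the explicit form $J_0=n(2^{m}n+2a-1)$ with the coefficient bound $|p_{m,k}|\le 1/(2\,k!)$ from Eq.~\eqref{eq: second part of theorem minimal degree polynomial} to control the monomial coefficients of $P_{\mathrm{min},m}^{2}$ in degrees above $m+1$, and then propagate this through the triangular monomial-to-$\{L_j\}$ change of basis. Should that calculation prove too cumbersome, an alternative is a contradiction argument using the recurrence $2P_{\mathrm{min},m+1}\equiv P_{\mathrm{min},m}$ of Eq.~\eqref{eq: recurrent relation for Pm}: assuming $\deg P_{\mathrm{min},m+1}=n'\ge m+2$, Lemma~\ref{lem: minimum leading coefficient} forces $|p_{m+1,n'}|=1/(2\,n'!)$, and then iterating the parity identity $P_{\mathrm{min},m+1}(x+2)-P_{\mathrm{min},m+1}(x)\equiv 0$ and inspecting successive sub-leading coefficients should produce an integer-valued polynomial whose leading coefficient is not an integer multiple of the relevant $1/(\cdot)!$, contradicting Lemma~\ref{lem: minimum leading coefficient}. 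All remaining ingredients---the base cases, the lower bound, and the check that $G$ is a legitimate representation---are routine.
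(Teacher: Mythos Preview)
Your overall architecture is sound and matches the paper: induction on $m$, with the lower bound $\deg P_{\mathrm{min},m+1}\ge m+1$ supplied by Lemma~\ref{lemma: hierachy of phase gate representation}, so that only the upper bound remains. Where you diverge is in how to obtain the upper bound. The paper does \emph{not} construct a degree-$(m+1)$ representation; instead it writes $\deg P_{\mathrm{min},m+1}=m+c$ with $c\ge 1$ and rules out $c\ge 2$ by contradiction.

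The genuine gap in your proposal is the parity claim $2\mid b_j$ for all $j\ge m+2$. Your first line of attack (propagating the bound $|p_{m,k}|\le 1/(2k!)$ through the monomial-to-$L_j$ change of basis) is not a proof as stated: those inequalities constrain sizes, not parities, and the triangular change of basis mixes many monomial coefficients with integer (not necessarily even) weights, so one cannot read off the parity of $b_j$ from size bounds alone. The observation that $b_{2m}=2^{m-2}\binom{2m}{m}$ is even is correct but does not cascade to lower $b_j$ without further input. Your fallback contradiction sketch is much closer to what actually works, but it is missing a crucial second identity: the paper uses \emph{both}
\[
P_{\mathrm{min},m+1}(x+2)-P_{\mathrm{min},m+1}(x)\equiv 0
\quad\text{and}\quad
P_{\mathrm{min},m+1}(x)+P_{\mathrm{min},m+1}(-x)\equiv P_{\mathrm{min},m}(x),
\]
together with a careful case split on the parities of $m$ and of $c$, plus the explicit second-leading coefficient $\ell_{n,n-1}=1/(2(n-1)!)$ of $L_n$ for even $n$. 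The shift identity alone is not enough: for certain parity combinations it is the reflection identity that produces the forbidden leading coefficient. Until you either prove the parity claim directly or flesh out the contradiction argument with both identities and the full case analysis, the inductive step is incomplete.
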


\begin{proof}
We will use induction to prove this lemma; the base case is $\text{deg}(P_{\mathrm{min},1} )=1$. Before continuing, we will present two facts about the basis of integer-valued polynomials $\{ L_{n }(x) \}$ that we will use. First, if $n$ is odd, then $L_{n}(x)$ contains only odd powers of $x$. This can be shown by noting that for odd $n$ the integer-valued polynomial $L_{n}(x)$ from~\cref{eq: definition of integer value poly} can be written as 
\begin{equation}
    L_{n}(x) = \frac{x}{n!} \prod_{i=1}^{(n-1)/2} (x^2-i^2)
\end{equation}
which contains only odd powers of $x$. Second, when $n$ is even, the second-leading coefficient of $L_{n}(x)$ is nonzero and given by
\begin{equation}
    \label{eq: second leading coeff of even Kn}
    \ell_{n,n-1}= \frac{1}{2(n-1)!},
\end{equation}
where we have written $L_{n}(x)=\sum_{i}\ell_{n,i} x^{i}$.
This can be seen by noting that for even $n$, the stabilizer polynomial $L_{n}(x) $ can be expressed as 
\begin{subequations}
    \begin{align}
        L_{n}(x) &= \frac{ (x+n/2)}{n} \frac{x \prod_{i=1}^{n/2-1} (x^2-i^2)}{ (n-1)!}\\
        &= \Big( \frac{x}{n} + \frac{1}{2} \Big) L_{n-1}(x).
    \end{align}
\end{subequations}
Expanding this product shows that the two leading coefficients of $L_{n}(x)$ are
\begin{equation}
    \ell_{n,n} = \frac{1}{n!},~\quad \ell_{n,n-1}= \frac{1}{2(n-1)!}. 
\end{equation}
    
As our inductive assumption, we assume that the degree of $P_{\mathrm{min},m}(x)$ is $m$. For convenience, we denote the degree of $P_{\mathrm{min},m+1}(x)$ as 
\begin{equation}
    \text{deg}(P_{\mathrm{min},m+1} ) = m+c,
\end{equation}
where $c\geq 1$ due to Lemma~\ref{lemma: hierachy of phase gate representation}. Our goal is to prove that $c=1$. To that end, we will repeatedly use the following two identities
\begin{subequations}
\begin{align}
\label{eq: identity 1 for Pm}
P_{\mathrm{min},m+1}(x) + P_{\mathrm{min},m+1}(-x) &\equiv P_{\mathrm{min},m}(x), \\
\label{eq: identity 2 for Pm}
P_{\mathrm{min},m+1}(x+2) - P_{\mathrm{min},m+1}(x) &\equiv 0.
\end{align}
\end{subequations}
These identities follow directly from the recurrence relation for the phase gates in Eq.~\eqref{eq: recurrent relation for Pm}, along with the fact that the action of $\Lambda_{m}$ on a $y$-eigenstate of the $x$ operator $\ket{y}_{x}$ is determined only by whether $y$ is even or odd and not on its specific value.

Our proof strategy is to show that if  $c > 1$, then either Eq.~\eqref{eq: identity 1 for Pm} or/and Eq.~\eqref{eq: identity 2 for Pm} must be violated. This contradiction will imply that $c=1$ is the only consistent possibility. To carry out the argument, we consider two separate cases depending on the parity of $m$: when $m$ is even and when $m$ is odd.
    
\textbf{Case 1: $m$ is an even number} 

\textbf{Subcase A: $c \notin \{2,4,\dots,\} $} 

As a first step, we show that $c$ cannot belong to the set of even integers greater than or equal to $2$, i.e $c \notin \{2,4,\dots \}$. Suppose, for the sake of contradiction, that $c$ belongs to this set. Then, Eq.~\eqref{eq: identity 1 for Pm} implies
\begin{subequations}
    \label{eq: violation for c even}
    \begin{align}
             P_{\mathrm{min},m}(x)  &\equiv   P_{\mathrm{min},m+1}(x) +  P_{\mathrm{min},m+1}(-x) \\
             &= \pm \frac{x^{m+c}}{(m+c)!} +  2 \sum_{i=2,4,\dots}^{m+c-2} p_{m+1,i} x^{i}\\
             &  \equiv 2 \sum_{i=2,4,\dots}^{m+c-2} p_{m+1,i} x^{i} \mp \sum_{i=1}^{m+c-1} \ell_{m+c,i} x^{i}.\label{eq: violation for c even last line}
    \end{align}
\end{subequations}
where we have used Eq.~\eqref{eq: leading coefficient of Pm} in the second line, and we have added or subtracted the polynomial $L_{m+c}(x)$ in the third line to remove the leading term on the second line.

The leading coefficient on Eq.~\eqref{eq: violation for c even last line} is $\ell_{m+c,m+c-1}$ and its value is given by Eq.~\eqref{eq: second leading coeff of even Kn}, which is \textit{not }an integer multiple of $1/(m+c-1)!$. On the other hand, the degree of Eq.~\eqref{eq: violation for c even last line} is $m+c-1$ which is greater than the degree of $P_{\text{min},m}(x)$. Therefore, Corollary~\ref{corollary: leading coefficient of stabilizer polynomial} implies that the leading coefficient on the third line of Eq.~\eqref{eq: violation for c even} must be integer multiple of $1/(m+c-1)!$. This leads to a contradiction and we conclude that $c$ cannot belong to the set of positive even integers.

\textbf{Subcase B: $c \notin \{3,5,\dots \} $} 

We now prove that $c$ cannot be in the set $ \{3,5,\dots \}$. Let us assume by contradiction that $c \in \{3,5,\dots \}$. Then Eq.~\eqref{eq: identity 1 for Pm} implies that 
\begin{subequations}
    \begin{align}
                \label{eq: m even c odd recurrence}
             P_{\mathrm{min},m}(x) & \equiv P_{\mathrm{min},m+1}(x) + P_{\mathrm{min},m+1}(-x) \\
             & = 2 \sum_{i=2,4,\dots}^{m+c-1} p_{m+1,i} x^{i} 
    \end{align}
\end{subequations}
Due to $P_{\mathrm{min},m+1}(x)$ being a minimal polynomial, by definition, there are three possibilities for $p_{m+1,m+c-1}$. They are 
\begin{subequations}
    \begin{align}
            \label{eq: first possibility}
             0  < 2 |p_{m+1,m+c-1}| & < \frac{1}{(m+c-1)!}, \\
            \label{eq: second possibility}
            2 |p_{m+1,m+c-1}|  & = \frac{1}{(m+c-1)!}, \\
            \label{eq: third possibility}
            p_{m+1,m+c-1}      & = 0. 
    \end{align}
\end{subequations}

For the first possibility Eq.~\eqref{eq: first possibility}, the leading coefficient on the right-hand side of Eq.~\eqref{eq: m even c odd recurrence} is not an integer multiple of $1/(m+c-1)!$ and thus we can again use Corollary~\ref{corollary: leading coefficient of stabilizer polynomial} to find a contradiction. 

For the second possibility Eq.~\eqref{eq: second possibility}, the right-hand side of Eq.~\eqref{eq: m even c odd recurrence} can be reduced using the integer-valued polynomial $L_{m+c-1}(x)$ which yields
\begin{equation}
    \label{eq: reduction by Kn trick}
    P_{\mathrm{min},m}(x)  \equiv  2 \sum_{i=2,4,\dots}^{m+c-3} p_{m+1,i} x^{i} - \sum_{i=1}^{m+c-2} \ell_{m+c-1,i} x^{i}. 
\end{equation}
However, the leading coefficient $\ell_{m+c-1,m+c-2}$ on the right-hand side is not an integer multiple of $1/(m+c-2)!$ due to Eq.~\eqref{eq: second leading coeff of even Kn}. Therefore, we can again use Corollary~\ref{corollary: leading coefficient of stabilizer polynomial} to find a contradiction. 
   
For the third possibility Eq.~\eqref{eq: third possibility}, we apply Eq.~\eqref{eq: leading coefficient of Pm} together with Eq.~\eqref{eq: identity 2 for Pm}. This yields  
\begin{subequations}
    \begin{align}
            0 & \equiv P_{\mathrm{min},m+1}(x+2)- P_{\mathrm{min},m+1}(x) \\
            & = \pm \frac{x^{m+c-1}}{(m+c-1)!} \pm \frac{x^{m+c-2}}{(m+c-2)!} + O(x^{m+c-3}) \\
            % & \equiv \Big( \pm \frac{1}{(m+c-2)!} \mp k_{m+c-1,m+c-2} \Big) x^{m+c-2}  + O(x^{m+c-3}) \\
            & \equiv \mp \ell_{m+c-1,m+c-2} x^{m+c-2}  + O(x^{m+c-3}),
    \end{align}
\end{subequations}
where on the second line we have used that $|p_{m+1,m+c}|=1/(2(m+c)!)$ due to Eq.~\eqref{eq: leading coefficient of Pm}, and to get to the third line we have subtracted the polynomials $L_{m+c-1}(x)$ and $L_{m+c-2}(x)$ respectively to eliminate the leading terms. The resulting leading coefficient on the final line is not an integer multiple of $1/(m+c-2)!$ due to Eq.~\eqref{eq: second leading coeff of even Kn}. Hence, we can use Lemma~\ref{lem: minimum leading coefficient} to find a contradiction.

In summary, $c$ cannot be in the set $\{3,5,\dots\}$ the only remaining possibility is $c=1$. In this case, the right-hand side of Eq.~\eqref{eq: m even c odd recurrence} matches the degree of the left-hand side, and thus we cannot use Corollary~\ref{corollary: leading coefficient of stabilizer polynomial}. Since we know from Section~\ref{sec: proof of coefficient-reduction procedure} that $P_{\mathrm{min},m+1}(x)$ exists,  we conclude that if $m$ is even, then $\text{deg}\Big( P_{\mathrm{min},m+1}(x) \Big) = m+1$.

\textbf{Case 2: $m$ is an odd number} 

The proof for the case where $m$ is odd follows a similar structure to that of the even case. However, the order of the arguments is reversed: we first show that $c \notin  \{3,5,\dots \}$, and then show that $c \notin  \{2,4,\dots \}$. The logic of each step remains analogous, with appropriate adjustments to account for the parity of $m$.
\end{proof}

Finally, Lemmas~\ref{lem: minimal polynomial degree} and \ref{lemma: guarantee of coefficient reduction procedure} trivially combine to give the main theorem, Theorem~\ref{theorem: minimal degree polynomial}, as promised.

\section{Representation of multi-qubit polynomial phase gates}
\label{appendix: optimal representation of multi-qubit gates}
In this appendix, we present a method for constructing minimal representations of multi-qubit gates in the square GKP code using multi-mode polynomial phase stabilizers. These results extend the single-qubit polynomial stabilizer formalism to the multi-mode setting, highlighting its generality. We will begin by outlining how to construct \textit{a} multi-mode polynomial phase gate that can implement a given logical diagonal multi-qubit gate. In particular, we focus on the multi-controlled phase gate  $C^{N-1}{\Lambda_m}$ acting on $N$ qubits, which applies a phase $\mathrm{exp}(i2\pi/2^{m})$ to the logical state $\ket{\overline{1\dots 1}}$ while leaving all other logical computational basis states unchanged. Any diagonal logical multi-mode gate can then be generated by a product of $C^{N-1}\Lambda_{m}$ gates. After this, we will provide a characterization of the multi-mode polynomial phase stabilizers (which correspond to integer-valued polynomials on multiple variables) and describe the multi-mode coefficient reduction procedure that returns a minimal representation of a given multi-mode polynomial phase gate.

We begin by constructing \textit{a} multi-mode polynomial phase gate that implements the logical $C^{N-1}\Lambda_{m}$ gate. As in Appendix~\ref{appendix: representation logical phase gate}, we start with the base case case $C^{N-1}{\Lambda}_1$ and then recursively build higher-order gates $C^{N-1}{\Lambda}_m$ from it.

We first claim that the $C^{N-1}\Lambda_1$ gate can be implemented by the polynomial 
\begin{equation}
    P[C^{N-1}{\Lambda}_1 ](\bo{x}) = \frac{x_1 x_2 \dots x_N }{2},
\end{equation}
where $\bo{x}=(x_1,\dots,x_n)$. To see this, we decompose each variable $x_i$ as
\begin{equation}
    x_i = 2 n_i + r_i,
\end{equation}
with $n_i \in \mathbb{Z}$ and $r_i \in \{0,1\}$, indicating whether $x_i$ is even ($r_i=0$) or odd ($r_i=1$). Substituting into $P[C^{N-1}{\Lambda}_1 ](\bo{x})$ gives
\begin{equation}
P[C^{N-1}{\Lambda}_1 ](\bo x) \bmod 1 = \frac{r_1 r_2 \cdots r_N}{2},
\end{equation}
which implements the ${\Lambda}_1$ gate precisely when all $x_i$ are odd, i.e.~when all encoded qubits are in the logical state $\ket{\bar{1}}$.

Next, we construct a (non-optimized) representation of the logical $C^{N-1}\Lambda_{m+1}$ gate from $P[C^{N-1}\Lambda_{m}]$. The key is to reuse the recursion formula from Eq.~\eqref{eq: recursive formula Fm}, suitably generalized to multi-qubit gates. In this setting, we require that each representation $P[C^{N-1}\Lambda_{m}](\bo x)$ gate must be able to be decomposed as
\begin{equation}
    P[C^{N-1}\Lambda_{m}](\bo x) = n(\bo x) + \frac{a(\bo x)}{2^{m}},
\end{equation}
where $n(\bo x)$ is again some integer depending on $\bo x$ but now the parity term satisfies
\begin{equation}
    a(\bo x) = \begin{cases}
        1 \text{ if } x_1 =\dots=x_N=1 \\
        0 \text{ otherwise }
    \end{cases} \quad .
\end{equation}
Consequently, we obtain the general recursive formula for the $C^{N-1}\Lambda_m$ gate
\begin{equation}
    \label{eq: multi-qubit control gate recursive}
    P[C^{N-1}\Lambda_{m+1}](\mathbf{x}) = \frac{(x_1 \dots x_N)^{2^{m}}}{2^{m+1}};
\end{equation}
the proof of \cref{eq: multi-qubit control gate recursive} follows the same steps as in Appendix~\ref{sec: proof of coefficient-reduction procedure}.  

Next, we need to identify a complete basis for the integer-valued polynomials on $N$ variables. Such a basis is given by the set 
\begin{equation}\label{eq: multi-variate integer-valued poly basis}
    \big\{L_{i_1}(x_{1})L_{i_2}(x_{2})\dots L_{i_N}(x_{N})\big\}.
\end{equation}
To see why this is indeed a complete basis, note that we already know this is true for $N=1$, so preparing for an inductive argument let us assume it is true for $N=n$. Now suppose we have an integer-valued polynomial $P(\vect{x})$ on $N=n+1$ variables; that is, it satisfies $P(\mathbb{Z}^{n+1})\subseteq\mathbb{Z}$. Viewing $P(\vect{x})$ as a univariate integer-valued polynomial of the variable $x_{n+1}$, we can always write it as
\begin{equation}
    \label{eq: decomposing multivariate polynomial into univariate}
    P(\vect{x})=\sum_{i_{n+1}}c_{i_{n+1}}(x_{1},\dots,x_{n})L_{i_{n+1}}(x^{n+1}),
\end{equation}
with $c_{i_{n+1}}(x_{1},\dots,x_{n})$ being an integer-valued polynomial using the completeness of the univariate integer-valued polynomial basis~\cref{eq: definition of integer value poly}.

Note, by our inductive assumption, the integer-valued polynomial $c_{i_{n+1}}(x_{1},\dots,x_{n})$ can be decomposed into
\begin{equation}
    \label{eq: decomposing multinomial coefficient}
    c_{i_{n+1}}(x_{1},\dots,x_{n}) = \sum_{i_1,\dots,i_n} c_{i_1,\dots,i_n}  L_{i_1}(x_1) \dots L_{i_n}(x_n),
\end{equation}
with the coefficient $c_{i_1,\dots,i_n}$ being integers. Therefore, by substituting Eq.~\eqref{eq: decomposing multinomial coefficient} back to Eq.~\eqref{eq: decomposing multivariate polynomial into univariate}, we find that $P(\vect{x})$ can always be written as
\begin{equation}
P(\vect{x})=\sum_{i_{1},\dots,i_{n+1}}c_{\vect{i}}\big(L_{i_{1}}(x_{1})\dots L_{i_{n+1}}(x_{n+1})\big),
\end{equation}
for some integer coefficients $c_{\vect{i}}\in\mathbb{Z}$. This shows that \cref{eq: multi-variate integer-valued poly basis} is indeed a complete basis of the multi-variate integer-valued polynomials.

Finally, given a polynomial phase gate $P[C^{N-1}\Lambda_{m}](\vect{x})$, we minimize the representation using a multi-variate coefficient reduction procedure as follows. In particular, the coefficient of each monomial $x_{1}^{d_{1}}\dots x_{N}^{d_{N}}$ in $P[C^{N-1}\Lambda_{m}](\vect{x})$ will be reduced by adding or subtracting a suitable integer multiple of the integer-valued polynomial $L_{d_{1}}(x_{1})\dots L_{d_{N}}(x_{N})$. However, defining the order in which these coefficients are reduced is crucial for the coefficient reduction procedure to work correctly.

In particular, for any monomial $x_{1}^{d_1}\dots x_{N}^{d_N}$ that appears in the polynomial $ P[C^{N-1}\Lambda_m]$, we define its degree as 
\begin{equation}
    \text{deg}(x_{1}^{d_1}\dots x_{N}^{d_N}) = d_1 + d_2 + \dots + d_N. 
\end{equation}
The degree of the polynomial $P[C^{N-1}\Lambda_m](\mathbf{x})$ is simply the maximal degree of its terms. The key difference in this multi-variate coefficient reduction procedure is that, for any given degree $D$ of the polynomial $F_m[C^{N-1}\Lambda_m](\mathbf{x})$, there can be multiple monomials of the form 
\begin{equation}
    x_{1}^{d_1}\dots x_{N}^{d_N}
\end{equation}
such that
\begin{equation}
    d_1 + \cdots + d_N = D.
\end{equation}

The crucial observation is that the coefficients of any two monomials with the same degree can be independently reduced, without the reduction of one of the monomials affecting the reduction of the other. This follows from the fact that the polynomial $L_{d_{1}}(x_{1})\dots L_{d_{N}}(x_{N})$ only contains monomials $x^{m_{1}}_{1}\dots x^{m_{N}}_{N}$ where \textit{every} $m_{n}$ satisfies $m_{n}\leq d_{n}$. This greatly simplifies the coefficient reduction procedure: we simply begin by reducing the monomials in $P[C^{N-1}\Lambda_{m}]$ that have the highest degree (in any order), before reducing the monomials with the next highest degree (in any order), and so on until we have minimized all the coefficients in the polynomial. The output $P_{\text{min}}[C^{N-1}\Lambda_{m}](\vect{x})$ will once again be minimal under a suitably generalized version of the lexicographic order defined in \cref{section: optimized logical gate representations}.

As an example, consider the controlled-$S$ gate, $CS \equiv C^{1}\Lambda_2$. From Eq.~\eqref{eq: multi-qubit control gate recursive}, the recursive formula gives
\begin{equation}
P[CS](x_1,x_2)
 = \frac{x_1^2 x_2^2}{4}.
\end{equation}
This representation, however, is reducible by the stabilizer $L_2(x_1)L_2(x_2)$. Subtracting this term yields the optimized form
\begin{equation}
P_{\text{min}}[CS](x_1,x_2)
 = -\frac{x_1^2 x_2 + x_1 x_2^2 + x_1 x_2}{4}.
\end{equation}
This irreducible representation achieves the minimal possible coefficients and interaction order; hence, constitutes a minimal representation for the $CS$ gate. The representation for the the $CCZ\equiv C^{2}\Lambda_{1}$ gate given in \cref{eq: CCZ gate} can be derived in the same way.

\section{Statistical moment of propagated errors}
\label{appendix: statistical moment of errors}
In this appendix, we present a quantitative analysis of the propagated error operator $\mathcal{E}(\cdot)=E\cdot E^{\dag}$ induced by a general polynomial phase gates $\exp[i2\pi P({q}/\sqrt{\pi})]$, filling in the gaps in our derivation in the main text in \cref{sec: error propagation and fault-tolerant theorem}. As a reminder, we define 
\begin{gather}
    \Lambda=\mathrm{exp}\big[i2\pi P(q/\sqrt{\pi})\big],\quad P(x)=\sum_{j=1}^{n}a_{j}x^{j},\tag{\ref{eq: logical operator definition}}\\
    {E} = \Lambda\,\mathrm{Env}_{\Delta_{q},\Delta_{p}}\,\Lambda^{\dag},\quad \mathcal{E}(\cdot)=E\cdot E^{\dag}\tag{\ref{eq: E definition}}
\end{gather}
and we will derive \cref{eq: mean scaling,eq: covariance matrix scaling}, which are asymptotic expressions for the covariance and mean of the probability density function $p_{\mathcal{E}(\vect{v})}$ corresponding to the characteristic function of the twirled channel $\mathrm{Twirl}[\mathcal{E}]$~\cref{eq: twirling definition}.

In Appendix~\ref{section: Twirling approximation of the propagated error}, we perform the twirling approximation and write down an explicit expression for $p_{\mathcal{E}}(\bo v)$. To characterize the noise induced by the $\text{Twirl}[\mathcal{E}]$, we compute the first and second statistical moments—the mean displacement and covariance matrix—of $p_{\mathcal{E}}(\bo v)$, as detailed in Appendix~\ref{appendix: Conditional statistical moments of the twirled error operato}. These quantities serve as proxies for the gate infidelity, since the exact error rate is generally analytically intractable. In Appendix~\ref{app: optimal bias estimate}, we derive our main result: the asymptotic scaling $\Delta \to 0$ of these statistical moments in terms of both the polynomial degree defining the logical gate and its coefficients. This analysis confirms our intuition that the highest-order interaction term sets the fundamental limit on gate fidelity, and provides a quantitative justification for the lexicographic ordering introduced in the main text. Finally, in Appendix~\ref{appendix: scaling of covariance matrices at finite squeezing}, we analyze the scaling of the first and second moments of $p_{E}(\bo v)$ at finite $\Delta$ and show that minimal polynomial phase gates $\Lambda_{n}[q/\sqrt{\pi}]$ with high degree $n$ \textit{can} have lower gate infidelity compared to the minimal cubic phase gate, providing a theoretical explanation for the results in \cref{fig:Lambda_gates}.

\subsection{Twirling approximation of $E$}
\label{section: Twirling approximation of the propagated error}

Throughout this appendix, we employ three approximations to simplify the analytical calculation. First, we approximate the characteristic function of the biased envelope operator using the Taylor approximation $\mathrm{tanh}(x)\approx x$ given in Eq.~\eqref{eq: envelope characteristic function approx}. Second, we assume that the GKP state is strongly squeezed in the position quadrature so that $v_{q } \propto O(\Delta_q)$ and make the approximation 
\begin{equation}\label{eq: first-order approximation}
    P_n({x} + \sqrt{2} v_q) - P_n({x}) \approx \sqrt{2} v_q \frac{d P_n({x})}{d {x}} \equiv \sqrt{2} v_q  P_{n}'(x).
\end{equation}
The third approximation is used in Appendix~\ref{app: optimal bias estimate}, where we estimate the optimal asymmetry $\lambda_{\rm opt}$ by minimizing the propagated shear error, following the same approach as in Section~\ref{sec: error propagation and fault-tolerant theorem}.

While we could proceed without resorting to this approximation, it significantly simplifies the algebra and still provides us the dominant contribution in the error operator. Under these two approximations, the characteristic function of the propagated error ${E}$ is given by 
\begin{multline}\label{eq: characteristic of general Pn propagated error}
    \chi_{{E}}(\bo v)= \frac{ \mathcal{N}(0,\Delta_{q}^2,v_q)} {\sqrt{2\pi}}\\
    \times\int dx e^{-i \sqrt{2\pi} v_p  x}  e^{- \frac{\Delta_p^2 x^2}{2}  }  e^{i 2\sqrt{2} \pi v_q P^{'}_{n}(\frac{x}{\sqrt{\pi}})  },
\end{multline}
which we derive more explicitly in \cref{appendix: proof of fault-tolerance theorem} in \cref{eq: characteristic of P3 propagated error}.

Performing the twirling approximation, the error channel $\mathcal{E}$ becomes a random displacement channel 
    \begin{equation} 
        \label{eq: unnormalized error channel}
         \text{Twirl}[\mathcal{E}](\rho) =  \int d \bo v \frac{(\mathcal{N}[0,\Delta_q^2,v_q])^2 f(v_p|v_q)}{2 \pi}  {W}(\bo v) \rho {W}  (\bo v)^{\dg}.
    \end{equation}
where we have denoted
\begin{equation}
    f(v_p|v_q) = \Big|  \int dx e^{-i \sqrt{2\pi} v_p  x}  e^{- \frac{\Delta_p^2 x^2}{2}}  e^{i 2\sqrt{2} \pi v_q P^{'}_{n}(\frac{x}{\sqrt{\pi}})  } \Big|^2.
\end{equation}
As explained in the main text, we can interpret the integrand
\begin{equation}
     \frac{(\mathcal{N}[0,\Delta_q^2,v_q])^2 f(v_p|v_q)}{2 \pi}
\end{equation}
as an \textit{unnormalized} probability density function $p_{{E}}(\bo v)$ of the displacement error $\bf{v}$ on the data qubit. In the following, we will find the normalization constant of this twirled channel $\text{Twirl}[\mathcal{E}]$, which is defined as 
\begin{equation}
    \label{eq: definition of renormalization constant after twirling}
   \int d \bo v \frac{(\mathcal{N}[0,\Delta_q^2,v_q])^2 f(v_p|v_q)}{2 \pi}   .
\end{equation}

Our first step is to rewrite $ f(v_p|v_q)$ in a more illuminating form
\begin{subequations}
    \begin{align}
        & f(v_p|v_q) = \Big|  \int~ dx e^{-i \sqrt{2\pi} v_p  x}  e^{- \Delta_p^2 x^2  }  e^{i 2\sqrt{2} \pi v_q P^{'}_{n}(\frac{x}{\sqrt{\pi}})  } \Big|^2 \\\
        & = \int~ dx_1~ e^{-i \sqrt{2\pi} v_p  x_1}  e^{- \Delta_p^2 x_1^2  }  e^{i 2\sqrt{2} \pi v_q P^{'}_{n}(\frac{x_1}{\sqrt{\pi}})  } \nonumber\\
        & \quad\times \int~ dx_2~ e^{i \sqrt{2\pi} v_p  x_2}  e^{- \Delta_p^2 x_2^2  }  e^{-i 2\sqrt{2} \pi v_q P^{'}_{n}(\frac{x_2}{\sqrt{\pi}})  }\\
        & = \iint dx_+ dx_-~ e^{i 2 \sqrt{2} \pi v_q\Big( P_{n}^{'}\Big(\frac{x_+ + x_-}{\sqrt{2}}\Big)-P_{n}^{'}\Big(\frac{x_+ - x_-}{\sqrt{2}}\Big)  \Big)  }  \nonumber\\
        & \qquad\times e^{- \frac{\Delta_p^2}{2}( x_+^2 + x_-^2 )} e^{-i 2 \sqrt{\pi} v_p x_-}
    \end{align}
\end{subequations}
where on the third equality, we have performed a transformation of variable with unit Jacobian  
\begin{equation}
    x_{+} = \frac{x_1 +x_2}{\sqrt{2}},~ x_{-} = \frac{x_1 - x_2}{\sqrt{2}}.
\end{equation}
For clarity, we denote
\begin{equation}
    [\delta P'_{n}](x,y) =  P'_{n}\Big(\frac{x + y}{\sqrt{2 }} \Big) - P^{'}_{n}\Big(\frac{x-y}{\sqrt{2}}\Big).
\end{equation}

The integral over $x_{+}$ and $x_{-}$ cannot be evaluated analytically. However, the integral in Eq.~\eqref{eq: definition of renormalization constant after twirling} can be computed exactly by changing the order of integration. Specifically, we first integrate over $v_p$, which yields a Dirac delta function $\delta(2\sqrt{\pi} x_-)$. We then successively integrate over $x_-$, $x_+$, and finally $v_q$. After some straightforward but tedious algebra, the normalization constant is found to be
\begin{equation}
    \int d \bo v \frac{(\mathcal{N}[0,\Delta_q^2,v_q])^2 f(v_p|v_q)}{2 \pi}  = \frac{1}{2 \Delta_q \Delta_p}.
\end{equation}
Dividing Eq.~\eqref{eq: unnormalized error channel} by this normalization factor yields the normalized error channel $\mathrm{Twirl}[\tilde{\mathcal{E}}]$
\begin{subequations}
    \label{eq: normalized effective poly err channel}
    \begin{align}
        \text{Twirl}[\tilde{\mathcal{E}}](\rho) & = (2\Delta_q \Delta_p) \int d \bo v \frac{(\mathcal{N}[0,\Delta_q^2,v_q])^2   f(v_p|v_q)}{2 \pi}  \nonumber\\
        & \qquad\qquad\qquad\qquad\times {W}(\bo v) \rho {W}  (\bo v)^{\dg} \\
        & =  \int d \bo v \Big[ \sqrt{2} \Delta_q  (\mathcal{N}[0,\Delta_q^2,v_q])^2 \Big]   \Big(\Delta_p \frac{ f(v_p|v_q)}{\sqrt{2} \pi} \Big)   \nonumber\\
        & \qquad\qquad\qquad\qquad\times  {W}(\bo v) \rho {W} (\bo v)^{\dg}  \\
        & = \int d \bo v~\mathcal{N}\Big[0,\frac{\Delta_q^2}{2},v_q \Big] \Big[\frac{1}{\pi} \frac{\Delta_p}{\sqrt{2}}   f(v_p|v_q) \Big] \nonumber\\
        & \qquad\qquad\qquad\qquad\times {W}(\bo v) \rho {W} (\bo v)^{\dg} .
    \end{align}
\end{subequations}
In the transformation from the second to the third equality, we have absorbed the factor $\sqrt{2} \Delta_q$ into the squared normal distribution in $v_q$, resulting in a single Gaussian with variance $\Delta_q^2/2$.

\subsection{Conditional statistical moments of the twirled error operator}
\label{appendix: Conditional statistical moments of the twirled error operato}
In the previous section, we renormalized the twirled channel $\text{Twirl}[\mathcal{E}]$ such that $p(\vect{v})$ is a normalized probability distribution function. In this section, we describe how to compute the statistical moments of this probability distribution function. Of particular interest are the first and second statistical moments, which characterize the average and the spread of the displacement error in the twirled channel $\mathrm{Twirl}[\tilde{\mathcal{E}}]$, respectively. Since $f(v_q|v_p)$ cannot be computed analytically, we cannot determine the exact error rate of the random displacement channel. Instead, we use these two moments as proxies for the error rate. If the mean is far away from the origin or the covariance matrix has large eigenvalues, we expect that the average gate fidelity will be low because the probability of sampling an uncorrectable displacement error~\cref{eq: correctable displacement} is high.

As we have explained previously, we can view $\frac{\Delta_p}{\sqrt{2}}f(v_p|v_q)$ as a conditional distribution of $v_p$ conditioned on the displacement error in position $v_q$. Therefore, we will first compute the $m$-th conditional statistical moment $\mathbb{E}(v_p^m|v_q)$ defined as
\begin{subequations}
    \label{eq: statistical moment mid calculation}
    \begin{align}
       & \mathbb{E}(v_p^{m}|v_q)  = \int dv_p \frac{v_p^m}{\pi} \frac{\Delta_p}{\sqrt{2}}   f(v_p|v_q) \\
        & =\int dx_+~\mathcal{N}(0, \frac{2}{\Delta_p^2},x_+) \int dx_- \Big( \frac{i}{2 \pi} \Big)^{m} \delta^{(m)}(x_-) \nonumber\\
        & \qquad\times e^{i 2 \sqrt{2} \pi v_q~ \delta P'_{n}(x_+,x_-) }  e^{- \pi \frac{\Delta_p^2}{2} x_-^2}\label{eq: statistical moment mid calculation exponent}.
    \end{align}
\end{subequations}
From the first line to the second line of Eq.~\eqref{eq: statistical moment mid calculation}, we have simply performed the integral over $v_p$ and trasformed the dummy variables everything in term of the variables 
\begin{equation}
    x_{+,\text{new}} = \frac{x_{+,\text{old}}}{\sqrt{\pi}},\quad ~x_{-,\text{new}} = \frac{x_{-,\text{old}}}{\sqrt{\pi}}.
\end{equation}
Here, $\delta^{(m)}(x)$ is the $m$-th order derivative of the Dirac-delta function, which is defined to have the property 
\begin{equation}
    \int dx~\delta^{(m)}(x) g(x) = (-1)^{m} g^{(m)}(0).
\end{equation}

For brevity, we denote $\Phi(x_+,x_-,v_q)$ to be the exponent that appears inside the integral over $x_-$ in Eq. \eqref{eq: statistical moment mid calculation exponent}: 
\begin{equation}
    \label{eq: definition of phi(x1,x2)}
    \Phi(x_+,x_-,v_q) = i 2 \sqrt{2} \pi v_q~ [\delta P'_{n}](x_+,x_-) - \frac{\pi \Delta_p^2}{2} x_-^2,
\end{equation}
such that $\mathbb{E}(v_p^m|v_q)$ is written as
\begin{subequations}
        \label{eq: compact form of E(vpm,vq)}
        \begin{align}
            \mathbb{E}(v_p^m|v_q)& = \int d x_{+} \mathcal{N}\Big( 0,\frac{2}{\Delta_p^2},x_+ \Big)\\
            & \times \int d x_- \Big( \frac{i}{2 \pi} \Big)^{m} \delta^{(m)}(x_-) e^{i \Phi(x_+,x_-,v_q)}.
    \end{align}
\end{subequations}
Due to the definition of the polynomial $\delta P'_{n}(x)$, we find that 
\begin{equation}
    \Phi(x_+,0,v_q) = 0, \quad \forall~  x_+,v_q \in \mathbb{R}.
\end{equation}
Using this property, the integral over $x_-$ in Eq.~\eqref{eq: compact form of E(vpm,vq)} can be evaluated using Faà di Bruno's formula \cite{Roman1980}
\begin{multline}
    \int dx_-~\delta^{(m)}(x_-) e^{ \Phi(x_+,x_-,v_q)}  \\
    %  = (-1)^{m} \frac{\partial^{m} e^{i \Phi(x_+,x_-,v_q)} }{\partial x_-^m} \Big|_{x_- = 0}
    % & = (-1)^{m} e^{\Phi(x_+,0,v_q)} B_{m}[\partial^{(1)}_{x_-} \Phi(x_+,0,v_q), \dots,\partial^{(m)}_{x_-} \Phi(x_+,0,v_q) ] \\
    = (-1)^m B_{m}[\partial^{(1)}_{x_-} \Phi(x_+,0,v_q), \dots,\partial^{(m)}_{x_-} \Phi(x_+,0,v_q) ],
\end{multline}
where $B_{m}(x_1,\dots,x_n)$ is the $m$-th complete exponential Bell polynomial
\begin{equation}
    B_{m}(x_1,\dots,x_n) = (m!) \sum_{1 j_1 + \dots m j_m = m} \prod_{k=1}^{m} \frac{ x_{k}^{j_k} }{ (k!)^{j_k} (j_k!)  }.
\end{equation}
Consequently, the analytical formula for the conditional statistical moments $\mathbb{E}(v_p^{m}|v_q)$ is given by 
\begin{subequations}
    \label{eq: formula conditional moment vp}
   \begin{align}
        & \mathbb{E}(v_p^{m}|v_q) = \int d x_+~\mathcal{N}\Big(0, \frac{2}{\Delta_p^2},x_+ \Big) \Big( \frac{i}{2 \pi} \Big)^{m} (-1)^{m} \nonumber\\
        & \qquad\times B_{m}[\partial^{(1)}_{x_-} \Phi(x_+,0,v_q), \dots,\partial^{(m)}_{x_-} \Phi(x_+,0,v_q) ].
   \end{align}
\end{subequations}

From the general formula Eq.~\eqref{eq: formula conditional moment vp} for $\mathbb{E}(v_p^m|v_q)$, it is then a straight-forward calculation to show that the first conditional moment $\mathbb{E}(v_p|v_q)$ for a general polynomial $P_n(x)$ with coefficients $\{ a_i \}$ is given by
\begin{align}
    \mathbb{E}(v_p|v_q) &= \sqrt{2} v_q  \sum_{k=2}^{n}
        a_k \beta_{k} \int dx_+~\mathcal{N}\Big(0, \frac{2}{\Delta_p^2},x_+ \Big)  x_+^{k-2} \nonumber\\
    & = \sqrt{2} v_q  \sum_{k=2,4,\dots}^{n} a_k \beta_{k}~ \mathbb{E}(x_+^{k-2}),
\label{eq: first conditional moment vp}
\end{align}
where the coefficients $\beta_{k}$ are given by
\begin{equation}
    \begin{aligned}
        \beta_{k} & = \frac{2 k (k-1)}{ \sqrt{2}^{k-1}} ,
    \end{aligned}
\end{equation}
and we define the expectation value $\mathbb{E}[g( x_+)]$ of a function $g(x_+)$ as
\begin{equation}
    \mathbb{E}\Big(g(x_+) \Big) = \int d x_+~ \mathcal{N}\Big(0, \frac{2}{\Delta_p^2},x_+ \Big)  g(x_+).
\end{equation}
Note that only even values of $k$ contribute to Eq.~\eqref{eq: first conditional moment vp} since $\mathbb{E}(x_+^{k})$ vanishes for odd $k$ due to the symmetry of the Gaussian distribution. We have seen this previously in Eq.~\eqref{eq: absolute square of chi gives a probability function} where the third order coefficient $a_3$ gives correction to $\mathbb{E}(v_p|v_q)$ to the second order $O(v_q^2)$. Similarly, we find the second conditional moment $\mathbb{E}(v_p^2|v_q)$ to be 
\begin{equation}
    \begin{aligned}
        \mathbb{E}(v_p^2 |v_q) 
        & = \frac{\Delta_{p}^2}{4 \pi} + 2 v_q^2~\mathbb{E} \Big[ \Big(\sum_{k=2}^{n}
            a_k \beta_{k} x_+^{k-2}  \Big)^2 \Big].
    \end{aligned}
\end{equation}
Given the formula for the conditional moments, we can obtain the formulas for the unconditional first and second moments, which are presented in the next section.

\subsection{Asymptotic scaling of the first and second unconditional moments}\label{app: optimal bias estimate}

In the previous section, we derived an expression for the conditional moment $\mathbb{E}(v_p^m|v_q)$. In this section, we use this result to determine the asymptotic scaling behavior of the mean vector 
\begin{equation}
    \begin{pmatrix}
        \mathbb{E}(v_q) \\ \mathbb{E}(v_p)
    \end{pmatrix}
\end{equation}
and covariance matrix
\begin{equation}
    \begin{pmatrix}
        \mathbb{E}(v_q^2) & \mathbb{E}(v_q v_p) \\ \mathbb{E}(v_q v_p) & \mathbb{E}(v_p^2)
    \end{pmatrix}.
\end{equation}
As we already discussed in the main text, the covariance matrix can be used as a proxy for the average gate fidelity. Using this, we can obtain an asymptotic estimate of the optimal asymmetry $\lambda_{\rm opt}$ by minimizing the elements of covariance matrix.

By definition, the unconditional moment $\mathbb{E}(v_p^m)$ is obtained from the conditional moment $\mathbb{E}(v_p^m|v_q)$ by averaging over the distribution of position displacement errors $v_q$, i.e. 
\begin{subequations}
    \begin{align}
        \mathbb{E}(v_p)& = \int dv_q~\mathcal{N}\Big[0,\frac{\Delta_q^2}{2},v_q \Big]~\mathbb{E}(v_p|v_q) =0,
    \label{eq: momentum mean polyphase}\\
        \mathbb{E}(v_p^2)  & =\int dv_q~\mathcal{N}\Big[0,\frac{\Delta_q^2}{2},v_q \Big]~\mathbb{E}(v_p^2|v_q) \\
         & = \frac{\Delta_{p}^2}{4 \pi} +  \frac{\Delta_q^2}{2\pi}~\mathbb{E} \Big[ \Big(\sum_{k=2}^{n}
            a_k \beta_{k} x_+^{k-2}  \Big)^2 \Big] \label{eq: momentum variance poly phase}
    \end{align}
\end{subequations}
To find the first and second moment $\mathbb{E}(v_q)$ and $\mathbb{E}(v_q^2)$, we note that the probability that a displacement error $v_q$ happens is independent of $v_p$. Therefore, from Eq.~\eqref{eq: normalized effective poly err channel}, we find that 
\begin{equation}
            \label{eq: moment of vq}
            \mathbb{E}(v_q)  = 0, \quad \mathbb{E}(v_q)^2 = \frac{\Delta_q^2}{4 \pi}.
\end{equation}
Finally, the cross-correlation term can be computed by combining Eq.~\eqref{eq: first conditional moment vp} and Eq.~\eqref{eq: moment of vq}
\begin{subequations}
    \begin{align}
        \mathbb{E}(v_q v_p) & = \int dv_q ~ \mathcal{N}\Big(0,\frac{\Delta_q^2}{2},v_q  \Big)~v_q ~\mathbb{E}(v_p|v_q) \\
    & = \sqrt{2}~ \mathbb{E}(v_q^2) \sum_{k=2,4,\dots}^{n}a_k \beta_k \mathbb{E}(x_{+}^{k-2}). \\
    & = \frac{\Delta_q^2}{2\sqrt{2} \pi} \sum_{k=2,4,\dots}^{n}a_k \beta_k \mathbb{E}(x_{+}^{k-2}).\label{eq: Evqvp with sum}
    \end{align} 
\end{subequations}

We can obtain an explicit formula for \cref{eq: momentum variance poly phase,eq: Evqvp with sum} by using the known formula for the statistical moments of a Gaussian distribution 
\begin{subequations}
\label{eq: statistical moment of x1}
    \begin{align}
          &  \mathbb{E}(x_+^{k})  = \int dx_+~\mathcal{N}\Big(0, \frac{2}{\Delta_p^2},x_+ \Big)  x_+^{k} \\
          & = 2^{-1+\frac{k}{2}}[1+(-1)^k] \pi^{-\frac{(1+k)}{2}} \Gamma\Big[\frac{1+k}{2} \Big]\Delta_{p}^{-k}.
    \end{align}
\end{subequations}

In the limit of infinitely squeezed GKP states, the leading contribution to $\mathbb{E}(v_p^2)$ comes from the leading coefficient $a_n$ of the polynomial $P_{n}(x)$. As a result, the second moment $\mathbb{E}(v_p)^2$ acquires a simple form in this limit 
\begin{subequations}
    \label{eq: Evp2 appendix}
    \begin{align}
            \mathbb{E}(v_p^2) & \approx \frac{\Delta_{p}^2}{4 \pi} + a_n^2 \beta_{n}^2  \frac{  2^{n-3} \pi^{1/2-n} \Gamma\Big[n-\frac{3}{2}\Big] }{\Delta_p^{2n-6}}\frac{\Delta_q^2}{\Delta_p^2}. \\
            & = \frac{\Delta_p^2}{4\pi} + O \Big(\frac{\Delta_q^2}{\Delta_p^{2n-4}}  \Big)
    \end{align}
\end{subequations}

In a similar fashion, we retain the leading contribution in $\mathbb{E}(v_q v_p)$. However, depending on whether the degree $n$ of $P_n$ is even or odd, the leading contribution can either come from the leading coefficient or the second leading coefficient. If we assume that the degree of $P_n(x)$ is even, the cross-correlation is given by 
\begin{align}
        \label{eq: Evqvp appendix}
       \mathbb{E}(v_q v_p) & = \frac{\Delta_q^2}{2\sqrt{2} \pi} a_n \beta_n 2^{-2+\frac{n}{2}} \pi^{-\frac{(n-1)}{2}} \Gamma\Big[ \frac{n-1}{2} \Big] \Delta_p^{-(n-2)} \\
       & = O\Big( \frac{\Delta_q^2 }{\Delta_p^{n-2}}  \Big).
\end{align}
Again, assuming that the degree $P_{n}(x)$ is odd, we find that 
\begin{equation}
    \mathbb{E}(v_q v_p) = O\Big( \frac{\Delta_q^2}{\Delta_p^{n-3}} a_{n-1}  \Big)=O\Big( \frac{\Delta_q^2}{\Delta_p^{n-2}}\Big).
\end{equation}

We assume that the GKP has asymmetry $\lambda$ such that squeezing parameters are given by 
\begin{equation}
    \Delta_q = \frac{\Delta}{\sqrt{\lambda}},~ \Delta_p = \sqrt{\lambda} \Delta. 
\end{equation}
Since $\mathbb{E}(v_p^2)$ is the dominant error in the limit of infinitely squeezed state, our goal is to minimize it. The global minimum of $\mathbb{E}(v_p^2)$ is obtained at the optimal asymmetry value 
\begin{subequations}
    \label{eq: estimated lambdac}
    \begin{align}
            \lambda_{\rm opt} & = \sqrt[n]{4 \pi (n-1) \frac{a_n^2 \beta_{n}^2 2^{n-3} \pi^{1/2-n} \Gamma\Big[n-\frac{3}{2}\Big] }{ \Delta^{2n-4} }  } \\
            & = O\big(\Delta^{-(2-4/n)}\big). 
    \end{align}
\end{subequations}
Consequently, we can easily show that the second moment $\mathbb{E}(v_p^2)$ and $\mathbb{E}(v_q^2)$ evaluated at $\lambda_{\rm opt}$ scales as
\begin{subequations}
    \begin{equation}
            \mathbb{E}(v_q^2) \propto \Delta^{4-4/n} , \quad \mathbb{E}(v_p^2) \propto \Delta^{4/n} , 
    \end{equation}
    \begin{equation}
        \mathbb{E}(v_q v_p) \propto  \begin{cases}
        \Delta^2 & \text{for even }n \\
        \Delta^{2+\frac{2}{n}} & \text{for odd }n
    \end{cases} . 
    \end{equation}
\end{subequations}
which vanish in the limit of infinitely squeezed GKP states. This result implies that, even for high-order polynomial phase gates, the propagated error induced by the non-Gaussian interaction can be mitigated by on-demand noise biasing.

\subsection{Scaling of covariance matrices at finite squeezing}
\label{appendix: scaling of covariance matrices at finite squeezing}

We have analyzed the asymptotic scaling of polynomial phase gates and shown that, in the limit of infinite squeezing, gates with lower-degree polynomials asymptotically achieve higher fidelities than gates with higher-degree polynomials. However, this asymptotic behavior does not imply that higher-degree polynomial phase gates perform worse for all finite values of $\Delta$. To investigate this, we now analyze how the leading coefficient $a_n$ of a polynomial phase gate affects its error scaling when applied to GKP states with finite $\Delta$. In particular, we are interested in how the error in lexicographically minimal polynomial phase gates $\Lambda_{n}[q/\sqrt{\pi}]$ scales with increasing $n$.

We first note that the prefactor $\beta_n$ scales as $\beta_n \propto O(2^{-n/2})$, which exactly cancels the exponential dependence on $2^{n/2}$ appearing in Eqs.~\eqref{eq: Evp2 appendix} and~\eqref{eq: Evqvp appendix}. In other words,
\begin{equation}
\beta_n 2^{n/2} \propto O(1).
\end{equation}
With this simplification, the second moments $\mathbb{E}(v_p^2)$ and $\mathbb{E}(v_q v_p)$ for finite $\Delta$ scale with $n$ as
\begin{subequations}
    \begin{gather}
        \mathbb{E}(v_p^2)  \propto \frac{\Delta_p^2}{4\pi} + O \Big( a_n^2\frac{ \Gamma\Big[n-\frac{3}{2}\Big] }{\pi^{n}} \frac{\Delta_q^2}{\Delta_p^{2n-4}}  \Big) \\
        \mathbb{E}(v_q v_p) \propto O \Big( a_n\frac{ \Gamma\Big[\frac{n-1}{2}\Big] }{\pi^{n/2}} \frac{\Delta_q^2}{\Delta_p^{n-2}}  \Big)
    \end{gather}
\end{subequations}
The appearance of the Gamma functions in the numerators implies a factorial growth with $n$, suggesting that, for fixed $\Delta$, non-optimized higher-order polynomial phase gates could lead to larger propagated errors. However, for minimal polynomial phase gates, the leading coefficients satisfy $a_n \propto O(1/n!)$ due to Theorem~\ref{theorem: minimal degree polynomial}. Substituting this scaling, we obtain
\begin{equation}
    a_n^2 \Gamma[n-\frac{3}{2}] \propto O(\frac{1}{n!}), \quad a_n \Gamma[\frac{n-1}{2}]\propto O(\frac{1}{n!}).
\end{equation}
As a result, for lexicographically minimal polynomial phase gates, the second moments  $\mathbb{E}(v_p^2)$ and $\mathbb{E}(v_q v_p)$ would scale with $n$ as
\begin{subequations}
    \begin{gather}
            \mathbb{E}(v_p^2)  \propto \frac{\Delta_p^2}{4\pi} + O \Big( \frac{ 1 }{\pi^{n}~n!} \frac{\Delta_q^2}{\Delta_p^{2n-4}}  \Big) \\
            \mathbb{E}(v_q v_p) \propto O \Big(\frac{ 1}{\pi^{n/2}~n!} \frac{\Delta_q^2}{\Delta_p^{n-2}}  \Big).
    \end{gather}
\end{subequations}
The factorial suppression in $n$ indicates that, for finite but fixed squeezing $\Delta$, higher-order minimal polynomial phase gates can in principle outperform lower-order ones such as the cubic phase gate. 

This analytical scaling is consistent with our numerical simulations shown in Fig.~\ref{fig:Lambda_gates}, where the $T^{1/8}$ gate have higher fidelity than the $T$ gates in the simulated parameter range. Nevertheless, we emphasize that this analysis neglects experimental challenges associated with realizing high-order polynomial phase gates, such as their non-linearity requirements and potential control errors. Thus, our results should be viewed as a proof of principle that establishes the theoretical potential for improved performance, rather than a practical prescription for near-term implementations.

\section{Proof of fault-tolerance theorem}
\label{appendix: proof of fault-tolerance theorem}

In this appendix, we prove that the minimized $T_{3}$ gate with on-demand noise biasing is fault-tolerant. Our strategy is to show that, for any given squeezing parameter $\Delta$, there exists an asymmetry value $\lambda(\Delta)$ such that the corresponding error operator ${E}(\Delta, \lambda(\Delta))$, defined the same as in the main text by
\begin{equation}\tag{\ref{eq: error operator}}
    {E}\big(\Delta,\lambda(\Delta)\big)=\Lambda \,\text{Env}_{\Delta/\lambda(\Delta),\Delta\lambda(\Delta)}\,\Lambda^{\dag},
\end{equation}
has logical error rate going to zero in the limit $\Delta \to 0$. 

Specifically, we consider the following circuit
\begin{equation}
    \begin{quantikz}
         \lstick{$\ket{\bar{\psi}_\text{in}}$} & \gate{{E}(\Delta,\lambda(\Delta))  } & \gate{\substack{\text{Ideal} \\ \text{QEC}}} & \rstick{$\bar{\rho}_\text{out}$}.
    \end{quantikz}
\end{equation}
Because the input and output of the circuit are contained within the ideal GKP codespace, we can define a logical channel $\mathcal{E}$ that acts from qubits to qubits defined such that $\rho_{\text{out}}=\mathcal{E}(\ket{\psi_{\text{in}}}\!\bra{\psi_{\text{in}}})$, which implicitly depends on the squeezing parameter $\Delta$ and the asymmetry $\lambda$. Here, $\mathcal{E}$ differs slightly from the logical channel defined in Eq.~\eqref{eq: logical channel equation}, because we assume that the QEC round following the $T$ gate is ideal. This assumption allows us to isolate the contribution of errors arising solely from the $T_{3}$ gate, independent of those originating from noisy QEC round. The proof can be straightforwardly extended to include noisy QEC rounds if desired because the noise arising from the noisy QEC round also goes to zero as $\Delta\rightarrow0$.

We define the average logical fidelity of the channel $\mathcal{E}$ gate, which also defines the ``logical'' average gate fidelity of the $T_3$ gate, as 
\begin{equation}\label{eq: bar F definition}
    \bar F(\Delta,\lambda) = \int d \psi~\bra{\psi} \mathcal{E}(\ket{ \psi}\!\bra{ \psi}) \ket{\psi}.
\end{equation}
To understand why $\bar{F}(\Delta,\lambda)$ characterizes the gate fidelity of the $T_3$ gate, recall that the operator ${E}$ represents the effect of propagating the envelope operator through the $T_3$ gate. The ideal logical action of ${E}$ is therefore the identity. Consequently, the average gate fidelity of ${E}$ with respect to the identity directly corresponds to the logical gate fidelity of the $T_3$ operation.

We now prove our fault-tolerance theorem.
\begin{theorem}\label{theorem: FT theorem T3 gate}
   Let $\varepsilon_{\rm target}$ be a desired logical average gate infidelity for the $T_3$ gate. Then there exists a threshold parameter $\Delta_{\rm th}$ such that
   \begin{equation}
       \inf_{\lambda>0}[1-\bar F(\Delta,\lambda)] \leq \varepsilon_{\rm target}
   \end{equation}
   for all $\Delta \leq \Delta_{\rm th}$, where $\bar{F}(\Delta,\lambda)$ is defined in \cref{eq: bar F definition}.
\end{theorem}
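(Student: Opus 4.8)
### Proof Strategy for the Fault-Tolerance Theorem

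\textbf{Overall approach.} The plan is to bound the logical average gate infidelity of the $T_3$ error operator $E(\Delta,\lambda)$ by the probability that a random displacement sampled from the twirled channel $\mathrm{Twirl}(\mathcal{E})$ falls outside the correctable region \cref{eq: correctable displacement}, and then to show that—with an appropriate choice $\lambda = \lambda(\Delta)$—this probability can be made smaller than any target as $\Delta\to 0$. The key point is that we must work with \emph{rigorous} bounds rather than the approximations used in \cref{sec: error propagation and fault-tolerant theorem}: in particular, we will not make the first-order derivative approximation \cref{eq: derivative approximation}, and we will control the error introduced by the twirling approximation and by higher moments of $p_{\mathcal{E}}(\vect{v})$ directly.

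\textbf{Key steps.} First, I would reduce the logical gate fidelity $\bar F(\Delta,\lambda)$ to a statement about the characteristic function of $E$. Using the decomposition of $E$ into displacement operators \cref{eq: main-text propagated error operator exact} (the exact version, with $P(q/\sqrt{\pi}+\sqrt{2}v_q)-P(q/\sqrt{\pi})$ in the exponent), I would write the composition with ideal QEC as a mixture over correctable displacement bins and relate the logical infidelity to $1 - \int_{\mathcal{R}} p_{\mathcal{E}}(\vect{v})\, d\vect{v}$, where $\mathcal{R}$ is the correctable region; a Cauchy–Schwarz / concavity argument handles the coherences between different bins so that this becomes a genuine upper bound on the infidelity rather than an approximation. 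Second, I would obtain an explicit tail bound on $p_{\mathcal{E}}(\vect{v})$ for the specific cubic polynomial $P_{\Lambda_3}(x) = x^3/12 + x^2/8 - x/12$. Because $P$ is cubic, $P(q/\sqrt{\pi}+\sqrt{2}v_q)-P(q/\sqrt{\pi})$ is a polynomial of degree at most $2$ in $q$ with $v_q$-dependent coefficients, so the integral over $x$ in the characteristic function \cref{eq: characteristic of general Pn propagated error} is a Gaussian integral against a quadratic phase and can be evaluated in closed form—giving $\chi_E(\vect{v})$ as an explicit elementary function of $v_q, v_p, \Delta_q, \Delta_p$. Third, from that closed form I would extract Chebyshev-type bounds: control $\mathbb{E}(v_q^2)$, $\mathbb{E}(v_p^2)$, $\mathbb{E}(v_q v_p)$ (which by the cubic case of \cref{eq: covariance matrix scaling} behave as $O(\Delta_q^2)$, $O(\Delta_p^2) + O(\Delta_q^2/\Delta_p^2)$, $O(\Delta_q^2/\Delta_p)$), and also bound a higher moment (say the fourth moment, or exponential moments) to rule out heavy tails. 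Fourth, I would choose $\lambda(\Delta) = c\,\Delta^{-2/3}$ for a suitable constant $c$ (the $n=3$ specialization of $\lambda_{\mathrm{opt}} = O(\Delta^{4/n-2})$), verify that all entries of the covariance matrix then scale as positive powers of $\Delta$ as in \cref{eq: optimal lambda covariance matrix scaling}, and conclude via the tail bounds that $\int_{\mathcal{R}^c} p_{\mathcal{E}}(\vect{v})\, d\vect{v} \to 0$; picking $\Delta_{\mathrm{th}}$ so that this tail probability is below $\varepsilon_{\mathrm{target}}$ finishes the proof.

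\textbf{Main obstacle.} The hard part will be step one—rigorously bounding the true logical infidelity by the twirled tail probability, rather than simply asserting the twirling approximation. The twirled channel discards the off-diagonal elements $\chi_{\mathcal{E}}(\vect{v},\vect{u})$ with $\vect{v}\neq\vect{u}$, and one must argue that these coherences cannot \emph{increase} the logical error probability beyond what the diagonal predicts (up to controllable corrections). I expect to handle this by grouping displacements into cosets of the GKP lattice: after ideal QEC, all displacements in the correctable coset $\mathcal{R}$ act as the same logical operator, so the contribution of that coset to the output state is $\big|\int_{\mathcal{R}} \chi_E(\vect{v})\, d\vect{v}\big|$-type quantities which can be lower-bounded by $\int_{\mathcal{R}} p_{\mathcal{E}}(\vect{v})\,d\vect{v}$ minus cross terms bounded using Cauchy–Schwarz against the total weight; since the total weight outside $\mathcal{R}$ vanishes, the cross terms vanish too. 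A secondary technical nuisance is that $E$ is not trace-preserving (it is the conjugation of the envelope operator), so one must be careful with normalization—working with the properly renormalized channel $\mathrm{Twirl}[\tilde{\mathcal{E}}]$ from \cref{eq: normalized effective poly err channel}, whose normalization constant $1/(2\Delta_q\Delta_p)$ is known exactly, makes this bookkeeping manageable.
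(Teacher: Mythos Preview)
Your plan is structurally very close to the paper's proof: both reduce $\bar F$ to the logical characteristic functions via the coset decomposition \cref{eq: decomposing propagated error into correctable, pauli, and stabilizers}, handle the coherences between cosets by Cauchy--Schwarz, exploit the fact that for the cubic $P_{\text{min},3}$ the characteristic-function integral is exactly Gaussian, and then pick a $\Delta$-dependent bias to drive the tail weight to zero. Two technical choices differ. First, once the paper has $|\chi_E(\vect v)|^2$ in closed form as a product of two one-dimensional Gaussians (\cref{eq: absolute square of chi gives a probability function}), it bounds $p_E(\vect 0)$ directly by error functions rather than going through Chebyshev-type moment bounds; since you already plan to compute the closed form, the Erf route is shorter and avoids worrying about higher moments separately. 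Second, your normalization remark points to the constant $1/(2\Delta_q\Delta_p)$ from \cref{eq: normalized effective poly err channel}, but that constant is derived under the derivative approximation \cref{eq: first-order approximation} that you intend to avoid; the paper instead computes the exact normalization $\xi_{E^\dagger E}^{\mathrm{Id}}(\vect 0)$ by commuting $T_3$ through $\Pi_{\mathrm{GKP}}$ so that it reduces to a trace of the bare envelope (\cref{eq: normalization identity} and the lines following), which is what you will need for a rigorous bound. Finally, the paper's bound leads to a bias scaling $\lambda(\Delta)\propto\tanh(\Delta^2/2)^{-3/5}$ rather than your $\Delta^{-2/3}$; this discrepancy is harmless for the theorem, since any $\lambda(\Delta)$ that drives the lower bound to one suffices, but be aware that the exponent coming out of the rigorous bound need not match the asymptotically optimal one from \cref{sec: error propagation and fault-tolerant theorem}.
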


\begin{proof}

The proof proceeds in three main steps. First, we use the characteristic function $\chi_{{E}}(\bo v)$ of the propagated error operator ${E}$ (defined in Eq.~\eqref{eq: error operator}) to write down an exact expression for the average fidelity $\bar{F}(\Delta,\lambda)$. Second, we derive a lower bound $\bar{F}(\Delta,\lambda)_{\text{l.b.}}$ for this average fidelity $\bar F(\Delta,\lambda)$ that simplifies the expression found in the first step. Third, we show that for each $\Delta$ there exists a biasing $\lambda(\Delta)$ such that $\bar{F}(\Delta,\lambda(\Delta))_{\text{l.b.}}\rightarrow 1$ as $\Delta\to 0$. This proves that $\mathrm{inf}_{\lambda>0}\bar F(\Delta,\lambda) \to 1$ as $\Delta\rightarrow 0$, which is equivalent to the statement in Theorem~\ref{theorem: FT theorem T3 gate} via the $\epsilon$-$\delta$ definition of a limit.

%\textbf{Step 1: Characterizing the logical channel $\mathcal{E}$}
\textbf{Step 1: An exact expression for $\bar{F}(\Delta,\lambda)$}

As a starting point, the propagated error operator ${E}$ of the $T_3$ is given exactly by \cref{eq: main-text propagated error operator exact}, which we write here as
\begin{multline}\label{eq: reminder propagated error operator}
        {E}(\Delta,\lambda) = \int d \bo v \,\mathcal{N}\Big(2\tanh(\frac{\Delta^2}{2})\text{Diag}(\lambda^{-1},\lambda),\bo v\Big) \\
        \times W(\bo v) e^{i 2 \pi \big[ P_{\text{min},3}({q}/\sqrt{\pi}+\sqrt{2} v_q)-P_{\text{min},3}({q}/\sqrt{\pi} )\big] }.
\end{multline}
where $P_{\text{min},3}(x)$ is the minimal polynomial that implements the $T$ gate defined in Eq.~\eqref{eq: T3 gate}.

Now, our next step will be to evaluate the so-called $\chi$-matrix of the \textit{logical} channel $\mathcal{E}$, which is defined such that
\begin{equation}
    \mathcal{E}(\rho)=\sum_{\mu,\mu'=0}^{3}\chi_{\mathcal{E}}^{\sigma_\mu,\sigma_{\mu'}}\sigma_{\mu}\rho\sigma_{\mu'},
\end{equation}
where as a reminder the logical channel $\mathcal{E}$ is a qubit-to-qubit channel, and where $\sigma_{0},\sigma_{1},\sigma_{2},\sigma_{3}=\mathrm{Id},{X},{Y},{Z}$. This representation is convenient because the average gate fidelity of $\mathcal{E}$ is straightforward to obtain from the $\chi$-matrix via \cite{NIELSEN2002}
\begin{equation}\label{eq: average gate fidelity chi matrix}
    \bar{F}(\Delta,\lambda)=\frac{2}{3}\frac{\chi_{\mathcal{E}}^{\mathrm{Id},\mathrm{Id}}}{\sum_{\mu}\chi_{\mathcal{E}}^{\sigma_\mu,\sigma_\mu}}+\frac{1}{3}.
\end{equation}

The $\chi$-matrix elements of $\mathcal{E}$ can be obtained from the characteristic function of $E(\Delta,\lambda)$. As a stepping stone to this, we define the \textit{logical} characteristic function $\xi_{E}^{\sigma_{\mu}}(\vect{v})$ of $E(\Delta,\lambda)$ as
\begin{subequations}
\begin{align}
    \xi_{ E}^{\sigma_\mu}( \bo v) & = \text{Tr}[ {\Pi}_{\text{GKP}, \mu} {W}^{\dg}(\bo v ) {E}],\\
    {\Pi}_{\text{GKP}, \mu} & = {\Pi}_{\text{GKP,Id}}\bar{\sigma}_{\mu}{\Pi}_{\text{GKP,Id}}=\sum_{\mathclap{j,k \in \{0,1\}}}\; [\sigma_{\mu}]_{j,k} \ket{\bar j} \bra{\bar k}, 
\end{align}
\end{subequations}
which is based on the so-called ``outcome-dependent'' GKP projectors defined in Ref.~\cite{Baragiola2019}. With these logical characteristic functions we can directly write
\begin{equation}\label{eq: logical characteristic function to chi matrix}
    \chi_{E}^{\sigma_{\mu},\sigma_{\mu'}}=\int_{\square} d \bo v~ \xi_{ E}^{\sigma_\mu}(\bo v) \xi_{ E}^{\sigma_{\mu'}}(\bo v)^{*},
\end{equation}
where the integral is taken over the correctable error patch $\square$ defined as  
\begin{equation}
    \label{eq: correctable patch}
    \square=\big(-1/\sqrt{8},1/\sqrt{8}\big]^{\times 2}\subset\mathbb{R}^{2},
\end{equation}
which is simply a rescaling of~\cref{eq: correctable displacement}.

Ref.~\cite{shaw2022} presents an alternative way of calculating the logical characteristic function. First, we write the error operator $E(\Delta,\lambda)$ as
\begin{multline}
        \label{eq: decomposing propagated error into correctable, pauli, and stabilizers}
    E = \sum_{\bo n \in \mathbb{Z}^{2}} \sum_{\mu =0}^{3} \int_{\square} d \bo v\,f_{ E}(\bo v,\sigma_\mu,\bo n) {W}(\bo v)\bar{\sigma}_{\mu} {W}(\sqrt{2} \bo n).
\end{multline}
% (i^{r_q r_p} X^{r_q} Z^{r_p}) {W}(\sqrt{2} \bo n),
Since ${W}(\bo v)\bar{\sigma}_{\mu} {W}(\sqrt{2} \bo n)$ is a displacement operator (up to a geometric phase from~\cref{eq: displacement composition rule}), the function $f$ can be written
\begin{equation}\label{eq: f from characteristic function}
    f_{E}(\bo v,\sigma_{\mu},\bo n) = e^{i\theta(\vect{v},\sigma_{\mu},\vect{n})}\chi_{{E}}(\bo v+\sqrt{2}\vect{n} + \vect{\ell}_{\mu}),
\end{equation}
for some function $\theta$, where $\vect{\ell}_{\mu}=(0,0), (1/\sqrt{2},0),(1/\sqrt{2},1/\sqrt{2}), (0,1/\sqrt{2})$ respectively for $\mu=0,1,2,3$. After calculating $f$ from the characteristic function $\chi_{E}$, the logical characteristic function is simply
\begin{equation}
    \label{eq: logical characteristic in term of physical characteristic}
     \xi_{ {E}}^{\sigma_{\mu}}(\bo v)= \sum_{\bo n \in \mathbb{Z}^2} f_{ E}(\bo v,\sigma_{\mu},\bo n)
\end{equation}

From \cref{eq: average gate fidelity chi matrix,eq: logical characteristic function to chi matrix}, the average gate-fidelity is simply given by 
\begin{equation}\label{eq: average gate fidelity logical channel}
    \bar{F}(\Delta,\lambda) = \frac{2}{3} \frac{ \int_{\square}d \bo v |\xi_{E}^{\rm Id}(\bo v)|^2 }{\sum_{\sigma_\mu}  \int_{\square}d \bo v |\xi_{E}^{\sigma_\mu}(\bo v)|^2} + \frac{1}{3}, 
\end{equation}
which comes from Eq.~(B17) of Ref.~\cite{shaw2024logical}.

Our first task is to evaluate the denominator in the first term of \cref{eq: average gate fidelity logical channel} which we call the ``normalization constant'' of the logical channel $\mathcal{E}$. Computing each individual contribution of the normalization constant is a non-trivial task. However, we can use the following identity to simplify the computation 
\begin{equation}
    \label{eq: normalization identity}
    \sum_{\sigma_{\mu} } \int_{\square} d \bo v | \xi_{E}^{\sigma_\mu}(\bo v)|^2 = \xi_{{E}^{\dagger} {E}}^{\rm Id}( \bo 0). 
\end{equation}

To prove Eq.~\eqref{eq: normalization identity}, we start from the right-hand side (RHS) and make use of the decomposition given in Eq.~\eqref{eq: decomposing propagated error into correctable, pauli, and stabilizers}. Recall that $\mathcal{\xi}_{{E}^{\dg}{E}}^{\rm Id}(\bo 0)$ consists of contributions from displacement errors $W(\bo v)$ that act trivially on the GKP code space. Therefore, when we express the error operators ${E}^{\dg}$ and ${E}$ in terms of their decompositions from Eq.~\eqref{eq: decomposing propagated error into correctable, pauli, and stabilizers} and combine them, only the terms in $\mathcal{\xi}_{{E}^{\dg}{E}}^{\rm Id}(\bo 0)$ with opposite correctable displacements and identical logical Pauli operators contribute to $\mathcal{\xi}_{{E}^{\dg}{E}}^{\rm Id}(\bo 0)$. Consequently, we find that
\begin{equation}
    \xi_{{E}^{\dg} {E}}^{\rm Id}(\vect{0}) = \int_{\square} d \bo v\sum_{\bo n',\bo n \in \mathbb{Z}^2 } \sum_{\mu=0}^{3}  f(\bo v, \sigma_{\mu},\bo  n')^{*} f(\bo v,\sigma_{\mu}, \bo n)
\end{equation}
which is just the LHS of Eq.~\eqref{eq: normalization identity}. 

We will not compute $\xi_{{E}^{\dg} {E}}^{\rm Id}(\vect{0})$ directly, but use the following trick 
\begin{subequations}
    \begin{align}
        & \xi_{{E}^{\dg} E}^{\rm Id}( \bo 0)  = \text{Tr}[{\Pi}_{\rm GKP, \rm Id} ~  {E}^{\dg}{E}]\\
        & = \text{Tr}[  {T}_3^{\dg} {E}^{\dg}{E}~{\Pi}_{\rm GKP, \rm Id} {T}_3]\\
        &=  \text{Tr}[  {T}_3^{\dg} {E}^{\dg}{E} {T}_3~{\Pi}_{\rm GKP, \rm Id} ] \\
        & = \text{Tr}\Big[ \mathrm{Env}_{\Delta/\lambda,\Delta\lambda}^{\dg} \mathrm{Env}_{\Delta/\lambda,\Delta\lambda} {\Pi}_{\rm GKP, \rm Id}  \Big] \\
        & = \text{Tr}\Big[  \mathrm{Env}_{\sqrt{2}\Delta/\lambda,\sqrt{2}\Delta\lambda}  {\Pi}_{\rm GKP, \rm Id}     \Big],
    \end{align} 
\end{subequations}
where we have used the fact that ${T}_3$ commutes with the projector onto the GKP code $\Pi_{\rm GKP, Id}$ since it corresponds to a GKP logical gate. Therefore, the normalization constant is entirely dependent on the biased envelope operator. Using the characteristic function of the biased envelope operator given in Eq.~\eqref{eq: envelope characteristic function approx}, we find that the normalization constant is simply 
\begin{equation}
    \xi_{{E}^{\dg} E}^{\rm Id}( \bo 0)  =\frac{\coth(\Delta^2)}{2} \vartheta _3\Big(0,e^{-\frac{\pi  \coth \left( \Delta^2\right)}{\lambda }}\Big) \vartheta _3\Big(0,e^{-\pi  \lambda  \coth \left( \Delta^2 \right)}\Big)
\end{equation}
where $\vartheta_3$ is the 3rd Jacobi-Theta function, defined as 
\begin{equation}
    \vartheta_3(z,q)= 1+2 \sum_{n=1}^{\infty} q^{n^2} \cos(2 n z). 
\end{equation}

Our next step is to evaluate the numerator of the first term in \cref{eq: average gate fidelity logical channel}. We start by evaluating $\xi_{{E}}^{\rm Id}(\bo v)$, which involves calculating the characteristic function $\chi_{{E}}(\bo v)$. The first step is decomposing the shear error in Eq.~\eqref{eq: reminder propagated error operator} into displacement errors 
\begin{multline}\label{eq: decomposing shear error into displacement}
    e^{i 2 \pi \big[ P_{\text{min},3}(\frac{{q}+\sqrt{2\pi} v_q}{\sqrt{\pi}} )-P_{\text{min},3}(\frac{{q}}{\sqrt{\pi}} )\big]} \\
    = \frac{1}{\sqrt{2\pi}}\iint d u_p dx \bigg(
    e^{i 2 \pi \big[ P_{\text{min},3}(\frac{x+\sqrt{2\pi} v_q}{\sqrt{\pi}} )-P_{\text{min},3}(\frac{x}{\sqrt{\pi}} )\big]}\\
    \times e^{-i\sqrt{2\pi} u_p x}{W}(0,u_p)\bigg). 
\end{multline}
By combining Eq.~\eqref{eq: decomposing shear error into displacement} with Eq.~\eqref{eq: reminder propagated error operator}, through some tedious algebra, we find that $\chi_{ E}$ is 
\begin{multline}\label{eq: characteristic of P3 propagated error}
    \chi_{{E}}(\bo v)= \frac{ \mathcal{N}(\frac{2 \tanh(\Delta^2/2)}{\lambda},v_q)} {\sqrt{2\pi}} \int dx\bigg(e^{-i v_p \sqrt{2\pi} x}   \\
    \times e^{- \lambda \tanh(\frac{\Delta^2}{2})x^2  }  e^{i 2 \pi \big[ P_3(\frac{x}{\sqrt{\pi}} )-P_3(\frac{x-\sqrt{2\pi} v_q}{\sqrt{\pi}} )\big] } \bigg)
\end{multline}
The integral over $x$ can be performed exactly. Then, by taking the modulus squared and dividing by the normalization constant $\xi_{{E}^{\dg}{E}}(\vect{0})$, the result can be written in terms of 1D normal distribution functions
\begin{widetext} 
    \begin{align}
    \label{eq: absolute square of chi gives a probability function}
    \frac{|\chi_{{E}}(\bo v)|^2}{ \xi_{{E}^{\dg}{E}}^{\rm Id}(\vect{0})} = \frac{ \mathcal{N}(\tanh(\frac{\Delta^2}{2})/\lambda,v_q)\mathcal{N}(\frac{ v_q^2}{2 \lambda \tanh(\frac{\Delta^2}{2})}+\lambda \tanh(\frac{\Delta^2}{2}) ,v_p - \frac{v_q}{2} + \frac{v_q^2}{\sqrt{2}})  }{C(\Delta,\lambda)},
\end{align}
where
\begin{align}\label{eq: C Delta lambda}
    C(\Delta,\lambda)  = 2 \coth(\Delta^2)\tanh(\frac{\Delta^2}{2}) \vartheta _3\left(0,e^{-\frac{\pi  \coth \left(\Delta^2\right)}{\lambda }}\right) \vartheta _3\left(0,e^{-\pi  \lambda  \coth \left(\Delta^2 \right)}\right).
\end{align}
\end{widetext}
This is not quite equal to the fraction in \cref{eq: average gate fidelity logical channel}, but in the next step we will use \cref{eq: absolute square of chi gives a probability function} to lower-bound $\bar{F}(\Delta,\lambda)$.

\textbf{Step 2: Lower bound for the average fidelity $\bar F(\Delta,\lambda)$}

In this stage of the proof, our goal is to obtain a lower-bound for the numerator in \cref{eq: average gate fidelity logical channel}. From \cref{eq: f from characteristic function,eq: logical characteristic in term of physical characteristic}, we can write
\begin{subequations}
    \begin{align}
            |\xi_{{E}}^{\rm Id}(\bo v)|^2 & = \Big| |\chi_{{E}}(\bo v)| + \sum_{\substack{\bo n \in \mathbb{Z}^2  \\ \bo n \neq \bo 0 }} |\chi_{{E}}(\bo v + \sqrt{2}\bo n)| e^{i \theta(\bo v,\mathrm{Id},\bo n)} \Big|^2 \\
            & \equiv \Big| |\chi_{{E}}(\bo v)| + \chi_{{E}}^{\rm Id, rest}(\bo v) \Big|^2
    \end{align}
\end{subequations}
We have also defined $\chi_{{E}}^{\rm Id,rest}$ as the sum of all the contributions to $\xi^{\rm Id}_{{E}}$ which come from outside the correction patch located at the origin. Substituting $\arg(\chi_{{E}}^{\rm Id,rest}(\vect{v}))=\pi$ for all $\vect{v}$ gives a lower bound 
\begin{subequations}
    \label{eq: Cauchy-inequality 1}
    \begin{align}
            & \int_{\square} d \bo v | \xi_{{E}}^{\rm Id}(\bo v) |^2= \int_{\square} d \bo v \Big| |\chi_{E}(\bo v)|+\chi_{{E}}^{\rm Id,rest}(\bo v)  \Big|^2 \\
            & \geq   \int_{\square}d \bo v \Big(|\chi_{E}(\bo v)|^2-2 |\chi_{E}(\bo v)| ~ |\chi_{{E}}^{\rm Id,rest}(\bo v)|+|\chi_{{E}}^{\rm Id,rest}(\bo v)|^2 \Big) 
    \end{align}
\end{subequations}
At this step, we apply the Cauchy–Schwarz inequality successively 
\begin{subequations}
    \label{eq: Cauchy-inequality 2}
    \begin{align}
       & \int_{\square} d\bo v  |\chi_{E}(\bo v)| |\chi_{{E}}^{\rm Id,rest}(\bo v)|  \\
       & \leq \int dv_q \sqrt{\int dv_p |\chi_{{E}}(\bo v)|^2 } \sqrt{\int dv_p |\chi_{{E}}^{\rm Id,rest}(\bo v)|^2 } \\
       & \leq  \sqrt{ \iint dv_qdv_p |\chi_{{E}}(\bo v)|^2   } \sqrt{ \iint dv_qdv_p |\chi_{{E}}^{\rm Id,rest}(\bo v)|^2    } \\
       &= \sqrt{ \int_{\square} d \bo v |\chi_{{E}}(\bo v)|^2   } \sqrt{ \int_{\square} d\bo v |\chi_{{E}}^{\rm Id,rest}(\bo v)|^2    }.
\end{align}
\end{subequations}
The main idea behind applying the Cauchy–Schwarz inequality is that it allows us to complete the square and thereby obtain a new lower bound 
\begin{subequations}
    \begin{align}
        & \int_{\square} d \bo v | \xi_{{E}}^{\rm Id}(\bo v) |^2\\
        & \qquad \geq \left(\sqrt{\int_{\square}  d\bo v |\chi_{E}(\bo v)|^2    } - \sqrt{ \int_{\square} d\bo v |\chi_{{E}}^{\rm Id,rest}(\bo v)|^2 } \right)^2.
\end{align}
\end{subequations}

In the following, we find an upper-bound for the second term in the parentheses, which will gives us the final lower-bound to the fidelity. We note that $|\chi_{{E}}^{\rm Id,\rm rest}(\bo v)|$ is upper-bounded by 
\begin{equation}
    |\chi_{{E}}^{\rm Id,\rm rest}(\bo v)| \leq \sum_{\substack{\bo n \in \mathbb{Z}^2  \\ \bo n \neq \bo 0 }}|\chi_{E}(\bo v + \sqrt{2} \bo n) |.
\end{equation}
Given this upper bound, we find that 
\begin{subequations}
\begin{align}
        & \int_{\square}d \bo v | \chi_{E}^{\rm Id,rest}(\bo v)|^2 \leq \int_{\square }d \bo v \Big(  \sum_{\substack{\bo n \in \mathbb{Z}^2  \\ \bo n \neq \bo 0 }} |\chi_{E}(\bo v+\sqrt{2}\bo n)| \Big)^2 \\
        & \qquad\leq \Bigg( \sum_{\substack{\bo n \in \mathbb{Z}^2  \\ \bo n \neq \bo 0 }} \sqrt{ \int_{\square} d\bo v |\chi_{{E}}(\bo v + \sqrt{2} \bo n)|^2   }\Bigg)^2 
\end{align}
\end{subequations}
where in the final inequality, we follow the same steps as in Eq.~\eqref{eq: Cauchy-inequality 1} and Eq.~\eqref{eq: Cauchy-inequality 2}. In summary, our final lower bound is 
\begin{align}
    \label{eq: final lower bound to gate fidelity}
        & \frac{\int_{\square} d \bo v | \xi_{{E}}^{\rm Id}(\bo v) |^2}{\xi_{{E}^{\dag}{E}}^{\rm Id}(\bo 0)} \geq  \Big( \sqrt{\int_{\square}  \frac{d\bo v |\chi_{\hat E}(\bo v)|^2  }{\xi_{{E}^{\dag}{E}}^{\rm Id}(\bo 0)}  }  \\
        & \qquad \quad - \sum_{\bo n \neq \bo 0} \sqrt{ \int_{\square} d\bo v \frac{|\chi_{{E}}(\bo v + \sqrt{2} \bo n)|^2}{\xi^{\rm Id}_{{E}^{\dg}{E}}(\bo 0)}   }  \Big)^2
\end{align}

Let us denote
\begin{equation}
     \frac{p_{E}(\bo n)}{ C(\Delta,\lambda) } =\int_{\square} d \bo v\frac{|\chi_{{E}}(\bo v +  \bo n)|^2}{\xi^{\rm Id}_{{E}^{\dg}{E}}(\bo 0)}  ,
\end{equation}
so that we rewrite Eq.~\eqref{eq: final lower bound to gate fidelity} as
\begin{equation}
    \label{eq: lower bound written in term of probability}
    \frac{\int_{\square} d \bo v | \xi_{{E}}^{\rm Id}(\bo v) |^2}{\xi_{{E}^{\dag}{E}}^{\rm Id}(\bo 0)} \geq \Big( \frac{p_{E}(\bo 0)}{C(\Delta,\lambda)}  - \sum_{\substack{\bo n \in \mathbb{Z}^2  \\ \bo n \neq \bo 0 }} \frac{p_{E}(\sqrt{2} \bo n)}{C(\Delta,\lambda)} \Big)^2.
\end{equation}
Based on Eq.~\eqref{eq: absolute square of chi gives a probability function}, we can interpret $p_{{E}}(\vect{n})$ as the probability that a displacement error $\bo v$ lies inside a square patch $\square$ centered at $\vect{n}$. Hence, using  Eq.~\eqref{eq: absolute square of chi gives a probability function}, we find that
\begin{equation}
    \label{eq: sum of probability}
    \sum_{\bo n} p_{E}\Big(\frac{\bo n}{\sqrt{2}} \Big) = 1.
\end{equation}

In the next step, we will make use of the following property of Gaussian integrals to obtain a lower bound for $p_{{E}}(\bo 0)$
\begin{equation}
    \label{eq: fact about Gaussian integral}
    \int_{-D}^{D} dx~\mathcal{N}(\Delta^2,x-\mu_1) \geq \int_{-D}^{D} dx~\mathcal{N}(\Delta^2,x-\mu_2), 
\end{equation}
for all $D$ and $\Delta^2$, provided that $|\mu_1| \leq |\mu_2|$. Intuitively, Eq.~\eqref{eq: fact about Gaussian integral} expresses the fact that the integral over the interval $(-D,D)$ represents the probability that the random variable $x$ lies within that range. This probability is maximized when the mean of the Gaussian $\mu$ is at the center of the interval, i.e.~at the origin $\mu=0$. As the mean $\mu$ moves away from the origin, the Gaussian distribution shifts, and because its tails decay exponentially, the probability of finding $x$ within $(-D,D)$ decreases accordingly.  For $\bo v \in \square$, we find that
\begin{equation}
    \max_{\bo v \in \square} |\frac{v_q}{2}-\frac{v_q^2}{\sqrt{2}}| = \frac{3}{4\sqrt{8}}
\end{equation}
obtained at $v_{q}=-1/\sqrt{8}$. Consequently, by using Eq.~\eqref{eq: fact about Gaussian integral}, we can lower bound $p_{{E}}(\bo 0)$ by 
\begin{subequations}
    \begin{align}
    & p_{E}(\bo 0)  \geq \int_{\square} d\bo v ~ \mathcal{N}\Big(\tanh(\frac{\Delta^2}{2})/\lambda,v_q\Big) \nonumber\\
    &  \quad \times   \mathcal{N}\Big(\frac{ v_q^2}{2 \lambda \tanh(\frac{\Delta^2}{2})}+\lambda \tanh(\frac{\Delta^2}{2}) ,v_p-\frac{3}{4\sqrt{8}} \Big)\nonumber \\
    & \geq \int_{-1/\sqrt{8}}^{1/\sqrt{8}} dv_{q} ~ \mathcal{N}\Big(\tanh(\frac{\Delta^2}{2})/\lambda,v_q\Big) \\
    &  \times \int_{\frac{1}{2\sqrt{8}}}^{\frac{1}{\sqrt{8}}} dv_p~    \mathcal{N}\Big(\frac{ v_q^2}{2 \lambda \tanh(\frac{\Delta^2}{2})}+\lambda \tanh(\frac{\Delta^2}{2}) ,v_p-\frac{3}{4\sqrt{8}} \Big) \\
    & =  \int_{-1/\sqrt{8}}^{1/\sqrt{8}} dv_{q} ~ \mathcal{N}\Big(\tanh(\frac{\Delta^2}{2})/\lambda,v_q\Big)\nonumber\\
    &\quad  \times \text{Erf}\Big( \frac{\sqrt{\pi}}{4\sqrt{8} \sqrt{\frac{ v_q^2}{2 \lambda \tanh(\frac{\Delta^2}{2})}+\lambda \tanh(\frac{\Delta^2}{2})} }  \Big)
    \end{align}
\end{subequations}

In the next step, we limit the integral over $v_q$ to the interval
\begin{equation}
    v_q \in \Big(-\sqrt[4]{\tanh(\frac{\Delta^2}{2})/\lambda},\sqrt[4]{\tanh(\frac{\Delta^2}{2})/\lambda} \Big), 
\end{equation}
under the assumption that
\begin{equation}
    \label{eq: condition for valid limitting integral}
    \sqrt[4]{\tanh(\frac{\Delta^2}{2})/\lambda} \leq \frac{1}{\sqrt{8}}.
\end{equation}
We will later derive a condition that specifies for which values of $\Delta$ and $\lambda$ this assumption holds. By limiting the interval of the integral, we can use $v_{q}^{2}\leq \sqrt{\mathrm{tanh}(\Delta^{2}/2)/\lambda}$ to obtain our final lower bound to $p_{{E}}(\bo 0)$ as
\begin{multline}
        \label{eq: lower bound p0 with Erf function}
        p_{E}(\bo 0) \geq \text{Erf}\left(\frac{\sqrt{\pi}}{\Big( \tanh(\Delta^2/2)/\lambda  \Big)^{1/4}}\right) \\
        \times \text{Erf}\left( \frac{\sqrt{\pi}}{(4\sqrt{8}) \sqrt{\frac{ \sqrt{\tanh(\Delta^2/2)/\lambda} }{2 \lambda \tanh(\frac{\Delta^2}{2})}+\lambda \tanh(\frac{\Delta^2}{2})} }  \right).
\end{multline}

\textbf{Step 3: Choosing the asymmetry value $\lambda(\Delta)$}

Based on the argument presented in the main text, we know that the shear error dominates in the small-$\Delta$ limit. The second $\mathrm{Erf}$ term can be identified as the contribution associated with the shear error. Therefore, we choose $\lambda(\Delta)$ to be the value of $\lambda$ that maximizes the second $\mathrm{Erf}$ function in \cref{eq: lower bound p0 with Erf function}, which amounts to minimizing 
\begin{equation}
    \min_{\lambda} \Big(  \frac{ \sqrt{\tanh(\Delta^2/2)/\lambda} }{2 \lambda \tanh(\frac{\Delta^2}{2})}+\lambda \tanh(\frac{\Delta^2}{2}) \Big).
\end{equation}
Through some simple algebra, we find that 
\begin{subequations}
    \label{eq: ansatz for optimal ansatz}
    \begin{align}
            \lambda(\Delta) & = \frac{3^{2/5}}{2^{4/5}} \Big( \tanh(\Delta^2/2) \Big)^{-3/5} \\
            & \propto \Big( \tanh(\Delta^2/2) \Big)^{-3/5}.
    \end{align}
\end{subequations}
Substituting back we find that 
\begin{subequations}
    \label{eq: asymtotic scaling in the FT proof}
    \begin{gather}
        \tanh(\Delta^2/2)/\lambda(\Delta) \propto \tanh(\Delta^2/2)^{8/5}\\
         \frac{ \sqrt{\tanh(\Delta^2/2)/\lambda(\Delta)} }{2 \lambda(\Delta) \tanh(\frac{\Delta^2}{2})}+\lambda(\Delta) \tanh(\frac{\Delta^2}{2}) \propto \tanh(\Delta^2/2)^{2/5}. 
    \end{gather}
\end{subequations}
Given our choice of $\lambda(\Delta)$, we can also derive the condition on $\Delta$ under which Eq.~\eqref{eq: lower bound p0 with Erf function} is valid, namely
\begin{equation}
    \Delta \leq \sqrt{2} \sqrt{\text{arctanh} \left(\frac{1}{16}\sqrt[4]{\frac{3}{2}}\right)} \approx 0.372 .
\end{equation}

Again using our choice of $\lambda(\Delta)$ given in Eq.~\eqref{eq: ansatz for optimal ansatz}, by substituting Eq.~\eqref{eq: asymtotic scaling in the FT proof} back into Eq.~\eqref{eq: lower bound p0 with Erf function}, we find that the RHS of Eq.~\eqref{eq: lower bound p0 with Erf function} approaches one as $\Delta \to 0$. Due to Eq.~\eqref{eq: sum of probability}, we can conclude that 
\begin{equation}
    \label{eq: limit of pE0}
    \lim_{\Delta \to 0} p_{{E}}(\bo 0) = 1
\end{equation}
and 
\begin{equation}
    \lim_{\Delta \to 0} p_{{E}}(\sqrt{2} \bo n) = 0 \quad \forall \bo n \in \mathbb{Z}^2 \backslash \{ \mathbf{0} \} . 
\end{equation}
One final observation is that that the function $C(\Delta,\lambda(\Delta))$ from~\cref{eq: C Delta lambda} is a strictly decreasing function as $\Delta$ decreases and 
\begin{equation}
    \label{eq: limit of the constant CDelta}
   \lim_{\Delta \to 0 } C(\Delta,\lambda(\Delta)) = 1
\end{equation}

By combining Eq.~\eqref{eq: limit of pE0} and Eq.~\eqref{eq: limit of the constant CDelta}, we find that the RHS of Eq.~\eqref{eq: lower bound written in term of probability} approaches one in the limit $\Delta \to 0$. Consequently, the gate fidelity $\bar{F}$ also approaches one in this limit. Hence, it implies that for any given target gate infidelity $\epsilon_{\rm target}$, there exists a threshold $\Delta_{\rm th}$ such that
\begin{equation}
    \Delta \leq \Delta_{\rm th} \Rightarrow 1-F(\Delta,\lambda(\Delta)) \leq \epsilon_{\rm target}. 
\end{equation}
which proves Theorem~\ref{theorem: FT theorem T3 gate}. 
    
\end{proof}

\section{Numerical Methods}\label{sec: numerical methods}

In this appendix we detail the numerical methods we used to produce the plots in \cref{sec: numerics} and provide some additional plots that we did not include in the main text.

\subsection{Polynomial Phase Gates}

We will begin by detailing the numerical methods we use to model the effective logical channel $\mathcal{E}$ that we introduced in \cref{sec: numerics}. To recap, here we define the effective logical channel as
\begin{equation}
\tag{\ref{eq: logical channel equation}}
\raisebox{-.43\height}{\includegraphics{figures/Logical_channel_equation.pdf}},
\end{equation}
where $\Delta_{q}=\Delta/\lambda$ and $\Delta_{p}=\Delta\lambda$ as usual. Our definition of $\mathcal{E}$ here differs from the definition in \cref{appendix: proof of fault-tolerance theorem} by including a non-biased noise channel to account for noise in the QEC cycle that follows.

Our numerics are state vector simulations in the Fock basis with a variable truncation dimension that increases over the course of the simulation. In particular, the key subroutine of our simulation takes the following inputs: an input qubit state $\ket{\psi}$, a polynomial phase gate with polynomial $P(x)$, a GKP codestate approximation level $\Delta$, an asymmetry $\lambda$, and a measured Pauli operator ${P}$. The subroutine then outputs the expectation value $\mathrm{tr}\big({P}\mathcal{E}(\ket{\psi}\!\bra{\psi})\big)$, where $\mathcal{E}$ is the effective logical channel of the gate. We will explain later how this subroutine can be used to determine the average gate fidelity and state fidelity.

Before proceeding, note that \cref{eq: logical channel equation} can be rewritten using \cref{eq: relation between biased square GKP to non-biased rectangular GKP} as
\begin{equation}
\label{eq: logical channel equation 2}
\raisebox{-.43\height}{\includegraphics{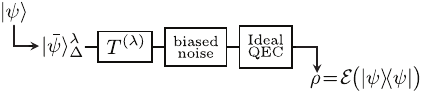}}.
\end{equation}
Note that this is \textit{not} the same as the on-demand morphing scheme described in \cref{appendix: mid-circuit morphing} because the noise preceding the ideal QEC box is biased. \cref{eq: logical channel equation 2} is preferable numerically because the non-biased envelope of the input state $\ket{\bar{\psi}}_{\Delta}^{\lambda}$ lowers the average photon number of the state being simulated, reducing the truncation error.

With the rewrite in \cref{eq: logical channel equation 2}, we can describe the simulation in three steps:
\begin{enumerate}
    \item State preparation of a logical rectangular GKP codestate with a non-biased envelope $\ket{\bar{\psi}}_{\Delta}^{\lambda}$ as defined in \cref{eq: rectangular codewords};
    \item The polynomial phase gate acting on the rectangular GKP code, given by $\mathrm{exp}\big(2\pi i P({q}/(\sqrt{\lambda \pi}))\big)$; and
    \item A biased Gaussian random displacement channel $\mathcal{G}_{\Sigma}$ with covariance matrix $\Sigma = \mathrm{tanh}(\Delta^{2}/2)\mathrm{diag}(\lambda,\lambda^{-1})$, followed by ideal QEC and a trivial unencoding into a logical qubit density matrix. The noise, QEC, and unencoding will all be simulated together.
\end{enumerate}

We now proceed in deriving the equations that we use in our simulation, starting with the preparation of the $\ket{\bar{\psi}}_{\Delta}^{\lambda}$ state. For notational clarity we denote the vacuum state as $\ket{\mathrm{vac}}$, single-mode displacement operators as ${W}(\vect{v})={W}(v_{1},v_{2})$, and coherent states as displaced vacuum states ${W}(\vect{v})\ket{\mathrm{vac}}$. Working towards an expression in the Fock basis $\{\ket{n}_{\text{Fock}}\}$ that we can numerically evaluate, we begin with a decomposition of ideal GKP codestates into coherent states following Ref.~\cite{GKP2001}. In particular, the $\ket{\bar{0}}^{\lambda}$ state is the unique simultaneous $+1$ eigenstate of the rectangular GKP stabiliser ${S}_{q}={W}(\sqrt{2\lambda},0)$ and the logical operator ${Z}={W}(0,1/\sqrt{2\lambda})$. Therefore, we can write the projector onto the $\ket{\bar{0}}^{\lambda}$ state as~\cite{GKP2001}
\begin{equation}\label{eq: 0 state projector}
    \ket{\bar{0}}^{\lambda}\bra{\bar{0}}^{\lambda}=\sum_{m,n\in\mathbb{Z}}{S}_{q}^{m}{Z}^{n},
\end{equation}
since any other simultaneous eigenstate of ${S}_{q}$ and ${Z}$ has at least one of these eigenvalues not equal to one, and will therefore be a 0-eigenstate of \cref{eq: 0 state projector}.
Applying this projector to the vacuum state gives
\begin{equation}\label{eq: logical 0 coherent states}
    \ket{\bar{0}}^{\lambda}=\sum_{m,n\in\mathbb{Z}}(-1)^{mn}{W}\big(m\sqrt{2\lambda},n/\sqrt{2\lambda}\big)\ket{\mathrm{vac}},
\end{equation}
and applying ${X}={W}(\sqrt{\lambda/2},0)$ to \cref{eq: logical 0 coherent states} gives
\begin{equation}
    \ket{\bar{1}}^{\lambda}=\sum_{m,n\in\mathbb{Z}}i^{2mn-n}{W}\big((2m+1)\sqrt{\lambda/2},n/\sqrt{2\lambda}\big)\ket{\mathrm{vac}}.
\end{equation}

Next, we need an expression for the approximate codestates $e^{-\Delta^{2}{a}^{\dag}{a}}$. This is simple to derive from the identity
\begin{multline}
    e^{-\Delta^{2}{a}^{\dag}{a}}{W}(v_{q},v_{p})e^{\Delta^{2}{a}^{\dag}{a}}\\
    =e^{-\pi(v_{q}^{2}+v_{p}^{2})(1-e^{-2\Delta^{2}})/2}\hspace{2 cm}\\
    \times{W}(e^{-\Delta^{2}}v_{q},e^{-\Delta^{2}}v_{p})\\
    \times e^{2\sqrt{\pi}(-v_{1}+iv_{2})\sinh(\Delta^{2}){a}},
\end{multline}
which itself can be derived from the Baker-Campbell-Haussdorf formula. Since $e^{-\Delta^{2}{a}^{\dag}{a}}\ket{\mathrm{vac}}=e^{\alpha {a}}\ket{\mathrm{vac}}=\ket{\mathrm{vac}}$, this gives
\begin{subequations}\label{eq: approx logical codestates coherent states}
    \begin{align}
        \ket{\bar{0}}^{\lambda}_{\Delta}&=\sum_{m,n\in\mathbb{Z}}c_{2m,n}(-1)^{mn}{W}\big(m\sqrt{2\lambda},n/\sqrt{2\lambda}\big)\ket{\mathrm{vac}},\\
        \ket{\bar{1}}^{\lambda}_{\Delta}&=\sum_{m,n\in\mathbb{Z}}\bigg(c_{2m+1,n}i^{2mn-n}\nonumber\\
        &\hspace{0.5 cm}\times{W}\big((2m{+}1)\sqrt{\lambda/2},n/\sqrt{2\lambda}\big)\ket{\mathrm{vac}}\bigg),
    \end{align}
\end{subequations}
where we have defined $c_{m,n}=\exp\big(-\pi(m^{2}\lambda+n^{2}/\lambda)(1-e^{-2\Delta^{2}})/4\big)$ for convenience. Meanwhile, coherent states themselves can be represented in the Fock basis using the standard expression, which in our notation is
\begin{equation}\label{eq: coherent state Fock basis}
    {W}(v_{q},v_{p})\ket{\mathrm{vac}}=e^{-\pi(v_{q}^{2}+v_{p}^{2})/2}\sum_{n=0}^{\infty}\frac{\sqrt{\pi}^{n}(v_{q}{+}iv_{p})^{n}}{\sqrt{n!}}\ket{n}_{\text{Fock}}.
\end{equation}
Using \cref{eq: approx logical codestates coherent states,eq: coherent state Fock basis} allows us to generate GKP codestates in the Fock basis numerically, truncated to a desired Fock truncation dimension $d_{\text{trunc,init}}$; in this work, we generate the codestates at $d_{\text{trunc,init}}=400$.

Some care needs to be taken to ensure the state generation method is stable at high Fock dimensions. In particular, calculating $(v_{q}+iv_{p})^{n}/\sqrt{n!}$ for large $n$ can be numerically unstable and can return large or infinite results. This is undesirable because coherent states with large $|v_{q}+iv_{p}|$ are exponentially suppressed in the GKP codestates themselves due to the $c_{m,n}$ coefficient. To combat this, we simply check whether the coherent state has returned finite values, otherwise we set it to zero to avoid corrupting the remaining GKP codestate. The exclusion of some of these coherent states will cause some numerical error in addition to the usual truncation error, although in our tests we found this to be lower than $10^{-6}$ across the full range of $\Delta$ that we probed.

One important subtlety that must be taken into account is that the codestates $\ket{\bar{0}}^{\lambda}_{\Delta}$ and $\ket{\bar{1}}^{\lambda}_{\Delta}$ are not orthogonal. This is a problem because it would mean that the effective logical ``channel'' $\mathcal{E}$ would not be a channel because it is not trace-preserving. To rectify this, we follow the same procedure as in Ref.~\cite{shaw2022} and orthonormalize the codewords in \cref{eq: approx logical codestates coherent states}. First, the output states from \cref{eq: approx logical codestates coherent states} are not normalized so we begin by normalizing them. Then, to orthogonalize two arbitrary, normalized vectors $\ket{\psi}$ and $\ket{\phi}$, we first calculate the angle $\theta=\mathrm{arg}(\langle\psi|\phi\rangle)$. Then, the superpositions $\ket{+}=N_{+}(\ket{\psi}+e^{-i\theta}\ket{\phi})$ and $\ket{-}=N_{-}(\ket{\psi}-e^{-i\theta}\ket{\phi})$ are guaranteed to be orthogonal. Moreover, we choose the constants $N_{\pm}$ such that the states $\ket{\pm}$ are normalized. We can then construct orthogonalized states $\ket{\psi}_{\text{orth}}=(\ket{+}+\ket{-})/\sqrt{2}$ and $\ket{\phi}_{\text{orth}}=e^{i\theta}(\ket{+}-\ket{-})/\sqrt{2}$. This procedure coincides with the L\"owdin orthogonalization~\cite{Lowdin50} in the special case of having two 2D vectors. The additional error introduced by this procedure is negligible for $\bar{n}>2$ ($\Delta<0.45$), see for example Figure 11 of Ref.~\cite{shaw2022}. With this, we can obtain arbitrary logical codestates by taking superpositions of the orthonormal codestates.

With the GKP codestate $\ket{\bar{\psi}}_{\Delta}^{\lambda}$ now prepared in the Fock basis, we can apply the polynomial phase gate. The truncated Fock basis representation of the polynomial phase gate $\mathrm{exp}\big(2\pi i P({q}/(\lambda\sqrt{\pi}))\big)$ is straight-forward to obtain simply by calculating ${q}=(a+a^{\dag})/\sqrt{2}$ in the truncated Fock basis and taking the appropriate matrix exponential. However, there are two problems with this simple approach. The first is that performing the matrix exponential over the truncated Fock basis does not coincide with truncating the exponent of the infinite-dimensional operator, leading to additional numerical errors. This can be solved by performing the matrix exponential over a temporary, higher truncation dimension, say $d_{\text{trunc,temp}}=3d_{\text{trunc,init}}$, and then truncating to $d_{\text{trunc,init}}$ afterwards. The second problem is that even if the input state has no support on Fock basis states above $d_{\text{trunc,init}}$, the gate may increase its photon number such that the output state after the gate does have support on Fock basis states above $d_{\text{trunc,init}}$, leading to additional truncation error. We solve this second problem by only truncating the \textit{input} dimension of the matrix, so that the gate is represented by a $d_{\text{trunc,out}}\times d_{\text{trunc,init}}=3d_{\text{trunc,init}}\times d_{\text{trunc,init}}=1200\times400$ matrix. This way, the output state has a higher truncation dimension than the input state, avoiding any significant truncation error due to the gate.

Finally, we explain how to model the biased noise channel $\mathcal{G}_{\Sigma}$, the final ideal QEC round, and the unencoding into the logical qubit subspace. We start by explaining how to model the ideal error correction and unencoding without the preceding biased noise channel; we will explain how to incorporate $\mathcal{G}_{\Sigma}$ afterwards. To begin, note that applying ideal QEC to any CV state $\rho$ maps it to an ideal GKP codestate $\bar{\tau}$ whose logical state is given by~\cite{shaw2024logical}
\begin{equation}\label{eq: partial trace with P_ms}
    {\tau} = \frac{1}{2}\Big({I}\mathrm{tr}({\rho})+{X}\mathrm{tr}({\rho}{X}_{\text{m}}^{\lambda})+{Y}\mathrm{tr}({\rho}{Y}_{\text{m}}^{\lambda})+{Z}\mathrm{tr}({\rho}{Z}_{\text{m}}^{\lambda})\Big).
\end{equation}
Here, the operators ${X}_{\text{m}}^{\lambda}$, ${Y}_{\text{m}}^{\lambda}$ and ${Z}_{\text{m}}^{\lambda}$ operators are the \textit{ideal Pauli measurement operators} introduced in Ref.~\cite{shaw2022}. In particular, the ${Z}_{\text{m}}^{\lambda}$ is defined such that it acts as a logical ${Z}$ on any ideal GKP codestate that has been displaced by a \textit{correctable} displacement error~\cref{eq: correctable displacement}; that is, it acts as
\begin{subequations}
\begin{align}
    {Z}_{\text{m}}^{\lambda}{W}(v_{q},v_{p})\ket{\bar{0}}^{\lambda}&=+{W}(v_{q},v_{p})\ket{\bar{0}}^{\lambda},\\
    {Z}_{\text{m}}^{\lambda}{W}(v_{q},v_{p})\ket{\bar{1}}^{\lambda}&=-{W}(v_{q},v_{p})\ket{\bar{1}}^{\lambda},
\end{align}
\end{subequations}
for all $-\sqrt{\lambda/8}<v_{q}\leq\sqrt{\lambda/8}$ and $-1/\sqrt{8\lambda}<v_{p}\leq 1/\sqrt{8\lambda}$. Similar expressions also exist for the other Pauli operators ${X}_{\text{m}}$ and ${Y}_{\text{m}}$ acting on their respective correctable displaced GKP logical eigenstates.

Our task is, therefore, to obtain the Fock basis representation of the ideal Pauli measurement operators. In Appendix D of Ref.~\cite{shaw2024logical} it was shown that these ideal Pauli measurement operators can be written as an infinite sum of displacement operators given by
\begin{subequations}\label{eq: ideal Pauli measurements}
    \begin{align}
        {X}_{\text{m}}^{\lambda}&=\frac{1}{\pi}\sum_{n\in\mathbb{Z}}\frac{(-1)^{n}}{n+\frac{1}{2}}{W}\big((2n+1)\sqrt{\lambda/2},0\big),\\
        {Z}_{\text{m}}^{\lambda}&=\frac{1}{\pi}\sum_{n\in\mathbb{Z}}\frac{(-1)^{n}}{n+\frac{1}{2}}{W}\big(0,(2n+1)/\sqrt{2\lambda}\big).
    \end{align}
\end{subequations}
The displacement operators themselves can be obtained by taking a matrix exponential, again we use a higher temporary Fock truncation dimension to perform the truncation to avoid numerical errors.
At this stage in the simulation, our Fock state vector has dimension $d_{\text{trunc,out}}=3d_{\text{trunc,init}}=1200$; therefore, the matrix exponential is performed at a dimension of $d_{\text{trunc,temp}}=3d_{\text{trunc,out}}=3600$ before truncating back to a $d_{\text{trunc,out}}\times d_{\text{trunc,out}}=1200\times1200$ matrix. With this, we can obtain the ${X}_{\text{m}}^{\lambda}$ and ${Z}_{\text{m}}^{\lambda}$ operators by truncating the sum over the displacement operators \cref{eq: ideal Pauli measurements} such that $(2n+1)$ runs over all odd numbers from $-59$ to $+59$, which we found to be sufficient to avoid significant errors from this truncation. Finally, the ${Y}_{\text{m}}^{\lambda}$ operator is calculated by
\begin{equation}
    {Y}_{\text{m}}^{\lambda}=\frac{1}{2}\big(i{X}_{\text{m}}^{\lambda}{Z}_{\text{m}}^{\lambda}-i{Z}_{\text{m}}^{\lambda}{X}_{\text{m}}^{\lambda}\big),
\end{equation}
with the symmetric form of this expression chosen to improve numerical stability.

Now, we return to the issue of the Gaussian random displacement channel (arising from the noisy stabilizer measurements) that must be applied before taking the expectation value with the ideal Pauli measurement operators. In particular, Gaussian random displacement channels obey the relation
\begin{equation}
    \mathrm{tr}\big({O}\mathcal{G}_{\Sigma}({\rho})\big)=\mathrm{tr}\big(\mathcal{G}_{\Sigma}({O}){\rho}\big)
\end{equation}
for any operator $O$. Therefore, we can rewrite the ideal QEC and decoding steps in \cref{eq: partial trace with P_ms} to incorporate $\mathcal{G}_{\Sigma}$ as
\begin{subequations}
\begin{align}
    {\tau} &= \frac{1}{2}\Big({I}\mathrm{tr}\big(\mathcal{G}_{\Sigma}({\rho})\big)+{X}\mathrm{tr}\big(\mathcal{G}_{\Sigma}({\rho}){X}_{\text{m}}^{\lambda}\big)\nonumber\\
    &\qquad\qquad+{Y}\mathrm{tr}\big(\mathcal{G}_{\Sigma}({\rho}){Y}_{\text{m}}^{\lambda}\big)+{Z}\mathrm{tr}\big(\mathcal{G}_{\Sigma}({\rho}){Z}_{\text{m}}^{\lambda}\big)\Big)\\
    &= \frac{1}{2}\Big({I}\mathrm{tr}\big({\rho}\big)+{X}\mathrm{tr}\big({\rho}{X}_{\text{m},\Sigma}^{\lambda}\big)\nonumber\\
    &\qquad\qquad\qquad+{Y}\mathrm{tr}\big({\rho}{Y}_{\text{m},\Sigma}^{\lambda}\big)+{Z}\mathrm{tr}\big({\rho}{Z}_{\text{m},\Sigma}^{\lambda}\big)\Big),
\end{align}
\end{subequations}
where we have introduced the \textit{approximate} Pauli measurement operators defined as ${P}_{\text{m},\Sigma}^{\lambda}=\mathcal{G}_{\Sigma}({P}_{\text{m}}^{\lambda})$. The expressions for the approximate Pauli measurement operators can be derived from their ideal counterparts~\cref{eq: ideal Pauli measurements} via the equation
\begin{equation}\label{eq: GRD on displacement}
    \mathcal{G}_{\Sigma}\big({W}(v_{q},v_{p})\big)=\mathrm{exp}\big({-}\pi\vect{v}^{T}\Omega^{T}\Sigma\Omega\vect{v}\big){W}(v_{q},v_{p}),
\end{equation}
where $\Omega = \begin{bmatrix}0&1\\-1&0\end{bmatrix}$ is the single-mode version of \cref{eq: Omega matrix} and $\vect{v}=\begin{bmatrix}v_{q}\\ v_{p}\end{bmatrix}$. This amounts simply to a rescaling of the coefficients in \cref{eq: ideal Pauli measurements}.
Therefore, applying \cref{eq: GRD on displacement} to \cref{eq: ideal Pauli measurements} gives the approximate Pauli measurement operators in the Fock basis, allowing us to model the approximate QEC and unencoding steps at the same time in our numerics. 
Moreover, this has the advantage that these approximate Pauli measurement operators can be applied to pure states, avoiding the need to compute and store a CV density matrix.

To briefly summarise, we can calculate an expectation value $\mathrm{tr}\big({P}\mathcal{E}(\ket{\psi}\!\bra{\psi})\big)$ by first obtaining the $\ket{\bar{\psi}}_{\Delta}^{\lambda}$ state using superpositions of the orthonormalised versions of the states \cref{eq: approx logical codestates coherent states} truncated to $d_{\mathrm{trunc,init}}=400$, then applying the polynomial phase gate truncated to a $d_{\mathrm{trunc,out}}\times d_{\mathrm{trunc,init}}=1200\times400$ matrix, then taking the expectation value of this output state with the approximate Pauli measurement operator ${P}_{\mathrm{m},\Sigma}^{\lambda}$ (truncated to $d_{\mathrm{trunc,out}}\times d_{\mathrm{trunc,out}}=1200\times1200$) with $\Sigma=\mathrm{tanh}(\Delta^{2}/2)\mathrm{diag}(\lambda,\lambda^{-1})$, using \cref{eq: GRD on displacement,eq: ideal Pauli measurements}.

With these ingredients we can model the effective logical channel applied to any pure input qubit state. To calculate the average gate fidelity efficiently we utilise the formula from Ref.~\cite{NIELSEN2002}, giving
\begin{equation}
    F(\mathcal{E},{U})=\frac{1}{3}+\frac{1}{12}\sum_{j=0}^{3}\mathrm{tr}\big({U}{\sigma}_{j}{U}^{\dag}\mathcal{E}({\sigma}_{j})\big),
\end{equation}
where the ${\sigma}_{j}$ operators are the ${I},{X},{Y}$ and ${Z}$ Pauli matrices respectively for $j=0,1,2,3$. After some rearrangements we obtain
\begin{equation}\label{eq: simple formula rearranged}
    F(\mathcal{E},{U})=\frac{1}{3}+\frac{1}{12}\sum_{k,\ell=0}^{3}\mathrm{tr}\big({U}{V}_{k}{U}^{\dag}{\sigma}_{\ell}\big)\mathrm{tr}\big({\sigma}_{\ell}\mathcal{E}(\ket{\psi_{k}}\!\bra{\psi_{k}})\big),
\end{equation}
where $\{\ket{\psi_{k}}\}$ is a set of (qubit) pure states such that $\ket{\psi_{k}}\!\bra{\psi_{k}}$ is a basis for the space of qubit operators, and ${V}_{k}=\sum_{j}\alpha_{jk}{\sigma}_{j}/2$ where $\alpha_{jk}$ are coefficients such that ${\sigma}_{j}=\sum_{k}\alpha_{jk}\ket{\psi_{k}}\!\bra{\psi_{k}}$. In this work we choose $\ket{\psi_{k}}\in\{\ket{1},\ket{+},\ket{i},\ket{0}\}$ where $\ket{i}=(\ket{0}+i\ket{1})/\sqrt{2}$, although we expect that different choices would give identical numerical performance. \cref{eq: simple formula rearranged} is convenient because the latter trace can be evaluated simply by evaluating the relevant expectation value of the ideal Pauli operator as described above, instead of explicitly evaluating $\mathcal{E}$.

A similar method can also be used to extract the \textit{state} fidelity in \cref{fig:T_states}. In particular, we wish to calculate
\begin{equation}
    F=\bra{T}\mathcal{E}\big(\ket{+}\!\bra{+}\big)\ket{T}.
\end{equation}
To do this, note that $\ket{T}\!\bra{T}={I}/2+({X}+{Y})/(2\sqrt{2})$, and therefore we can write
\begin{equation}
    F=\frac{1}{2}+\frac{1}{2\sqrt{2}}\sum_{j=1}^{2}\mathrm{tr}\big({\sigma}_{j}\mathcal{E}(\ket{+}\!\bra{+})\big),
\end{equation}
which again can be easily evaluated using the Pauli measurement operators ${X}_{\text{m},\Sigma}^{\lambda}$ and ${Y}_{\text{m},\Sigma}^{\lambda}$.

For each polynomial phase gate, we calculated the average gate fidelity for input values of $\Delta=1/\sqrt{2\bar{n}+1}$ where $\bar{n}$ ranges from 2 to 20 in increments of 0.5, and for 32 input values of the asymmetry $\lambda$ ranging from 1 to 6.5. We also collected the same data for the state fidelity of the produced $\ket{T}$ state for each polynomial phase gate that we tested that implements a ${T}$ gate. The slowest stage of the simulation is the generation of the ideal Pauli measurement operators, so we pre-computed these for each value of $\lambda$ and $\Delta$ and re-used it for each gate. 

\begin{figure*}
\centering
    \includegraphics[]{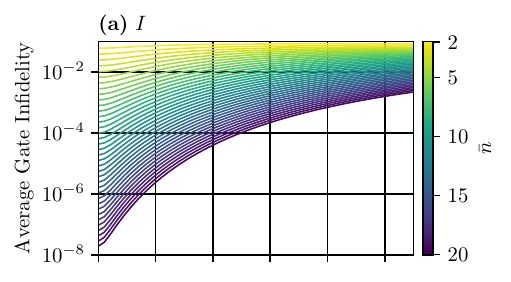}\hfill\;\\[-0.3 cm]
    \includegraphics[]{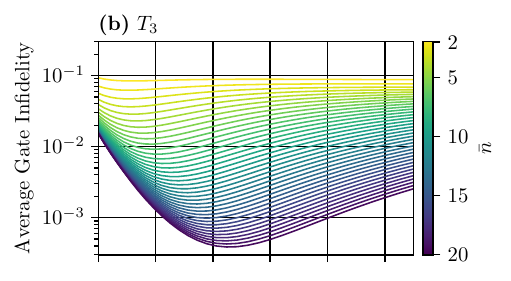}\hfill 
    \includegraphics[]{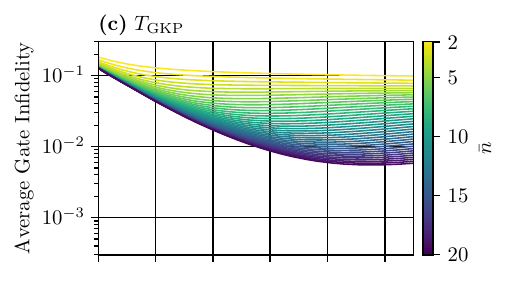}\\[-0.35 cm]
    \includegraphics[]{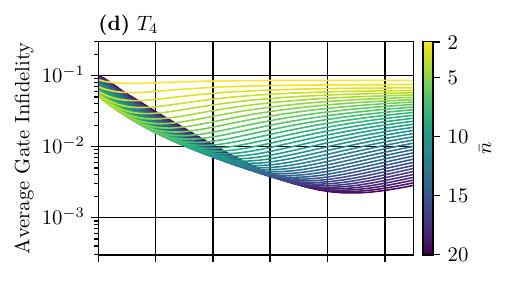}\hfill 
    \includegraphics[]{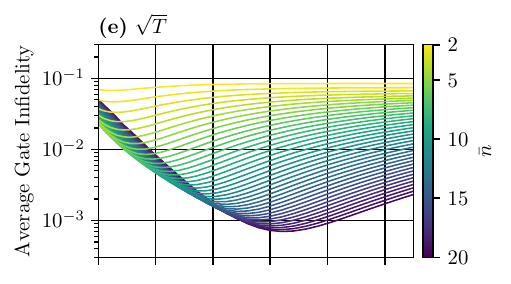}\\[-0.35 cm]
    \includegraphics[]{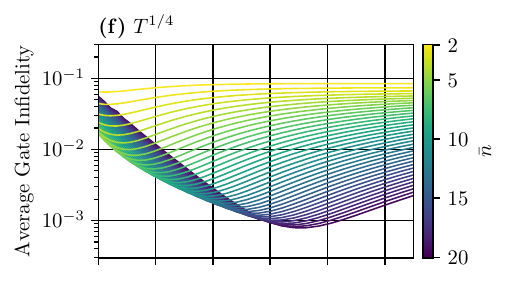}\hfill 
    \includegraphics[]{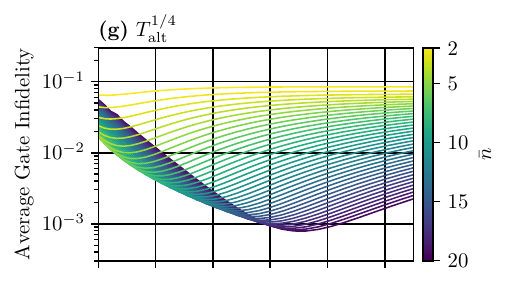}\\[-0.35 cm]
    \includegraphics[]{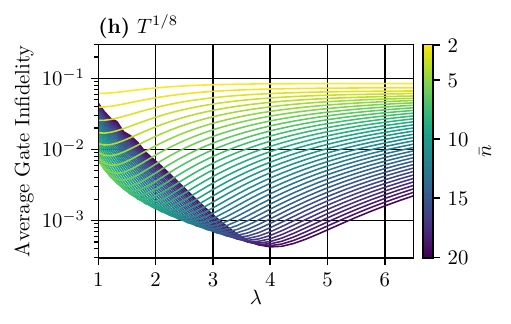}\hfill 
    \includegraphics[]{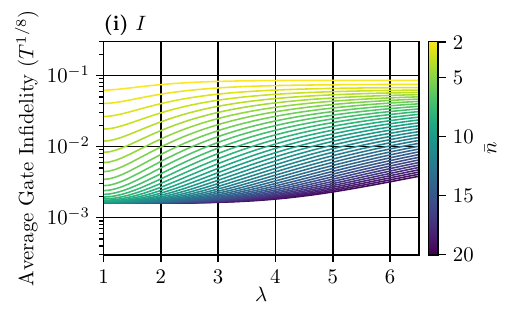}
    \caption{Average gate infidelity plots for the polynomial phase gates in \cref{tab: simulation polynomials,eq: 4rt_T_minus}, for each value of asymmetry and $\Delta$ as described in the text. Note that in subfigure (i), we implement the identity gate on the CV mode but calculate the average gate infidelity with respect to the $T^{1/8}$ gate.}\label{fig:all_gates}
\end{figure*}

\begin{figure*}
    \centering
    \includegraphics[]{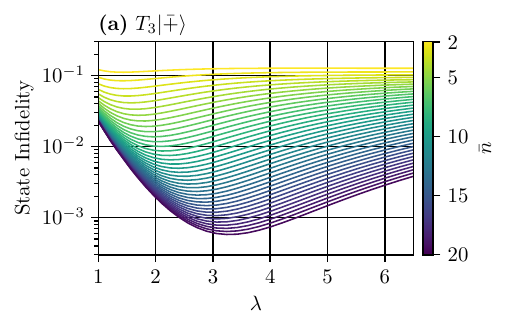}\hfill \includegraphics[]{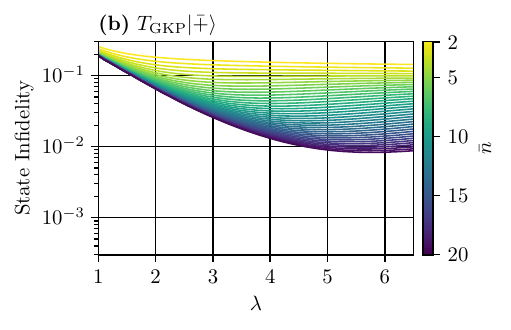}\\
    \includegraphics[]{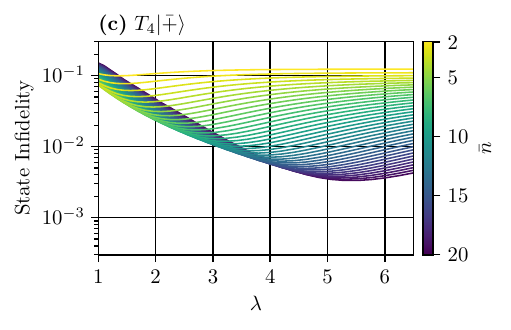}
    \caption{State infidelity plots for the polynomial phase gates in \cref{tab: simulation polynomials} that implement a $T$ gate, for each value of asymmetry and $\Delta$ as described in the text.}\label{fig:all_states}
\end{figure*}

The full results are presented in \cref{fig:all_gates,fig:all_states}, from which the plots in the main text are produced. Note that there are no error bars here since the simulations are matrix calculations instead of Monte Carlo simulations. One can see here that achieving the precisely optimal asymmetry is not crucial to achieve good performance. Moreover, note that for the identity gate, the optimal asymmetry is always 1, indicating that under idling noise the square GKP code outperforms all rectangular GKP codes as expected. Compared to the main text we also tested two additional gates. The first is an alternative polynomial phase gate implementation of the ${T}^{1/4}$ gate given by
\begin{equation}\label{eq: 4rt_T_minus}
    P(x)=-x^{5}/240-x^{4}/96+x^{3}/48+x^{2}/24-x/60,
\end{equation}
which is related to the polynomial for the ${T}^{1/4}$ gate given in \cref{tab: simulation polynomials} by the transformation $x\leftrightarrow -x$. Both polynomials are minimal under the lexicographic ordering because they differ only in the absolute value of each of the coefficients, the two implementations perform identically up to numerical error. The second is an ``implementation'' of the ${T}^{1/8}$ qubit gate that at the CV level implements an identity gate ${I}$ (that is, the polynomial is $P(x)=0$). In other words, we implement the identity gate ${I}$ at the CV level, but then evaluate the average gate infidelity with respect to the ${T}^{1/8}$ gate. The purpose of this is to evaluate a trivial benchmark in terms of average gate infidelity that the non-trivial implementation of the $T^{1/8}$ gate must beat.

\begin{figure}
    \includegraphics[]{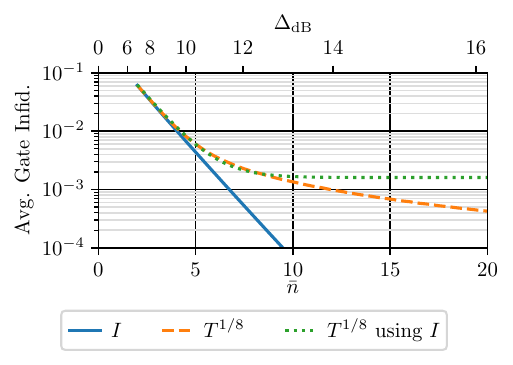}
    \caption{Comparison of the trivial (green dotted) and non-trivial (orange dashed) implementations of the $T^{1/8}$ gate. For reference, we also plot the identity gate ${I}$ evaluated with the average gate infidelity with respect to the identity gate (blue solid).}\label{fig:8rt_T_gates}
\end{figure}

In \cref{fig:8rt_T_gates} we compare the trivial and non-trivial implementations of the ${T}^{1/8}$ gates. Note that the trivial and non-trivial implementations have almost identical performance until the trivial implementation reaches an infidelity floor at $\sim1.6\times10^{-3}$, which corresponds to the qubit-level average gate infidelity between ${T}^{1/8}$ and ${I}$ as explained in the main text. To achieve this, one requires GKP codestates of higher quality than $\Delta = 0.25$ ($\bar{n}=8$, $\Delta_{\text{dB}}=12$).

\subsection{Vacuum State Method}

Finally, we explain how we simulate the vacuum state method, following the same method as in Ref.~\cite{Baragiola2019}.

The scheme that we wish to simulate consists of the following steps:
\begin{enumerate}
    \item Begin with a vacuum state $\ket{\mathrm{vac}}$,
    \item Measure the stabilisers with a round of square GKP QEC (with noisy measurements), and
    \item Based on the stabiliser measurement outcomes, perform a logical Clifford correction to make the state as close to the $\ket{{T}}$ state as possible.
\end{enumerate}
After this procedure, we average over the possible measurement outcomes that could have occurred (weighted appropriately by the probability of obtaining each outcome) giving a CV density matrix ${\rho}_{\text{CV}}$. The round of QEC is modeled with noisy measurements but projects the state into the ideal GKP subspace, which follows the same noise model as the QEC round 2 in the circuit $\mathcal{C}$ as explained in the main text. This means that ${\rho}_{\text{CV}}$ is in the ideal GKP subspace and can be trivially converted into a qubit density matrix $\rho$ which is the output of this procedure. We can then quantify the quality of the output state by taking the state fidelity between ${\rho}$ and $\ket{T}=T\ket{+}$.

In our simulations, the only noise arises from the noisy QEC measurements, which we model as a non-biased Gaussian random displacement function $\mathcal{G}_{\mathrm{tanh}(\Delta^{2}/2)I}$ acting on the vacuum state prior to an ideal GKP QEC round. Conveniently, a non-biased Gaussian random displacement function applied to the vacuum state corresponds exactly to a \textit{thermal} state ${\rho}_{\mathrm{th}}(\bar{n})$ with average photon number in our case given by $\bar{n}=\mathrm{tanh}(\Delta^{2}/2)$. This allows us to directly use the equations given in Ref.~\cite{Baragiola2019}, which give us the output qubit state conditioned on a given pair of stabilizer measurement outcomes $(e^{i\phi_{1}},e^{i\phi_{2}})$ along with the probability of these outcomes occurring. To perform our simulations, we sample the continuous set of possible stabiliser measurement outcomes from a $500\times500$ grid. For each output state, we calculate the state fidelity between it and each of the 12 states that are Clifford-equivalent to a $\ket{T}$ state, and return the maximum of these 12 fidelities. Finally, we sort the outcomes from highest fidelity to lowest.

In the case of no post-selection ($p=100\%$), we take the average of all these fidelities weighted according to the probability of obtaining the corresponding measurement outcome. With post-selection ($0\%<p<100\%$) we take the (weighted) average of the best fidelities up until the post-selection fraction has been reached. To obtain the infidelity lower-bound ($p=0\%$) we simply return the maximum fidelity (minimum infidelity) that was obtained in the grid, which in each case corresponded to obtaining $+1$ outcomes for both stabilisers. This is sufficient to produce~\cref{fig:T_states} in the main text.

\bibliography{report}

@article{GKP2001,
  title = {Encoding a qubit in an oscillator},
  author = {Gottesman, Daniel and Kitaev, Alexei and Preskill, John},
  journal = {Phys. Rev. A},
  volume = {64},
  issue = {1},
  pages = {012310},
  numpages = {21},
  year = {2001},
  month = {Jun},
  publisher = {American Physical Society},
  doi = {10.1103/PhysRevA.64.012310},
  url = {https://link.aps.org/doi/10.1103/PhysRevA.64.012310}
}

@article{Sivak2023,
author={Sivak, V. V.
and Eickbusch, A.
and Royer, B.
and Singh, S.
and Tsioutsios, I.
and Ganjam, S.
and Miano, A.
and Brock, B. L.
and Ding, A. Z.
and Frunzio, L.
and Girvin, S. M.
and Schoelkopf, R. J.
and Devoret, M. H.},
title={Real-time quantum error correction beyond break-even},
journal={Nature},
year={2023},
month={Apr},
day={01},
volume={616},
number={7955},
pages={50-55},
abstract={The ambition of harnessing the quantum for computation is at odds with the fundamental phenomenon of decoherence. The purpose of quantum error correction (QEC) is to counteract the natural tendency of a complex system to decohere. This cooperative process, which requires participation of multiple quantum and classical components, creates a special type of dissipation that removes the entropy caused by the errors faster than the rate at which these errors corrupt the stored quantum information. Previous experimental attempts to engineer such a process1--7 faced the generation of an excessive number of errors that overwhelmed the error-correcting capability of the process itself. Whether it is practically possible to utilize QEC for extending quantum coherence thus remains an open question. Here we answer it by demonstrating a fully stabilized and error-corrected logical qubit whose quantum coherence is substantially longer than that of all the imperfect quantum components involved in the QEC process, beating the best of them with a coherence gain of G{\thinspace}={\thinspace}2.27{\thinspace}{\textpm}{\thinspace}0.07. We achieve this performance by combining innovations in several domains including the fabrication of superconducting quantum circuits and model-free reinforcement learning.},
issn={1476-4687},
doi={10.1038/s41586-023-05782-6},
url={https://doi.org/10.1038/s41586-023-05782-6}
}

@article{deNeeve2022,
author={de Neeve, Brennan
and Nguyen, Thanh-Long
and Behrle, Tanja
and Home, Jonathan P.},
title={Error correction of a logical grid state qubit by dissipative pumping},
journal={Nature Physics},
year={2022},
month={Mar},
day={01},
volume={18},
number={3},
pages={296-300},
abstract={Stabilization of encoded logical qubits using quantum error correction is crucial for the realization of reliable quantum computers. Although error-correcting codes implemented using individual physical qubits require many separate systems to be controlled, codes constructed using a quantum oscillator offer the possibility to perform error correction with a single physical entity. One powerful encoding approach for oscillators is the grid state or Gottesman--Kitaev--Preskill (GKP) encoding, which allows small displacement errors to be corrected. Here we introduce and implement a dissipative map designed for physically realistic finite GKP codes, which performs quantum error correction of a logical qubit encoded in the motion of a single trapped ion. The correction cycle involves two rounds, which correct small displacements in position and momentum. We demonstrate an extension in coherence time of logical states by a factor of three using both square and hexagonal GKP codes. The simple dissipative map used for this correction can be viewed as a type of reservoir engineering, which pumps into the manifold of highly non-classical GKP qubit states.},
issn={1745-2481},
doi={10.1038/s41567-021-01487-7},
url={https://doi.org/10.1038/s41567-021-01487-7}
}

@article{Baragiola2019,
  title = {All-Gaussian Universality and Fault Tolerance with the Gottesman-Kitaev-Preskill Code},
  author = {Baragiola, Ben Q. and Pantaleoni, Giacomo and Alexander, Rafael N. and Karanjai, Angela and Menicucci, Nicolas C.},
  journal = {Phys. Rev. Lett.},
  volume = {123},
  issue = {20},
  pages = {200502},
  numpages = {6},
  year = {2019},
  month = {Nov},
  publisher = {American Physical Society},
  doi = {10.1103/PhysRevLett.123.200502},
  url = {https://link.aps.org/doi/10.1103/PhysRevLett.123.200502}
}

@article{Hastrup2021,
  title = {Unsuitability of cubic phase gates for non-Clifford operations on Gottesman-Kitaev-Preskill states},
  author = {Hastrup, Jacob and Larsen, Mikkel V. and Neergaard-Nielsen, Jonas S. and Menicucci, Nicolas C. and Andersen, Ulrik L.},
  journal = {Phys. Rev. A},
  volume = {103},
  issue = {3},
  pages = {032409},
  numpages = {6},
  year = {2021},
  month = {Mar},
  publisher = {American Physical Society},
  doi = {10.1103/PhysRevA.103.032409},
  url = {https://link.aps.org/doi/10.1103/PhysRevA.103.032409}
}

@misc{shaw2022,
      title={Stabilizer subsystem decompositions for single- and multi-mode Gottesman-Kitaev-Preskill codes}, 
      author={Mackenzie H. Shaw and Andrew C. Doherty and Arne L. Grimsmo},
      year={2022},
      eprint={2210.14919},
      archivePrefix={arXiv},
      primaryClass={quant-ph}
}

@article{Zhang2023,
  title = {Concatenation of the Gottesman-Kitaev-Preskill code with the XZZX surface code},
  author = {Zhang, Jiaxuan and Wu, Yu-Chun and Guo, Guo-Ping},
  journal = {Phys. Rev. A},
  volume = {107},
  issue = {6},
  pages = {062408},
  numpages = {16},
  year = {2023},
  month = {Jun},
  publisher = {American Physical Society},
  doi = {10.1103/PhysRevA.107.062408},
  url = {https://link.aps.org/doi/10.1103/PhysRevA.107.062408}
}

@article{Xu2023,
  title = {Qubit-Oscillator Concatenated Codes: Decoding Formalism and Code Comparison},
  author = {Xu, Yijia and Wang, Yixu and Kuo, En-Jui and Albert, Victor V.},
  journal = {PRX Quantum},
  volume = {4},
  issue = {2},
  pages = {020342},
  numpages = {22},
  year = {2023},
  month = {Jun},
  publisher = {American Physical Society},
  doi = {10.1103/PRXQuantum.4.020342},
  url = {https://link.aps.org/doi/10.1103/PRXQuantum.4.020342}
}

@article{NIELSEN2002,
title = {A simple formula for the average gate fidelity of a quantum dynamical operation},
journal = {Physics Letters A},
volume = {303},
number = {4},
pages = {249-252},
year = {2002},
issn = {0375-9601},
doi = {https://doi.org/10.1016/S0375-9601(02)01272-0},
url = {https://www.sciencedirect.com/science/article/pii/S0375960102012720},
author = {Michael A Nielsen},
abstract = {This Letter presents a simple formula for the average fidelity between a unitary quantum gate and a general quantum operation on a qudit, generalizing the formula for qubits found by Bowdrey et al. [Phys. Lett. A 294 (2002) 258]. This formula may be useful for experimental determination of average gate fidelity. We also give a simplified proof of a formula due to Horodecki et al. [Phys. Rev. A 60 (1999) 1888], connecting average gate fidelity to entanglement fidelity.}
}

@article{Terhal_2015,
   title={Quantum error correction for quantum memories},
   volume={87},
   ISSN={1539-0756},
   url={http://dx.doi.org/10.1103/RevModPhys.87.307},
   DOI={10.1103/revmodphys.87.307},
   number={2},
   journal={Reviews of Modern Physics},
   publisher={American Physical Society (APS)},
   author={Terhal, Barbara M.},
   year={2015},
   month=apr, pages={307–346} }

@article{Walshe2020,
  title = {Continuous-variable gate teleportation and bosonic-code error correction},
  author = {Walshe, Blayney W. and Baragiola, Ben Q. and Alexander, Rafael N. and Menicucci, Nicolas C.},
  journal = {Phys. Rev. A},
  volume = {102},
  issue = {6},
  pages = {062411},
  numpages = {19},
  year = {2020},
  month = {Dec},
  publisher = {American Physical Society},
  doi = {10.1103/PhysRevA.102.062411},
  url = {https://link.aps.org/doi/10.1103/PhysRevA.102.062411}
}

@article{Breuckmann2021LPDC,
  title = {Quantum Low-Density Parity-Check Codes},
  author = {Breuckmann, Nikolas P. and Eberhardt, Jens Niklas},
  journal = {PRX Quantum},
  volume = {2},
  issue = {4},
  pages = {040101},
  numpages = {19},
  year = {2021},
  month = {Oct},
  publisher = {American Physical Society},
  doi = {10.1103/PRXQuantum.2.040101},
  url = {https://link.aps.org/doi/10.1103/PRXQuantum.2.040101}
}

@article{Conrad2021,
  title = {Twirling and Hamiltonian engineering via dynamical decoupling for Gottesman-Kitaev-Preskill quantum computing},
  author = {Conrad, Jonathan},
  journal = {Phys. Rev. A},
  volume = {103},
  issue = {2},
  pages = {022404},
  numpages = {12},
  year = {2021},
  month = {Feb},
  publisher = {American Physical Society},
  doi = {10.1103/PhysRevA.103.022404},
  url = {https://link.aps.org/doi/10.1103/PhysRevA.103.022404}
}

@article{Terhal_2020,
doi = {10.1088/2058-9565/ab98a5},
url = {https://dx.doi.org/10.1088/2058-9565/ab98a5},
year = {2020},
month = {jul},
publisher = {IOP Publishing},
volume = {5},
number = {4},
pages = {043001},
author = {B M Terhal and J Conrad and C Vuillot},
title = {Towards scalable bosonic quantum error correction},
journal = {Quantum Science and Technology},
abstract = {We review some of the recent efforts in devising and engineering bosonic qubits for superconducting devices, with emphasis on the Gottesman–Kitaev–Preskill (GKP) qubit. We present some new results on decoding repeated GKP error correction using finitely-squeezed GKP ancilla qubits, exhibiting differences with previously studied stochastic error models. We discuss circuit-QED ways to realize CZ gates between GKP qubits and we discuss different scenarios for using GKP and regular qubits as building blocks in a scalable superconducting surface code architecture.}
}

@article{Weigand2016,
  title = {Encoding a qubit into a cavity mode in circuit QED using phase estimation},
  author = {Terhal, B. M. and Weigand, D.},
  journal = {Phys. Rev. A},
  volume = {93},
  issue = {1},
  pages = {012315},
  numpages = {21},
  year = {2016},
  month = {Jan},
  publisher = {American Physical Society},
  doi = {10.1103/PhysRevA.93.012315},
  url = {https://link.aps.org/doi/10.1103/PhysRevA.93.012315}
}

@article{Glancy2006,
  title = {Error analysis for encoding a qubit in an oscillator},
  author = {Glancy, S. and Knill, E.},
  journal = {Phys. Rev. A},
  volume = {73},
  issue = {1},
  pages = {012325},
  numpages = {5},
  year = {2006},
  month = {Jan},
  publisher = {American Physical Society},
  doi = {10.1103/PhysRevA.73.012325},
  url = {https://link.aps.org/doi/10.1103/PhysRevA.73.012325}
}

@article{Vuillot2019nogo,
  title = {Quantum error correction with the toric Gottesman-Kitaev-Preskill code},
  author = {Vuillot, Christophe and Asasi, Hamed and Wang, Yang and Pryadko, Leonid P. and Terhal, Barbara M.},
  journal = {Phys. Rev. A},
  volume = {99},
  issue = {3},
  pages = {032344},
  numpages = {27},
  year = {2019},
  month = {Mar},
  publisher = {American Physical Society},
  doi = {10.1103/PhysRevA.99.032344},
  url = {https://link.aps.org/doi/10.1103/PhysRevA.99.032344}
}

@book{Sakurai:1167961,
      author        = "Sakurai, Jun John",
      title         = "{Modern quantum mechanics; rev. ed.}",
      publisher     = "Addison-Wesley",
      address       = "Reading, MA",
      year          = "1994",
      url           = "https://cds.cern.ch/record/1167961",
}

@article{SINGMASTER1974,
title = {On polynomial functions (mod m)},
journal = {Journal of Number Theory},
volume = {6},
number = {5},
pages = {345-352},
year = {1974},
issn = {0022-314X},
doi = {https://doi.org/10.1016/0022-314X(74)90031-6},
url = {https://www.sciencedirect.com/science/article/pii/0022314X74900316},
author = {David Singmaster},
abstract = {In this paper, we obtain a canonical representation for those polynomials (with integer coefficients) which vanish (mod m), a canonical representation for each polynomial function (mod m) and an expression for the number of polynomial functions (mod m). This number turns out to be (weakly) multiplicative in m.}
}

@article{Noh2022,
  title = {Low-Overhead Fault-Tolerant Quantum Error Correction with the Surface-GKP Code},
  author = {Noh, Kyungjoo and Chamberland, Christopher and Brand\~ao, Fernando G.S.L.},
  journal = {PRX Quantum},
  volume = {3},
  issue = {1},
  pages = {010315},
  numpages = {43},
  year = {2022},
  month = {Jan},
  publisher = {American Physical Society},
  doi = {10.1103/PRXQuantum.3.010315},
  url = {https://link.aps.org/doi/10.1103/PRXQuantum.3.010315}
}

@misc{shaw2024logical,
      title={Logical Gates and Read-Out of Superconducting Gottesman-Kitaev-Preskill Qubits}, 
      author={Mackenzie H. Shaw and Andrew C. Doherty and Arne L. Grimsmo},
      year={2024},
      eprint={2403.02396},
      archivePrefix={arXiv},
      primaryClass={quant-ph}
}

@article{Konno2021,
  title = {Non-Clifford gate on optical qubits by nonlinear feedforward},
  author = {Konno, Shunya and Asavanant, Warit and Fukui, Kosuke and Sakaguchi, Atsushi and Hanamura, Fumiya and Marek, Petr and Filip, Radim and Yoshikawa, Jun-ichi and Furusawa, Akira},
  journal = {Phys. Rev. Res.},
  volume = {3},
  issue = {4},
  pages = {043026},
  numpages = {11},
  year = {2021},
  month = {Oct},
  publisher = {American Physical Society},
  doi = {10.1103/PhysRevResearch.3.043026},
  url = {https://link.aps.org/doi/10.1103/PhysRevResearch.3.043026}
}

@article{Royer2022,
  title = {Encoding Qubits in Multimode Grid States},
  author = {Royer, Baptiste and Singh, Shraddha and Girvin, S.M.},
  journal = {PRX Quantum},
  volume = {3},
  issue = {1},
  pages = {010335},
  numpages = {40},
  year = {2022},
  month = {Mar},
  publisher = {American Physical Society},
  doi = {10.1103/PRXQuantum.3.010335},
  url = {https://link.aps.org/doi/10.1103/PRXQuantum.3.010335}
}

@article{Polya1915,
author={P{\'o}lya, Georg},
title={Ueber ganzwertige ganze funktionen},
journal={Rendiconti del Circolo Matematico di Palermo (1884-1940)},
year={1915},
month={Dec},
day={01},
volume={40},
number={1},
pages={1-16},
issn={0009-725X},
doi={10.1007/BF03014836},
url={https://doi.org/10.1007/BF03014836}
}

@article{Roman1980,
author = {Steven Roman},
title = {The Formula of Faa Di Bruno},
journal = {The American Mathematical Monthly},
volume = {87},
number = {10},
pages = {805--809},
year = {1980},
publisher = {Taylor \& Francis},
doi = {10.1080/00029890.1980.11995156},
URL = {https://doi.org/10.1080/00029890.1980.11995156},
eprint = {https://doi.org/10.1080/00029890.1980.11995156}
}

@article{Rymarz2021,
  title = {Hardware-Encoding Grid States in a Nonreciprocal Superconducting Circuit},
  author = {Rymarz, Martin and Bosco, Stefano and Ciani, Alessandro and DiVincenzo, David P.},
  journal = {Phys. Rev. X},
  volume = {11},
  issue = {1},
  pages = {011032},
  numpages = {25},
  year = {2021},
  month = {Feb},
  publisher = {American Physical Society},
  doi = {10.1103/PhysRevX.11.011032},
  url = {https://link.aps.org/doi/10.1103/PhysRevX.11.011032}
}

@article{Matsos2024,
  title = {Robust and Deterministic Preparation of Bosonic Logical States in a Trapped Ion},
  author = {Matsos, V. G. and Valahu, C. H. and Navickas, T. and Rao, A. D. and Millican, M. J. and Kolesnikow, X. C. and Biercuk, M. J. and Tan, T. R.},
  journal = {Phys. Rev. Lett.},
  volume = {133},
  issue = {5},
  pages = {050602},
  numpages = {7},
  year = {2024},
  month = {Jul},
  publisher = {American Physical Society},
  doi = {10.1103/PhysRevLett.133.050602},
  url = {https://link.aps.org/doi/10.1103/PhysRevLett.133.050602}
}

@article{kolesnikow2025protected,
  title={Protected phase gate for the $0 $-$$\backslash$pi $ qubit using its internal modes},
  author={Kolesnikow, Xanda C and Smith, Thomas B and Thomsen, Felix and Alase, Abhijeet and Doherty, Andrew C},
  journal={arXiv preprint arXiv:2503.14634},
  year={2025}
}

@article{Larsen2025,
author={Larsen, M. V.
and Bourassa, J. E.
and Kocsis, S.
and Tasker, J. F.
and Chadwick, R. S.
and Gonz{\'a}lez-Arciniegas, C.
and Hastrup, J.
and Lopetegui-Gonz{\'a}lez, C. E.
and Miatto, F. M.
and Motamedi, A.
and Noro, R.
and Roeland, G.
and Baby, R.
and Chen, H.
and Contu, P.
and Di Luch, I.
and Drago, C.
and Giesbrecht, M.
and Grainge, T.
and Krasnokutska, I.
and Menotti, M.
and Morrison, B.
and Puviraj, C.
and Rezaei Shad, K.
and Hussain, B.
and McMahon, J.
and Ortmann, J. E.
and Collins, M. J.
and Ma, C.
and Phillips, D. S.
and Seymour, M.
and Tang, Q. Y.
and Yang, B.
and Vernon, Z.
and Alexander, R. N.
and Mahler, D. H.},
title={Integrated photonic source of Gottesman--Kitaev--Preskill qubits},
journal={Nature},
year={2025},
month={Jun},
day={04},
abstract={Building a useful photonic quantum computer requires robust techniques to synthesize optical states that can encode qubits. Gottesman--Kitaev--Preskill (GKP) states1 offer one of the most attractive classes of such qubit encodings, as they enable the implementation of universal gate sets with straightforward, deterministic and room temperature-compatible Gaussian operations2. Existing pioneering demonstrations generating optical GKP states3 and other complex non-Gaussian states4--11 have relied on free-space optical components, hindering the scaling eventually required for a utility-scale system. Here we use an ultra-low-loss integrated photonic chip fabricated on a customized multilayer silicon nitride 300-mm wafer platform, coupled over fibre with high-efficiency photon number resolving detectors, to generate GKP qubit states. These states show critical mode-level features necessary for fault tolerance, including at least four resolvable peaks in both p and q quadratures, and a clear lattice structure of negative Wigner function regions, in this case a 3{\thinspace}{\texttimes}{\thinspace}3 grid. We also show that our GKP states show sufficient structure to indicate that the devices used to make them could, after further reduction in optical losses, yield states for the fault-tolerant regime. This experiment validates a key pillar of bosonic architectures for photonic quantum computing2,12, paving the way for arrays of GKP sources that will supply future fault-tolerant machines.},
issn={1476-4687},
doi={10.1038/s41586-025-09044-5},
url={https://doi.org/10.1038/s41586-025-09044-5}
}

@article{Grimsmo2020,
  title = {Quantum Computing with Rotation-Symmetric Bosonic Codes},
  author = {Grimsmo, Arne L. and Combes, Joshua and Baragiola, Ben Q.},
  journal = {Phys. Rev. X},
  volume = {10},
  issue = {1},
  pages = {011058},
  numpages = {32},
  year = {2020},
  month = {Mar},
  publisher = {American Physical Society},
  doi = {10.1103/PhysRevX.10.011058},
  url = {https://link.aps.org/doi/10.1103/PhysRevX.10.011058}
}

@article{matsuura2024continuous,
  title={Continuous-variable fault-tolerant quantum computation under general noise},
  author={Matsuura, Takaya and Menicucci, Nicolas C and Yamasaki, Hayata},
  journal={arXiv preprint arXiv:2410.12365},
  year={2024}
}

@article{Le2019,
  title = {Doubly nonlinear superconducting qubit},
  author = {Le, Dat Thanh and Grimsmo, Arne and M\"uller, Clemens and Stace, T. M.},
  journal = {Phys. Rev. A},
  volume = {100},
  issue = {6},
  pages = {062321},
  numpages = {15},
  year = {2019},
  month = {Dec},
  publisher = {American Physical Society},
  doi = {10.1103/PhysRevA.100.062321},
  url = {https://link.aps.org/doi/10.1103/PhysRevA.100.062321}
}

@article{hastrup2021measurement,
  title={Measurement-free preparation of grid states},
  author={Hastrup, Jacob and Park, Kimin and Brask, Jonatan Bohr and Filip, Radim and Andersen, Ulrik Lund},
  journal={npj Quantum Information},
  volume={7},
  number={1},
  pages={17},
  year={2021},
  publisher={Nature Publishing Group UK London}
}

@article{hu2025generalized,
  title={Generalized number-phase lattice encoding of a bosonic mode for quantum error correction},
  author={Hu, Dong-Long and Cai, Weizhou and Zou, Chang-Ling and Xiang, Ze-Liang},
  journal={Nature Communications},
  volume={16},
  number={1},
  pages={1--11},
  year={2025},
  publisher={Nature Publishing Group}
}

@inbook{albert_2025, address={NL},
   series={International School of Physics “Enrico Fermi”},
   title={Bosonic coding: introduction and use cases},
   volume={209},
   ISBN={9781643685748},
   ISSN={0074784X, 18798195},
   url={https://doi.org/10.3254/ENFI250007},
   DOI={10.3254/ENFI250007},
   abstractNote={Bosonic or continuous-variable coding is a field concerned with robust quantum information processing and communication with electromagnetic signals or mechanical modes. I review bosonic quantum memories, characterizing them as either bosonic stabilizer or bosonic Fock-state codes. I then enumerate various applications of bosonic encodings, four of which circumvent no-go theorems due to the intrinsic infinite-dimensionality of bosonic systems.},
   booktitle={Quantum Fluids of Light and Matter},
   publisher={IOS Press},
   author={Albert, Victor V.},
   editor={Bramati, Alberto and Carusotto, Iacopo and Ciuti, Cristiano},
   year={2025},
   month=Jan, pages={79–107},
   collection={International School of Physics “Enrico Fermi”} }

@article{brock2025quantum,
  title={Quantum error correction of qudits beyond break-even},
  author={Brock, Benjamin L and Singh, Shraddha and Eickbusch, Alec and Sivak, Volodymyr V and Ding, Andy Z and Frunzio, Luigi and Girvin, Steven M and Devoret, Michel H},
  journal={Nature},
  volume={641},
  number={8063},
  pages={612--618},
  year={2025},
  publisher={Nature Publishing Group UK London}
}

@article{Hahn2025,
  title = {Bridging Magic and Non-Gaussian Resources via Gottesman-Kitaev-Preskill Encoding},
  author = {Hahn, Oliver and Ferrini, Giulia and Takagi, Ryuji},
  journal = {PRX Quantum},
  volume = {6},
  issue = {1},
  pages = {010330},
  numpages = {30},
  year = {2025},
  month = {Feb},
  publisher = {American Physical Society},
  doi = {10.1103/PRXQuantum.6.010330},
  url = {https://link.aps.org/doi/10.1103/PRXQuantum.6.010330}
}

@article{Lowdin50,
  title={On the non-orthogonality problem connected with the use of atomic wave functions in the theory of molecules and crystals},
  author={L{\"o}wdin, Per-Olov},
  journal={The Journal of Chemical Physics},
  volume={18},
  number={3},
  pages={365--375},
  year={1950},
  publisher={American Institute of Physics}
}

@article{Weedbrook12,
  title={Gaussian quantum information},
  author={Weedbrook, Christian and Pirandola, Stefano and Garc{\'\i}a-Patr{\'o}n, Ra{\'u}l and Cerf, Nicolas J and Ralph, Timothy C and Shapiro, Jeffrey H and Lloyd, Seth},
  journal={Reviews of Modern Physics},
  volume={84},
  number={2},
  pages={621},
  year={2012},
  publisher={APS}
}

@misc{nguyen2025superconductinggridstatesqubit,
      title={The superconducting grid-states qubit}, 
      author={Long B. Nguyen and Hyunseong Kim and Dat T. Le and Thomas Ersevim and Sai P. Chitta and Trevor Chistolini and Christian Jünger and W. Clarke Smith and T. M. Stace and Jens Koch and David I. Santiago and Irfan Siddiqi},
      year={2025},
      eprint={2509.14656},
      archivePrefix={arXiv},
      primaryClass={quant-ph},
      url={https://arxiv.org/abs/2509.14656}, 
}

@article{Rozpedek23,
  title={All-photonic Gottesman-Kitaev-Preskill--qubit repeater using analog-information-assisted multiplexed entanglement ranking},
  author={Rozp{\k{e}}dek, Filip and Seshadreesan, Kaushik P and Polakos, Paul and Jiang, Liang and Guha, Saikat},
  journal={Physical Review Research},
  volume={5},
  number={4},
  pages={043056},
  year={2023},
  publisher={APS}
}

@article{Albert18,
  title={Performance and structure of single-mode bosonic codes},
  author={Albert, Victor V and Noh, Kyungjoo and Duivenvoorden, Kasper and Young, Dylan J and Brierley, R T and Reinhold, Philip and Vuillot, Christophe and Li, Linshu and Shen, Chao and Girvin, S M and Terhal, Barbara M and Jiang, Liang},
  journal={Physical Review A},
  volume={97},
  number={3},
  pages={032346},
  year={2018},
  publisher={APS}
}

@article{Nathan2025,
  title = {Self-Correcting Gottesman-Kitaev-Preskill Qubit and Gates in a Driven-Dissipative Circuit},
  author = {Nathan, Frederik and O'Brien, Liam and Noh, Kyungjoo and Matheny, Matthew H. and Grimsmo, Arne L. and Jiang, Liang and Refael, Gil},
  journal = {PRX Quantum},
  volume = {6},
  issue = {3},
  pages = {030352},
  numpages = {31},
  year = {2025},
  month = {Sep},
  publisher = {American Physical Society},
  doi = {10.1103/ykqb-m52z},
  url = {https://link.aps.org/doi/10.1103/ykqb-m52z}
}

@article{Tzitrin2020,
  title = {Progress towards practical qubit computation using approximate Gottesman-Kitaev-Preskill codes},
  author = {Tzitrin, Ilan and Bourassa, J. Eli and Menicucci, Nicolas C. and Sabapathy, Krishna Kumar},
  journal = {Phys. Rev. A},
  volume = {101},
  issue = {3},
  pages = {032315},
  numpages = {31},
  year = {2020},
  month = {Mar},
  publisher = {American Physical Society},
  doi = {10.1103/PhysRevA.101.032315},
  url = {https://link.aps.org/doi/10.1103/PhysRevA.101.032315}
}

@article{Royer2020finiteenergygate,
  title = {Stabilization of Finite-Energy Gottesman-Kitaev-Preskill States},
  author = {Royer, Baptiste and Singh, Shraddha and Girvin, S. M.},
  journal = {Phys. Rev. Lett.},
  volume = {125},
  issue = {26},
  pages = {260509},
  numpages = {6},
  year = {2020},
  month = {Dec},
  publisher = {American Physical Society},
  doi = {10.1103/PhysRevLett.125.260509},
  url = {https://link.aps.org/doi/10.1103/PhysRevLett.125.260509}
}

\end{document}